\DeclareDocumentCommand\R{}{\mathbb{R}}
\DeclareDocumentCommand\Z{}{\mathbb{Z}}
\DeclareDocumentCommand\conv{o}{\operatorname{conv}\IfValueTF{#1}{\left(#1\right)}{}}
\DeclareDocumentCommand\setdef{mo}{\left\{#1\IfNoValueTF{#2}{}{ \mid #2}\right\}}
\DeclareDocumentCommand\bigsetdef{mo}{\left\{#1\IfNoValueTF{#2}{}{ \mid #2}\right\}}
\DeclareDocumentCommand\transpose{m}{#1^{\intercal}}
\DeclareDocumentCommand\zerovec{o}{\IfNoValueTF{#1}{\mathbb{O}}{\mathbb{O}_{#1}}}
\DeclareDocumentCommand\aff{o}{\operatorname{aff}\IfValueTF{#1}{\left(#1\right)}{}}
\DeclareDocumentCommand\supp{o}{\operatorname{supp}\IfValueTF{#1}{\left(#1\right)}{}}
\DeclareDocumentCommand\Pstab{}{P_{\text{stab}}}
\DeclareDocumentCommand\Redgestab{}{R_{\text{edge}}}
\DeclareDocumentCommand\Rcliquestab{}{R_{\text{clq}}}
\DeclareDocumentCommand\Roddcyclestab{}{R_{\text{oc}}}
\DeclareDocumentCommand\Roddcyclestabfive{}{R_{\text{oc5}}}
\DeclareDocumentCommand\Rstab{}{R}
\DeclareDocumentCommand\graphs{}{\mathcal{G}}
\DeclareDocumentCommand\opt{mm}{\operatorname{opt}(#1,#2)}
\DeclareDocumentCommand\cplxNP{}{\mathsf{NP}}
\newtheorem{theorem}{Theorem}
\newtheorem{corollary}[theorem]{Corollary}
\newtheorem{proposition}[theorem]{Proposition}
\newtheorem{claim}[theorem]{Claim}
\theoremstyle{definition}
\newtheorem{definition}[theorem]{Definition}
\declaretheorem[style=definition]{example}
\tikzstyle{vertex} = [circle,draw,minimum size=17pt,inner sep=0pt]
\tikzstyle{node} = [draw,circle,minimum size=6mm,inner sep=0mm,font=\footnotesize]
\tikzstyle{edge} = [draw,very thick,-]
\newcommand*{\claimproofname}{Proof}
\title{Persistency of Linear Programming Relaxations for the Stable Set Problem}
\author{Elisabeth Rodr{\'i}guez-Heck}
\author{Karl Stickler}
\affil{RWTH Aachen University, Germany}
\author{Matthias Walter}
\affil{University of Twente, The Netherlands}
\author{Stefan Weltge}
\affil{Technical University of Munich, Germany}
\date{\small\today}
\begin{document}

\renewcommand\thmcontinues[1]{\textbf{Continued}}

\maketitle

\begin{abstract}
  The Nemhauser-Trotter theorem states that the standard linear programming (LP) formulation for the stable set problem has a remarkable property, also known as (weak) \emph{persistency}: for every optimal LP solution that assigns integer values to some variables, there exists an optimal integer solution in which these variables retain the same values.
  While the standard LP is defined by only non-negativity and edge constraints, a variety of other LP formulations have been studied and one may wonder whether any of them has this property as well.
  We show that any other formulation that satisfies mild conditions cannot have the persistency property on all graphs, unless it is always equal to the stable set polytope.
\end{abstract}

\section{Introduction}
\label{sec_intro}

Given an undirected graph $ G $ with node set $V(G)$ and edge set $E(G)$, and node weights $ w \in \R^{V(G)} $, the (weighted) stable set problem asks for finding a stable set $ S $ in $ G $ that maximizes $ \sum_{v \in S} w_v $, where a set $ S $ is called stable if $ G $ contains no edge with both endpoints in $ S $.
While the stable set problem is $\cplxNP$-hard, it is a common approach to maximize $ \transpose{w}x $ over the \emph{edge relaxation}
\[
    \Redgestab(G) := \setdef{x \in [0,1]^{V(G)}}[x_v + x_w \le 1 \text{ for each edge } \{v,w\} \in E(G)]
\]
and use optimal (fractional) solutions to gain insights about optimal $ 0/1 $-solutions.
Note that the $ 0/1 $-points in the edge relaxation are precisely the characteristic vectors of stable sets in $ G $, and that maximizing a linear objective over the edge relaxation is a linear program that can be solved efficiently.
Given an optimal solution of this linear program, its objective value is clearly an upper bound on the value of any $ 0/1 $-solution and its entries may guide initial decisions in a branch-and-bound algorithm.
While this is also the case for general polyhedral relaxations, it turns out that optimal solutions of the edge relaxation have a remarkable property that allows us to reduce the size of the problem by fixing some variables to provable optimal integer values.

\begin{definition}[Persistency]
  \label{def_persistency}
  We say that a polytope $ P \subseteq [0,1]^n $ has the \emph{persistency property} if for every objective vector $ c \in \R^n $ and every $ c $-maximal point $ x \in P $, there exists a $ c $-maximal integer point $ y \in P \cap \{0,1\}^n $ such that $ x_i = y_i $ for each $ i \in \{1,2,\dotsc,n\} $ with $ x_i \in \{0,1\} $.
\end{definition}

\begin{proposition}[Nemhauser and Trotter~\cite{NemhauserT75}]
  \label{thm_persistency}
  The edge relaxation $\Redgestab(G)$ has the persistency property for every graph $G$.
\end{proposition}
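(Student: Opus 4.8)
The plan is to keep the given optimal fractional solution fixed and to construct, via a single local perturbation, a maximum-weight stable set that agrees with it on all integral coordinates. Fix $ w\in\R^{V(G)} $ and a $ w $-maximal $ x\in\Redgestab(G) $, and split the nodes into $ V_1:=\setdef{v\in V(G)}[x_v=1] $, $ V_0:=\setdef{v\in V(G)}[x_v=0] $ and $ F:=V(G)\setminus(V_0\cup V_1) $. The one structural fact I need, read off directly from the edge constraints, is that $ V_1 $ is a stable set and that every neighbour of a node of $ V_1 $ lies in $ V_0 $ (from $ x_v=1 $ and $ \{v,u\}\in E(G) $ we get $ x_u\le 1-x_v=0 $); in particular $ G $ has no edge joining $ V_1 $ and $ F $.

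The heart of the argument is the claim that $ w(V_0\cap S)\le w(V_1\setminus S) $ for \emph{every} stable set $ S $ of $ G $, where I write $ w(U):=\sum_{v\in U}w_v $. To prove it, put $ A:=V_1\setminus S $ and $ B:=V_0\cap S $ — disjoint, since $ V_0\cap V_1=\emptyset $ — and consider $ y:=x+t(\mathbbm{1}_B-\mathbbm{1}_A) $ for a small $ t>0 $. I would verify $ y\in\Redgestab(G) $: the bounds $ 0\le y_v\le 1 $ hold for every $ t\in[0,1] $ (coordinates of $ A $ slide from $ 1 $ to $ 1-t $, those of $ B $ from $ 0 $ to $ t $, the rest unchanged), and an edge constraint $ y_p+y_q\le 1 $ can be violated only when $ y_p+y_q>x_p+x_q $, which forces one endpoint, say $ p $, into $ B $ and the other endpoint $ q $ out of $ A $. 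Then $ q\notin B $ as well (otherwise $ \{p,q\} $ would be an edge inside the stable set $ S $), so $ p\in B\subseteq V_0 $ contributes $ x_p=0 $; moreover $ x_q\neq 1 $, since $ x_q=1 $ would put $ q $ in $ V_1 $, and then $ q $ — a neighbour of $ p\in S $ — would lie outside $ S $, i.e.\ in $ A $, a contradiction. Hence $ x_q<1 $ and $ y_p+y_q=t+x_q\le 1 $ once $ t $ is small enough. Feasibility of $ y $ together with $ w $-maximality of $ x $ gives $ \transpose{w}y\le\transpose{w}x $, i.e.\ $ t\,(w(B)-w(A))\le 0 $, and dividing by $ t>0 $ proves the claim (the case $ A\cup B=\emptyset $ being trivial).

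Granting the claim, I would take $ S^{\star} $ to be a stable set of maximum weight among all stable sets $ T $ with $ V_1\subseteq T $ and $ T\cap V_0=\emptyset $ (a nonempty family, since $ T=V_1 $ qualifies). For an arbitrary stable set $ S $ the claim yields
\[
  w(S)=w(S\cap V_1)+w(S\cap V_0)+w(S\cap F)\le w(V_1)+w(S\cap F)=w\bigl(V_1\cup(S\cap F)\bigr),
\]
and $ V_1\cup(S\cap F) $ is a stable set of the admissible kind (it contains $ V_1 $, is disjoint from $ V_0 $, and has no edge inside $ V_1 $, none inside $ S\cap F $, and none between $ V_1 $ and $ F $), so $ w(V_1\cup(S\cap F))\le w(S^{\star}) $. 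Thus $ S^{\star} $ is a maximum-weight stable set of $ G $, i.e.\ a $ w $-maximal integer point of $ \Redgestab(G) $, and its characteristic vector agrees with $ x $ on every coordinate in $ V_0\cup V_1 $, that is, wherever $ x $ is integral — which is exactly the persistency property.

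The step I expect to require the real work is the feasibility check for $ y $ in the claim: it is there that one has to combine ``$ S $ is stable'' with ``all neighbours of $ V_1 $ lie in $ V_0 $'' to rule out the configurations in which an edge constraint could be violated. Everything after the claim is bookkeeping.
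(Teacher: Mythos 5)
Your proof is correct, and it is worth noting that the paper itself offers no proof of \cref{thm_persistency}: the statement is imported from Nemhauser and Trotter, so there is no in-paper argument to compare against. Your route is a clean, self-contained exchange argument. The key claim $w(V_0\cap S)\le w(V_1\setminus S)$ for every stable set $S$ is exactly the right lemma, and your feasibility check for $y=x+t(\mathbbm{1}_B-\mathbbm{1}_A)$ handles the only delicate case correctly: a violation forces one endpoint $p\in B$ and the other $q\notin A\cup B$, whence $x_p=0$, and $x_q=1$ is excluded because it would place $q$ in $V_1\setminus S=A$; finitely many edges then give a uniform positive $t$. The deduction $w(S)\le w(V_1\cup(S\cap F))\le w(S^\star)$ and the stability of $V_1\cup(S\cap F)$ (using that $V_1$ has no neighbours in $F$) complete the argument. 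This differs from Nemhauser and Trotter's original approach, which passes through the bipartite double cover of $G$ and matching/vertex-cover duality (the route also sketched in the paper's introduction via $2$-matchings); your local-perturbation proof is more elementary, applies verbatim to arbitrary, possibly negative, weights and to any optimal LP point (not just basic, half-integral ones), and in fact yields the slightly stronger statement that \emph{every} maximum-weight stable set $S$ can be converted into one respecting the fixings without loss of weight. One cosmetic remark: in the last display you implicitly use that $V_1$ and $S\cap F$ are disjoint to merge the weights into $w(V_1\cup(S\cap F))$, which holds since $V_1\cap F=\varnothing$; it would be worth saying so explicitly.
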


In other words, the result of Nemhauser and Trotter~\cite{NemhauserT75} states that if $ x^\star $ is an optimal solution for the edge relaxation, then there exists an optimal stable set $ S^\star $ satisfying $ V_1 \subseteq S^\star \subseteq V(G) \setminus V_0 $, where $ V_i := \setdef{v \in V(G)}[x^\star_v = i] $ for $ i = 0,1 $.
In this case, the nodes in $ V_0 \cup V_1 $ can be deleted and the search only has to be performed on the remaining graph.
Clearly, this reduction is significant if $ x^\star $ assigns integer values to many nodes. 

Picard and Queyranne~\cite{Picard77} showed that there is a unique maximal set of variables taking integer values in an optimal solution for the edge relaxation.
For the maximum cardinality stable set problem it has been shown that the probability of obtaining at least a single integer component when solving the LP relaxation is very low for large random graphs~\cite{Pulleyblank79}.
This is shown by proving that almost all graphs $G$ (in the probabilistic sense) are $2$-bicritical, that is, if any node is deleted from $G$, the resulting graph has a $2$-matching, and by proving that $2$-bicritical graphs are precisely those graphs for which any optimal solution of the linear programming relaxation of the uncapacitated $2$-matching problem has no integer components.
By duality of the linear programming relaxations of matching and vertex cover, and by observing that a node set $C \subseteq V$ is a vertex cover of $G = (V, E)$ if and only if $V \setminus C$ is a stable set in $G$, the statement follows for stable set.
Moreover, Grimmett and Pulleyblank~\cite{GrimmettP85,Grimmett86} proved that the probability that certain sparse random graphs are $2$-bicritical, tends to 1 as the number of nodes of the graph increases.
However, the results in~\cite{Pulleyblank79} leave the possibility open, that for different weights or for special classes of graphs, the result of Nemhauser and Trotter~\cite{NemhauserT75} is still of practical interest.
This has indeed been the case in recent years: persistencies have been proved to be very useful in a different context, when dealing with highly structured instances arising in the field of computer vision.
More precisely, Hammer, Hansen and Simeone~\cite{HammerHS84} provided a reduction of (Unconstrained) Quadratic Binary Programming (QBP) to the stable set problem and showed that weak persistency holds for (QBP) as well.
Image restoration problems in computer vision can be formulated as polynomial optimization problems in (millions of) binary variables, which are then reformulated as a (QBP) problem with the motivation of using persistencies to reduce the problem size.
Boros et al.~\cite{BorosHST08} provided an algorithm to compute the largest possible set of variables to fix via persistencies in a quadratic binary program in polynomial time, which has been successfully used in practice to solve very large image restoration problems~\cite{KolmogorovR07,Ishikawa11,FixGBZ15}.

In general, dual bounds obtained from the edge relaxation are quite weak, and several families of additional inequalities have been studied in order to strengthen this formulation.
Examples are the clique inequalities~\cite{Padberg73}, (lifted) odd-cycle inequalities~\cite{Padberg73,Trotter75} and clique-family inequalities~\cite{Oriolo03}.
Most of these families were discovered by systematically studying the facets of the \emph{stable set polytope} $\Pstab(G)$, which is the convex hull of the characteristic vectors of stable sets in $G$.
The stable set polytope itself is known to be a complicated polytope.
In particular, one cannot expect to be able to completely characterize its facial structure~\cite{KarpP82}.
Thus, the following question is natural.

\begin{center}
  \begin{minipage}{0.9\textwidth}
    \emph{Do there exist other linear programming formulations for the stable set problem that also have the persistency property for every graph $G$?}
  \end{minipage}
\end{center}
In this paper, we answer the question negatively.
More precisely, we show that an LP formulation (satisfying mild conditions) different from the edge formulation cannot have the persistency property on all graphs, unless it always yields the stable set polytope.

\paragraph{Outline.}
The paper is structured as follows.
We start by introducing the conditions we impose on the LP formulation in \cref{sec_families}.
Our main result and its consequences are presented in \cref{sec_results}.
\cref{sec_proof} is dedicated to the proof of the main result.
In \cref{sec_necessity_conditions} we show that all requirements for our main result are indeed necessary.
The paper is concluded in \cref{sec_conclusion}, where we discuss open problems.

\section{LP formulations for stable set}
\label{sec_families}

It is clear that, when considering a \emph{single} graph $ G $, one can artificially construct polytopes different from $\Pstab(G)$ and $\Redgestab(G)$ 
that have the persistency property. For instance, if $x \in \R^{V(G)}$ is any point that has only fractional coordinates such that 
$Q := \conv(\Pstab(G) \cup \{x\})$ satisfies $Q \cap \Z^{V(G)} = \Pstab(G) \cap \Z^{V(G)}$, then $Q$ has the persistency property for trivial reasons. In this work, however, we consider relaxations defined for \emph{every} graph that arise in a more structured way.

To this end, let $ \graphs $ denote the set of finite undirected simple graphs.
We regard an LP \emph{formulation} for the stable set problem as a map that assigns to every graph $ G \in \graphs $ a polytope $ \Rstab(G) $ such that $\Rstab(G) \cap \Z^{V(G)} = \Pstab(G) \cap \Z^{V(G)} $ holds.
As an example, the edge formulation assigns $ \Redgestab(G) $ to every graph $ G $.
Next, let us specify some natural conditions that are satisfied by many prominent formulations and under which our main result holds.
Each of these conditions is defined for a formulation $ \Rstab $.

\paragraph{Condition (A).}
The inequalities defining $\Rstab$ are derived from facets of $\Pstab$.
Formally,

\begin{equation*}
  \text{\begin{minipage}{0.85\textwidth}
    for each $G \in \graphs$, each inequality with support $U \subseteq V(G)$ that is facet-defining for $\Rstab(G)$ is also facet-defining for $\Pstab(G[U])$,
  \end{minipage}}
  \tag{A}
  \label{eq_property_facet}
\end{equation*}
where $G[U]$ denotes the subgraph induced by $U$.
Note that inequalities need to define facets only on their support graph.
For example, odd-cycle inequalities satisfy~\eqref{eq_property_facet} although in general they do not define facets~\cite{Padberg73}.
By construction, also lifted odd-cycle inequalities (see~\cite{Trotter75}) satisfy~\eqref{eq_property_facet}.
However, our results do not cover a formulation consisting of only rank inequalities (see~\cite{Chvatal75}) since the latter are not facet-defining in general.

\paragraph{Condition (B).}

For every graph $G \in \graphs$, validity of facet-defining inequalities of $\Rstab(G)$ is inherited by induced subgraphs. Formally,

\begin{equation*}
  \text{\begin{minipage}{0.85\textwidth}
    for each $G \in \graphs$, each inequality with support $U \subseteq V(G)$ that is facet-defining for $\Rstab(G)$ is valid (although not necessarily facet-defining) for $\Rstab(G[U])$.
  \end{minipage}}
  \tag{B}
  \label{eq_property_valid}
\end{equation*}
This requirement ensures that if an (irredundant) inequality arises for some graph then it must (at least implicitly) occur for all induced subgraphs for which it is defined. The reverse implication is imposed by the third condition, although in a more structured way.
For this, we need the following definitions.

\DeclareDocumentCommand\onesum{mm}{\oplus^{#1}_{#2}}

Let $ G_1, G_2 \in \graphs $ and let $ v_1 \in V(G_1)$, $ v_2 \in V(G_2)$.
Then the \emph{$1$-sum of $G_1$ and $G_2$ at $v_1$ and $v_2$}, denoted by $G_1 \onesum{v_1}{v_2} G_2$ is the graph obtained from the disjoint union of $G_1$ and $G_2$ by identifying $v_1$ with $v_2$.
Moreover, let $P \subseteq \R^m$ and $Q \subseteq \R^n$ be polytopes and let $i \in \{1,2,\dotsc,m\}$ and $j \in \{1,2,\dotsc,n\}$.
The \emph{$1$-sum of $P$ and $Q$ at coordinates $i$ and $j$}, denoted by $P \onesum{i}{j} Q$, is defined as the projection of $\conv( \setdef{ (x,y) \in P \times Q }[ x_i = y_j ] )$ onto all variables except for $y_j$.
Notice that this projection is an isomorphism from the convex hull to its image since the variables $x_i$ and $y_j$ are equal.

\paragraph{Condition (C).}
For every pair of graphs $G_1, G_2 \in \graphs$, validity of inequalities is acquired by their 1-sum. Formally,

\begin{equation*}
  \text{\begin{minipage}{0.85\textwidth}
    $\Rstab(G_1 \onesum{v_1}{v_2} G_2) \subseteq \Rstab(G_1) \onesum{v_1}{v_2} \Rstab(G_2)$ holds for all $G_1,G_2 \in \graphs$ and all nodes $v_1 \in V(G_1)$ and $v_2 \in V(G_2)$.
  \end{minipage}}
  \tag{C}
  \label{eq_property_onesum}
\end{equation*}
Also this condition is very natural since every inequality that is valid for $\Rstab(G_1)$ is also valid for $\Pstab(G_1 \onesum{v_1}{v_2} G_2)$, and hence its participation in $\Rstab(G_1 \onesum{v_1}{v_2} G_2)$ is reasonable.

Before we state our main result, let us mention immediate observations, which are summarized below.

\begin{proposition}
  \label{thm_summary_conditions}
  \leavevmode
  \begin{enumerate}[label=(\roman*)]
    \item
      \label{thm_summary_conditions_edgestab_stab}
      $\Redgestab$ and $\Pstab$ satisfy~\eqref{eq_property_facet}--\eqref{eq_property_onesum}.
    \item
      \label{thm_summary_conditions_reverse}
      Let $ \Rstab $ be any stable set formulation satisfying~\eqref{eq_property_facet} and~\eqref{eq_property_valid}.
      Then $\Rstab(G_1 \onesum{v_1}{v_2} G_2) \supseteq \Rstab(G_1) \onesum{v_1}{v_2} \Rstab(G_2)$ holds for all $G_1,G_2 \in \graphs$ and all nodes $v_1 \in V(G_1)$ and $v_2 \in V(G_2)$.
    \item
      \label{thm_summary_conditions_closed}
      If $ \Rstab^1 $ and $ \Rstab^2 $ satisfy~\eqref{eq_property_facet}--\eqref{eq_property_onesum}, then $\{\Rstab^1(G) \cap \Rstab^2(G)\}_{G \in \graphs}$ also satisfies~\eqref{eq_property_facet}--\eqref{eq_property_onesum}.
  \end{enumerate}
\end{proposition}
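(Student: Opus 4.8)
The plan is to treat (i) and (iii) as bookkeeping built on two standard polyhedral facts, and to isolate the one genuinely delicate point inside (ii). The facts are: (F1) in any finite inequality description of a full-dimensional polytope, every facet-defining inequality is a positive multiple of one of the listed inequalities; and (F2) every facet-defining inequality of a product $P \times Q$, and of a $1$-sum $P \onesum{i}{j} Q$, has support inside the coordinates of one of the two factors — which follows from (F1) applied to the natural description $P \onesum{i}{j} Q = \{z : \pi_1(z) \in P,\ \pi_2(z) \in Q\}$, where $\pi_1,\pi_2$ are the two coordinate restrictions and coordinate $j$ of $Q$ has been identified with coordinate $i$ of $P$. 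I will also use the identities $\operatorname{proj}_U \Pstab(G) = \Pstab(G[U])$ and $\Pstab(G_1 \onesum{v_1}{v_2} G_2) = \Pstab(G_1) \onesum{v_1}{v_2} \Pstab(G_2)$; for the latter, ``$\subseteq$'' comes from restricting a stable set $S$ of the $1$-sum to $S \cap V(G_i)$, and ``$\supseteq$'' from coupling the two marginal distributions on the shared coordinate.

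For (i) and $\Pstab$: (B) is immediate, since a facet with support $U$ involves only the variables of $U$ and every stable set of $G[U]$ is one of $G$; (C) holds because each vertex $\chi^S$ of $\Pstab(G_1 \onesum{v_1}{v_2} G_2)$ restricts to $\chi^{S \cap V(G_i)} \in \Pstab(G_i)$, hence lies in the $1$-sum; and for (A), if $\transpose{a} x \le b$ is facet-defining for $\Pstab(G)$ with support $U$, writing $\pi$ for the projection to $\R^U$ and $F$ for the corresponding facet, the image $\pi(F)$ sits inside the face $F_U$ of $\Pstab(G[U])$ cut out by the same inequality, so $\dim F_U \ge \dim \pi(F) \ge \dim F - (|V(G)|-|U|) = |U|-1$, while $F_U$ is a proper face of the full-dimensional $\Pstab(G[U])$, forcing $\dim F_U = |U|-1$. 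For $\Redgestab$: by (F1) its facets lie among $x_v \ge 0$, $x_v \le 1$, and $x_u + x_w \le 1$ for $\{u,w\} \in E(G)$, each of which is facet-defining for $\Pstab$ of its one- or two-vertex support graph (giving (A)) and is one of the defining inequalities of $\Redgestab$ of that support graph (giving (B)); (C) holds since restricting a point of $\Redgestab(G_1 \onesum{v_1}{v_2} G_2)$ to $V(G_i)$ satisfies exactly the box and edge constraints of $G_i$, a subset of those of the $1$-sum. For (iii): by (F1) every facet of $\Rstab^1(G) \cap \Rstab^2(G)$ is a positive multiple of a facet of $\Rstab^1(G)$ or of $\Rstab^2(G)$, which yields (A) and (B) (for (B), validity for $\Rstab^k(G[U])$ certainly implies validity for the smaller polytope $\Rstab^1(G[U]) \cap \Rstab^2(G[U])$); and (C) follows by combining (C) for $\Rstab^1,\Rstab^2$ with the identity — immediate from the description in (F2) — that intersecting two $1$-sums coordinatewise equals the $1$-sum of the two intersections.

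The substance is in (ii). Given a facet $\transpose{a} x \le b$ of $\Rstab(G_1 \onesum{v_1}{v_2} G_2)$ with support $U$, condition (A) makes it facet-defining for $\Pstab((G_1 \onesum{v_1}{v_2} G_2)[U])$; were $U$ to meet both $V(G_1) \setminus \{v_1\}$ and $V(G_2) \setminus \{v_2\}$, that induced subgraph would be disconnected or a nontrivial $1$-sum at $v_1$, and by (F2) together with the $\Pstab$ $1$-sum identity its stable set polytope would have no full-support facet — a contradiction. So $U \subseteq V(G_1)$ or $U \subseteq V(G_2)$; say $U \subseteq V(G_1)$, whence $(G_1 \onesum{v_1}{v_2} G_2)[U] = G_1[U]$ and, by (B), $\transpose{a} x \le b$ is valid for $\Rstab(G_1[U])$. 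If one can lift this to validity for $\Rstab(G_1)$ itself — equivalently, $\operatorname{proj}_{V(G_1)} \Rstab(G_1 \onesum{v_1}{v_2} G_2) \subseteq \Rstab(G_1)$ — the proof ends at once: every $z \in \Rstab(G_1) \onesum{v_1}{v_2} \Rstab(G_2)$ has $z|_{V(G_1)} \in \Rstab(G_1)$, so $\transpose{a} z = \transpose{a}(z|_{V(G_1)}) \le b$ since $a$ is supported on $U \subseteq V(G_1)$, and since $z$ then meets every facet of the $1$-sum it lies in it. This lifting is where I expect essentially all the work: one would first observe that the inequality, being facet-defining for $\Pstab(G_1[U])$ and valid for $\Rstab(G_1[U])$, is in fact facet-defining for $\Rstab(G_1[U])$ (a dimension count as in (A): the $\Pstab$-facet vertices already span a face of dimension $|U|-1$), and then try to re-enlarge the support graph one vertex at a time, invoking (A) to constrain the shape of $\Rstab$ on each intermediate induced subgraph and (B) to carry validity along, until reaching $G_1$. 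Making this induction close — i.e. showing that conditions (A) and (B) in combination genuinely prevent $\Rstab(G_1)$ from bulging past an inequality that all of its induced subgraphs respect — is the main obstacle.
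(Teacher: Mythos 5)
Your parts (i) and (iii) are correct and in fact more detailed than the paper, which treats both as routine; the dimension count you give for Condition~\eqref{eq_property_facet} of $\Pstab$ and the observation that $P \onesum{i}{j} Q$ is simply $\setdef{z}[z|_P \in P,\ z|_Q \in Q]$ (so that $1$-sums commute with coordinatewise intersection) are exactly the right ingredients. The first half of your argument for (ii) also coincides with the paper's: use~\eqref{eq_property_facet} to make the facet facet-defining for $\Pstab(G[U])$, use the $1$-sum structure of $\Pstab$ to force $U \subseteq V(G_1)$ or $U \subseteq V(G_2)$, and then invoke~\eqref{eq_property_valid}.

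The gap is in how you finish (ii), and you are candid about it: you obtain validity of $\transpose{a}x \le b$ only for $\Rstab(G_1[U])$ and leave the lift to $\Rstab(G_1)$ as an unresolved induction. The paper closes the argument at this exact point by applying~\eqref{eq_property_valid} so as to conclude validity for $\Rstab(G_1)$ \emph{directly} --- that is, it uses the condition in the sense indicated by the prose surrounding it (a facet-defining inequality of $\Rstab(G)$ ``must occur for all induced subgraphs for which it is defined''), and $G_1 = G[V(G_1)]$ is such an induced subgraph. This one-shot application is not a cosmetic shortcut: under the strictly literal reading you adopt (validity only on the subgraph induced by the exact support), statement (ii) would actually be false. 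For instance, take $\Rstab(W_5)$ to be the relaxation from the end of \cref{sec_families}, set $\Rstab(W_5 \onesum{1}{u} K_2) \coloneqq \Redgestab(W_5 \onesum{1}{u} K_2)$, and let $\Rstab$ be the edge relaxation elsewhere; every facet then satisfies~\eqref{eq_property_facet} and the literal~\eqref{eq_property_valid}, yet the point extending $\transpose{(1,0,0,0,0,\tfrac12)} \in \Rstab(W_5)$ by $0$ on the new node lies in $\Rstab(W_5) \onesum{1}{u} \Rstab(K_2)$ and violates the edge inequality $x_1 + x_6 \le 1$ of the right-hand graph. The same example shows that the repair you sketch cannot succeed: conditions~\eqref{eq_property_facet} and~\eqref{eq_property_valid} do \emph{not} prevent $\Rstab(G_1)$ from ``bulging past'' an inequality that is valid for the relaxations of all proper induced subgraphs containing its support, so no vertex-by-vertex re-enlargement of the support graph can carry validity from $\Rstab(G_1[U])$ up to $\Rstab(G_1)$. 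The only route to the conclusion is the paper's: apply~\eqref{eq_property_valid} once, for the ambient $1$-sum graph $G$ with target subgraph $G_1$.
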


In other words, the second observation states that the reverse inclusion of~\eqref{eq_property_onesum} is implied by~\eqref{eq_property_facet} and~\eqref{eq_property_valid}, and the last observation is that our conditions are closed under intersection of relaxations.

\begin{proof}
  It is clear that $\Redgestab$ satisfies~\eqref{eq_property_facet}--\eqref{eq_property_onesum}, and that $ \Pstab $ satisfies~\eqref{eq_property_valid}.
  Chv{\'a}tal proved Property~\eqref{eq_property_onesum} for $\Pstab$ by showing that the stable set polytope of a clique-sum of two graphs is obtained from the stable set polytopes of these two graphs without adding inequalities (see Theorem~4.1 in~\cite{Chvatal75}).
  To see Property~\eqref{eq_property_facet} for $\Pstab$, observe that the stable set polytope of an induced subgraph is isomorphic to a face defined by the nonnegativity constraints of the removed nodes.
  This shows~\ref{thm_summary_conditions_edgestab_stab}.

  To see~\ref{thm_summary_conditions_reverse}, let $G = G_1 \onesum{v_1}{v_2} G_2$, and consider an inequality that is facet-defining for $\Rstab(G)$ and has support on $U \subseteq V(G)$.
  By~\eqref{eq_property_facet}, it is facet-defining for $\Pstab(G[U])$.
  By \ref{thm_summary_conditions_edgestab_stab}, $\Pstab(G[U])$ satisfies~\eqref{eq_property_onesum}, that is, the support $U$ has to satisfy $U \subseteq V(G_1)$ or $U \subseteq V(G_2)$.
  By~\eqref{eq_property_valid}, the inequality must be valid for $\Rstab(G_1)$ or $\Rstab(G_2)$, which concludes the proof.

  For~\ref{thm_summary_conditions_closed}, all three properties can be shown by inspection of individual inequalities of $\Rstab^1$ and $\Rstab^2$.
\end{proof}

Many LP formulations that are stronger than the edge formulation satisfy Conditions~\eqref{eq_property_facet}--\eqref{eq_property_onesum}.
However, there even exist formulations $\Rstab$ with $\Rstab(G) \not\subseteq \Redgestab(G)$ for some graphs $G$ that also satisfy the three conditions.
We illustrate for a particular graph how such a formulation can be constructed.
To this end, consider an odd wheel $W_5$ consisting of an odd cycle of length $5$ with nodes $1,2,3,4,5$ plus a center node $6$.
The formulation for this graph is defined as follows:
\begin{multline*}
 \Rstab(W_5) \coloneqq \{ x \in \R^6 \mid x_i \geq 0 \text{ for } i=1,2,3,4,5,6 \text{, } x_1 + x_2 + x_3 + x_4 + x_5 + 2x_6 \leq 2, \\ x_1 + x_2 \leq 1 \text{, } x_2 + x_3 \leq 1 \text{, } x_3 + x_4 \leq 1 \text{, } x_4 + x_5 \leq 1 \text{ and } x_5 + x_1 \leq 1 \}.
\end{multline*}
It is easy to see that $2x_i + 2x_6 \leq 3$ is valid for $\Rstab(W_5)$ for $i=1,2,3,4,5$, which establishes $\Rstab(W_5) \cap \Z^6 = \Pstab(W_5) \cap \Z^6$.
Moreover, since $\transpose{(1,0,0,0,0,0.5)} \in \Rstab(W_5)$, the edge inequality $x_1 + x_6 \leq 1$ is not valid for $\Rstab(W_5)$ and thus $\Rstab(W_5) \not\subseteq \Redgestab(W_5)$.

\section{Results}
\label{sec_results}

We say that two formulations $ \Rstab^1 $ and $ \Rstab^2 $ are \emph{equivalent} if $ \Rstab^1(G) = \Rstab^2(G) $ holds for every $ G \in \graphs $, in which case we write $ \Rstab^1 \equiv \Rstab^2 $. We can now state our main result.

\begin{theorem}
  \label{thm_main}
  Let $\Rstab$ be a stable set formulation satisfying~\eqref{eq_property_facet}--\eqref{eq_property_onesum}.
  Then $\Rstab(G)$ has the persistency property for all graphs $ G \in \graphs$ if and only if $\Rstab \equiv \Redgestab$ or $\Rstab \equiv \Pstab$.
\end{theorem}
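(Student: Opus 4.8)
\emph{Proof plan.}

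The ``if'' direction is short. If $\Rstab \equiv \Redgestab$ this is exactly Proposition~\ref{thm_persistency}. If $\Rstab \equiv \Pstab$, fix an objective $c$ and a $c$-maximal point $x \in \Pstab(G)$, and let $F$ be the face of $c$-maximal points of $\Pstab(G)$. Then $F$ is a face of an integral polytope, so all its vertices are integral and $c$-maximal, and writing $x$ as a convex combination of them shows that every coordinate $i$ with $x_i \in \{0,1\}$ attains the value $x_i$ at each of these vertices; hence any vertex of $F$ certifies persistency.

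For the ``only if'' direction, assume $\Rstab$ satisfies~\eqref{eq_property_facet}--\eqref{eq_property_onesum} and has the persistency property. Call a facet-defining inequality of some $\Rstab(G)$ \emph{trivial} if its support has at most two nodes; by~\eqref{eq_property_facet} a trivial inequality is a nonnegativity constraint, an upper bound $x_v \le 1$, or an edge inequality. First I would dispose of the case that every facet of every $\Rstab(G)$ is trivial: then each $\Rstab(G)$ is cut out by such inequalities, all variable bounds may be assumed present since $\Rstab(G)\subseteq[0,1]^{V(G)}$, and if some edge inequality $x_u+x_w\le 1$ with $\{u,w\}\in E(G)$ were missing then the integer point $\chi^{\{u,w\}}$ would lie in $\Rstab(G)\setminus\Pstab(G)$, contradicting that $\Rstab$ is a formulation; hence $\Rstab(G)=\Redgestab(G)$ for all $G$. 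Note that this step does not use persistency at all.

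It remains to handle the case in which some $\Rstab(H_0)$ has a \emph{nontrivial} facet $a^\intercal x\le\beta$ with support $U$, $|U|\ge 3$. By~\eqref{eq_property_facet} it is facet-defining for $\Pstab(H_0[U])$ and by~\eqref{eq_property_valid} valid for $\Rstab(H_0[U])$; since $\Pstab(H_0[U])\subseteq\Rstab(H_0[U])$ (both full-dimensional, using~\eqref{eq_property_facet}), it is in fact facet-defining for $\Rstab(H_0[U])$, so after replacing $H_0$ by $H:=H_0[U]$ I may assume the support is $V(H)$. I claim $\Rstab\equiv\Pstab$ in this case. Suppose not: then there is a graph $H'$ and a facet-defining inequality $c^\intercal x\le\delta$ of $\Pstab(H')$ that is not valid for $\Rstab(H')$, and I may pick a vertex $\bar y$ of $\Rstab(H')$ with $c^\intercal\bar y>\delta$, so $\bar y\notin\Pstab(H')$ is fractional. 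The plan is then to build, from $H$, $H'$ and possibly small auxiliary graphs (such as single edges), a graph $G$ obtained by a sequence of $1$-sums, together with an objective $w$, such that the unique $w$-maximal point $x^\star$ of $\Rstab(G)$ is fractional but has an integer coordinate $x^\star_v$ that disagrees with the corresponding coordinate of \emph{every} maximum-weight stable set of $G$; this contradicts the persistency property and proves the claim. Together with the two easy verifications above this yields the equivalence.

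The technical leverage I would use is that~\eqref{eq_property_onesum} and Proposition~\ref{thm_summary_conditions}\ref{thm_summary_conditions_reverse} give $\Rstab(G_1\onesum{v_1}{v_2}G_2)=\Rstab(G_1)\onesum{v_1}{v_2}\Rstab(G_2)$, so $\Rstab(G)$ is determined exactly once the pieces are fixed: the facet $a^\intercal x\le\beta$ persists on the $H$-copy while the fractional point $\bar y$ (which violates the $\Pstab$-facet $c^\intercal x\le\delta$) remains feasible on the $H'$-copy. The main obstacle I anticipate is the design of $G$ and $w$: one must choose the identified nodes, the auxiliary pieces, and the weights so that $\Rstab(G)$ actually has a \emph{vertex} of the intended shape --- one whose fractional support lies on the $H'$-copy, while the presence of the tight inequality $a^\intercal x\le\beta$ pins some node on the $H$-side to an integral value that no maximum-weight stable set of $G$ can realize --- and so that the chosen objective selects this vertex uniquely. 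Making these two phenomena coexist in a single graph, and carrying out the accompanying (routine but lengthy) polyhedral bookkeeping, is what I expect to be the heart of the proof.
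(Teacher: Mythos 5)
Your ``if'' direction and your preliminary case split are fine, and your overall strategy for the ``only if'' direction --- combine a graph carrying a nontrivial facet with a graph on which $\Rstab \neq \Pstab$ via $1$-sums, and use~\eqref{eq_property_onesum} together with \cref{thm_summary_conditions}~\ref{thm_summary_conditions_reverse} to control the resulting relaxation --- is exactly the strategy of the paper. The problem is that you stop precisely where the proof begins: the construction of the counterexample graph and objective, and the verification that persistency fails there, are announced as ``the heart of the proof'' and then not carried out. This is a genuine gap, not bookkeeping. Concretely, three ingredients are missing. First, from a graph with $\Rstab \neq \Pstab$ one must extract a triple $(G^{\text{out}},c^{\text{out}},v^{\text{out}})$ such that $c^{\text{out}}$ has a \emph{unique} maximizer $\hat x$ over $\Rstab(G^{\text{out}})$ with $\hat x_{v^{\text{out}}} \ge \tfrac12$, while some $c^{\text{out}}$-maximal stable set avoids $v^{\text{out}}$; this requires a perturbation lemma (\cref{thm_face_vertex}) plus a pendant-edge gadget to force the $\ge\tfrac12$ condition, which is used essentially later. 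Second, your reduction to a graph $H$ on which \emph{one} nontrivial facet has full support is too weak: the argument needs \emph{every} facet of $\Rstab(H)$ that is not valid for the edge relaxation to have all coefficients at least $1$, so that zeroing out the distinguished coordinate gains slack in all of these inequalities simultaneously. The paper obtains this by taking $H$ node-minimal with $\Rstab(H)\neq\Redgestab(H)$ and invoking~\eqref{eq_property_valid} (\cref{thm_full_support}); restricting one facet to its support does not give it.

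Third, and most substantially, proving that every optimal LP point has the distinguished coordinate equal to $0$ is not ``routine polyhedral bookkeeping.'' The paper introduces the value functions $f$ and $g$, proves they are concave and that $g$ is strictly increasing on $[0,b_1]$ (\cref{thm_f_g_concave}), and this strict monotonicity rests on Sewell's defect bound $a_1 \le \sum_j a_j - 2b_1$ for non-edge facets of the stable set polytope (\cref{thm_defect}), combined with an exchange argument that uses concavity of $f$ and the condition $\hat x_{v^{\text{out}}}\ge\tfrac12$. Without this external input, the tightness of the nontrivial inequality at the LP optimum --- which is what pins the distinguished coordinate to $0$ while every optimal stable set contains the corresponding node --- does not follow. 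In short, your proposal correctly reproduces the architecture of the paper's proof but defers exactly the lemmas that make it work; as written it does not establish the theorem.
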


Sufficiency follows from \cref{thm_persistency} and from the fact that $\Pstab(G)$ is an integral polytope for every $ G \in \graphs $.
Before we prove necessity in \cref{sec_proof}, let us mention some direct implications of Theorem~\ref{thm_main} for known relaxations.

\begin{corollary}
  \label{thm_clique}
  The clique relaxation
  \begin{equation*}
    \Rcliquestab(G) = \bigsetdef{ x \in \Redgestab(G) }[ x(V(C)) \leq 1 \text{ for each clique $C$ of $G$ } ]
  \end{equation*}
  does not have the persistency property for all graphs $G \in \graphs$.
\end{corollary}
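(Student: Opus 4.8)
The plan is to invoke \cref{thm_main}: it suffices to show that $\Rcliquestab$ is a stable set formulation satisfying~\eqref{eq_property_facet}--\eqref{eq_property_onesum} and that it is equivalent to neither $\Redgestab$ nor $\Pstab$. Granting this, \cref{thm_main} immediately yields that $\Rcliquestab(G)$ fails the persistency property for some $G \in \graphs$, which is the claim.

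First I would record that $\Rcliquestab$ is in fact a formulation: the clique inequalities are valid for $\Pstab(G)$, so $\Pstab(G) \subseteq \Rcliquestab(G) \subseteq \Redgestab(G)$, and since $\Pstab(G)$ and $\Redgestab(G)$ have the same integer points, so does $\Rcliquestab(G)$. The three structural conditions are then verified by inspecting the defining inequalities of $\Rcliquestab(G)$, namely $x_v \ge 0$ and $x_v \le 1$ for $v \in V(G)$ together with $\sum_{i \in V(C)} x_i \le 1$ for cliques $C$ of $G$ with $|V(C)| \ge 2$ (these last include the edge inequalities). Each inequality $\sum_{i \in V(C)} x_i \le 1$ has support $U = V(C)$, and $G[U]$ is a complete graph; hence it is one of the defining inequalities of $\Rcliquestab(G[U])$, giving~\eqref{eq_property_valid}, and it is facet-defining for the simplex $\Pstab(G[U])$, giving~\eqref{eq_property_facet}. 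Likewise $x_v \ge 0$ and $x_v \le 1$ have support $\{v\}$ and are the two facets of $[0,1] = \Pstab(G[\{v\}])$, and they are defining for $\Rcliquestab(G[\{v\}])$. For~\eqref{eq_property_onesum}, write $G = G_1 \onesum{v_1}{v_2} G_2$ with identified node $v$; since $v$ is a cutvertex of $G$, every edge and every clique of $G$ lies in $V(G_1)$ or in $V(G_2)$. Hence for any $x \in \Rcliquestab(G)$ the restrictions $x|_{V(G_1)}$ and $x|_{V(G_2)}$ satisfy all defining inequalities of $\Rcliquestab(G_1)$ and $\Rcliquestab(G_2)$, respectively, which is precisely the condition for $x$ to lie in $\Rcliquestab(G_1) \onesum{v_1}{v_2} \Rcliquestab(G_2)$.

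Finally I would exhibit the two separating examples. For $G = K_3$, the point $\transpose{(\tfrac12,\tfrac12,\tfrac12)}$ lies in $\Redgestab(K_3)$ but violates $x_1 + x_2 + x_3 \le 1$, so $\Rcliquestab(K_3) \subsetneq \Redgestab(K_3)$ and $\Rcliquestab \not\equiv \Redgestab$. For $G = C_5$, whose only cliques are its nodes and edges, we have $\Rcliquestab(C_5) = \Redgestab(C_5)$, and $\transpose{(\tfrac12,\tfrac12,\tfrac12,\tfrac12,\tfrac12)}$ lies in this polytope but violates the odd-cycle inequality $\sum_{i=1}^{5} x_i \le 2$, which is valid for $\Pstab(C_5)$; hence $\Rcliquestab \not\equiv \Pstab$. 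I do not expect a genuine obstacle here: the argument is routine, and the only point needing slight care is~\eqref{eq_property_onesum}, where one uses that cliques of a $1$-sum do not cross the cutvertex. Even there, only the inclusion ``$\subseteq$'' must be checked, since the reverse one follows from~\eqref{eq_property_facet} and~\eqref{eq_property_valid} by \cref{thm_summary_conditions}\ref{thm_summary_conditions_reverse}.
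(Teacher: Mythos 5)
Your proposal is correct and takes essentially the same approach as the paper: verify that $\Rcliquestab$ satisfies Conditions~\eqref{eq_property_facet}--\eqref{eq_property_onesum} (using that each clique inequality is facet-defining for $\Pstab(G[V(C)])$ and that cliques cannot cross the cut vertex of a $1$-sum) and then invoke \cref{thm_main}. You are in fact slightly more complete than the paper's own proof, which leaves the non-equivalences $\Rcliquestab \not\equiv \Redgestab$ and $\Rcliquestab \not\equiv \Pstab$ (your $K_3$ and $C_5$ examples) implicit.
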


\begin{proof}
  It is easy to see that $\Rcliquestab$ satisfies Property~\eqref{eq_property_onesum}.
  For Properties~\eqref{eq_property_facet} and~\eqref{eq_property_valid}, consider a clique $C$ of some graph $G \in \graphs$.
  Clearly, $C$ is also a clique of $G[V(C)]$ and the inequality is known to be facet-defining for $\Pstab(G[V(C)])$ (see Theorem~2.4 in~\cite{Padberg73}).
\end{proof}

Also the relaxation based on odd-cycle inequalities satisfies these properties, although the inequalities are generally not facet-defining.

\begin{corollary}
  \label{thm_oddcycle}
  The odd-cycle relaxation
  \begin{equation*}
    \Roddcyclestab(G) = \bigsetdef{ x \in \Redgestab(G) }[ x(V(C)) \leq \frac{|V(C)|-1}{2} \text{ for each chordless odd cycle $C$ of $G$ } ]
  \end{equation*}
  does not have the persistency property for all graphs $G \in \graphs$.
\end{corollary}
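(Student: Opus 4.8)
I would deduce this from \cref{thm_main}, following the template of the proof of \cref{thm_clique}. The first task is to check that $\Roddcyclestab$ is a stable set formulation obeying~\eqref{eq_property_facet}--\eqref{eq_property_onesum}. That it is a formulation is immediate from $\Pstab(G) \subseteq \Roddcyclestab(G) \subseteq \Redgestab(G)$ (odd-cycle inequalities are valid for $\Pstab(G)$) together with $\Redgestab(G) \cap \Z^{V(G)} = \Pstab(G) \cap \Z^{V(G)}$.

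For~\eqref{eq_property_facet} and~\eqref{eq_property_valid} I would argue that every facet-defining inequality of $\Roddcyclestab(G)$ is a positive scaling of a nonnegativity, an edge, or a chordless-odd-cycle inequality. For a chordless odd cycle $C$ of $G$ one has $G[V(C)] = C$ by definition of chordlessness; the inequality $x(V(C)) \le \tfrac{|V(C)|-1}{2}$ is facet-defining for $\Pstab(C)$ by a classical result of Padberg~\cite{Padberg73}, and it plainly appears among the inequalities defining $\Roddcyclestab(C)$. The corresponding statements for nonnegativity and edge inequalities are even more direct, so both~\eqref{eq_property_facet} and~\eqref{eq_property_valid} hold. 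For~\eqref{eq_property_onesum} the key observation is that in $G \coloneqq G_1 \onesum{v_1}{v_2} G_2$ any cycle is contained in $V(G_1)$ or in $V(G_2)$: otherwise, since every edge of the $1$-sum lies inside $G_1$ or inside $G_2$, traversing the cycle would force it to visit the identified vertex at least twice. As $G_1$ and $G_2$ are induced subgraphs of $G$, the edges and chordless odd cycles of $G$ supported inside $V(G_i)$ are exactly those of $G_i$. Hence any $x \in \Roddcyclestab(G)$ restricts on $V(G_1)$ to a point of $\Roddcyclestab(G_1)$ and on $V(G_2)$ to a point of $\Roddcyclestab(G_2)$, i.e.\ $x \in \Roddcyclestab(G_1) \onesum{v_1}{v_2} \Roddcyclestab(G_2)$, which is~\eqref{eq_property_onesum}; by \cref{thm_summary_conditions}\ref{thm_summary_conditions_reverse} this is in fact an equality.

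It remains to rule out $\Roddcyclestab \equiv \Redgestab$ and $\Roddcyclestab \equiv \Pstab$. For the former, on $C_5$ the point $(\tfrac12,\tfrac12,\tfrac12,\tfrac12,\tfrac12)$ lies in $\Redgestab(C_5)$ but violates the $5$-cycle inequality, so $\Roddcyclestab(C_5) \subsetneq \Redgestab(C_5)$. For the latter, on $K_4$ the point $(\tfrac13,\tfrac13,\tfrac13,\tfrac13)$ satisfies every edge and triangle inequality but violates the clique inequality $x(V(K_4)) \le 1$, which is valid for $\Pstab(K_4)$; hence $\Roddcyclestab(K_4) \supsetneq \Pstab(K_4)$. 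Thus $\Roddcyclestab$ is equivalent to neither formulation, and \cref{thm_main} yields the claim. I expect the $1$-sum step for~\eqref{eq_property_onesum} to be the only place needing a little care — concretely, the remark that a chordless odd cycle of $G_i$ remains chordless in the $1$-sum, which is exactly the fact that $G_i$ is an induced subgraph of $G$.
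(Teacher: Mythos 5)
Your proof is correct and takes essentially the same route as the paper: verify Conditions~\eqref{eq_property_facet}--\eqref{eq_property_onesum} for $\Roddcyclestab$ (using Padberg's result that the odd-cycle inequality is facet-defining on the chordless cycle itself, which is the induced support graph, plus the cut-vertex argument for the $1$-sum) and then invoke \cref{thm_main}. You additionally make explicit the non-equivalence with $\Redgestab$ and $\Pstab$ via $C_5$ and $K_4$, which the paper's proof of the corollary leaves implicit and only records in its running examples.
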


\begin{proof}
  It is easy to see that $\Roddcyclestab$ satisfies Property~\eqref{eq_property_onesum}.
  For Properties~\eqref{eq_property_facet} and~\eqref{eq_property_valid}, consider a chordless odd cycle $C$ of some graph $G \in \graphs$.
  Clearly, $C$ is also a chordless odd cycle of $G[V(C)]$, and the odd-cycle inequality is facet-defining for $\Pstab(G[V(C)])$ (see Theorem~3.3 in~\cite{Padberg73}).
\end{proof}

Using \cref{thm_summary_conditions}~\ref{thm_summary_conditions_closed}, we obtain the same result for their intersection.

\begin{corollary}
  \label{thm_clique_oddcycle}
  The intersection of the clique and the odd-cycle relaxations
  \begin{equation*}
    \Rstab(G) = \Rcliquestab(G) \cap \Roddcyclestab(G)
  \end{equation*}
  does not have the persistency property for all graphs $G \in \graphs$.
\end{corollary}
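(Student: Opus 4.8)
The plan is to reduce Corollary~\ref{thm_clique_oddcycle} to the already-established Corollaries~\ref{thm_clique} and~\ref{thm_oddcycle} via Theorem~\ref{thm_main} together with the closure statement \cref{thm_summary_conditions}\ref{thm_summary_conditions_closed}. First I would invoke the fact, already recorded in the two preceding proofs, that both $\Rcliquestab$ and $\Roddcyclestab$ satisfy conditions~\eqref{eq_property_facet}--\eqref{eq_property_onesum}: the $1$-sum condition~\eqref{eq_property_onesum} is routine for each, and~\eqref{eq_property_facet},~\eqref{eq_property_valid} follow because a clique (resp.\ chordless odd cycle) of $G$ remains a clique (resp.\ chordless odd cycle) of its induced support subgraph, and the corresponding inequality is facet-defining there by the cited results of Padberg. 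By \cref{thm_summary_conditions}\ref{thm_summary_conditions_closed}, the intersection formulation $\Rstab(G) = \Rcliquestab(G) \cap \Roddcyclestab(G)$ therefore again satisfies~\eqref{eq_property_facet}--\eqref{eq_property_onesum}, so Theorem~\ref{thm_main} applies to it.

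By Theorem~\ref{thm_main}, if $\Rstab$ had the persistency property on all graphs, then $\Rstab \equiv \Redgestab$ or $\Rstab \equiv \Pstab$. The remaining task is to rule out both alternatives by exhibiting a single graph on which $\Rstab$ differs from each. For the first alternative, it suffices to note that $\Rcliquestab$ already differs from $\Redgestab$ on some graph — e.g.\ a triangle $K_3$, where the clique inequality $x_1 + x_2 + x_3 \le 1$ is not implied by the edge inequalities (the point $\transpose{(\tfrac12,\tfrac12,\tfrac12)}$ lies in $\Redgestab(K_3)$ but not in $\Rcliquestab(K_3) = \Rstab(K_3)$) — so $\Rstab(K_3) \subsetneq \Redgestab(K_3)$. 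For the second alternative, it suffices to exhibit a graph on which the intersection of clique and odd-cycle inequalities fails to capture all of $\Pstab$; the standard example is the $5$-wheel $W_5$, whose stable set polytope has the odd-antihole/odd-wheel facet $x_1 + \dots + x_5 + 2x_6 \le 2$ that is not implied by cliques and $5$-cycles of $W_5$. Hence $\Pstab(W_5) \subsetneq \Rstab(W_5)$, and combining with the previous point, $\Rstab \not\equiv \Redgestab$ and $\Rstab \not\equiv \Pstab$, a contradiction.

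The only mildly delicate point is the verification that the conditions~\eqref{eq_property_facet}--\eqref{eq_property_onesum} pass to the intersection, but this is precisely what \cref{thm_summary_conditions}\ref{thm_summary_conditions_closed} supplies, so no real obstacle remains; everything else is a short appeal to the two preceding corollaries and to the cited facet descriptions. One should make sure that the same two ``separating'' graphs can be used simultaneously — one graph witnessing $\Rstab \ne \Redgestab$ and one witnessing $\Rstab \ne \Pstab$ — which is immediate from the two small examples above. I would write the proof as a two-line deduction: apply \cref{thm_summary_conditions}\ref{thm_summary_conditions_closed} to get the hypotheses of Theorem~\ref{thm_main}, then observe that $\Rstab$ is sandwiched strictly between $\Redgestab$ and $\Pstab$ on the graphs $K_3$ and $W_5$, so neither equivalence can hold, contradicting Theorem~\ref{thm_main}.
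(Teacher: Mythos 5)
Your proposal is correct and follows exactly the route the paper intends: the paper justifies this corollary solely by the remark that \cref{thm_summary_conditions}\ref{thm_summary_conditions_closed} transfers conditions~\eqref{eq_property_facet}--\eqref{eq_property_onesum} from $\Rcliquestab$ and $\Roddcyclestab$ to their intersection, after which \cref{thm_main} applies. Your additional step of explicitly exhibiting $K_3$ (where the intersection is strictly inside $\Redgestab$) and $W_5$ (where the odd-wheel facet is violated by a point such as $(\tfrac25,\dotsc,\tfrac25,\tfrac15)$ satisfying all clique and odd-cycle inequalities) correctly supplies the non-equivalence check that the paper leaves implicit.
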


\paragraph{Strong persistency.}
Hammer, Hansen and Simeone~\cite{HammerHS82} considered a variant of the persistency property that considers coordinates that are fixed to the same integer for all optimal solutions. For every graph $ G $ and objective vector $c \in \R^{V(G)} $ they showed the following. If there is a node $ i \in V(G) $ together with a value $ b \in \{0,1\} $ for which every $ c $-maximal solution $x \in \Redgestab(G)$ satisfies $x_i = b$, then also every $c$-maximal solution $y \in \Pstab \cap \{0,1\}^{V(G)}$ satisfies $y_i = b$. In the pseudo-Boolean optimization literature this is also referred as the \emph{strong persistency} property of the edge relaxation. The necessity proof of \cref{thm_main} will show that our main result also holds for this notion of persistency.

\paragraph{Vertex cover.}
A vertex cover in a graph $G = (V,E)$ is a set $C \subseteq V$ such that $C$ contains at least one endnode of each edge of $G$.
Clearly, $C$ is a vertex cover in $G$ if and only if $V \setminus C$ is a stable set in $G$.
One consequence of this observation is that the map $\pi : V \to V$ defined via $\pi(x)_v = 1 - x_v$ for each $v \in V$ maps $\Pstab(G)$ to the \emph{vertex cover polytope}, which is defined as the convex hull of characteristic vectors of vertex covers in $G$.
Similarly, a natural linear programming relaxation for vertex cover is $\pi(\Redgestab(G))$, which can for instance be strengthened by inequalities that correspond to cliques or odd cycles.
Since also persistency is maintained under the map $\pi$, all our results also hold for vertex cover.

\section{Proof of the main result}
\label{sec_proof}

Let us fix any stable set formulation $\Rstab$ over $\graphs$ satisfying Properties~\eqref{eq_property_facet}--\eqref{eq_property_onesum}.
To prove the ``only if'' implication of \cref{thm_main} we have to verify that if $\Rstab \not\equiv \Redgestab$ and $\Rstab \not\equiv \Pstab$, then $\Rstab(G)$ does not have the persistency property for all graphs $ G \in \graphs$.
Equivalently, we have to prove the following:
\begin{equation*}
  \text{\begin{minipage}{0.85\textwidth}
    If there exist graphs $G_1,G_2 \in \graphs$ with $\Rstab(G_1) \neq \Redgestab(G_1)$ and $\Rstab(G_2) \neq \Pstab(G_2)$, then there exists a graph $G^\star$ for which the polytope $\Rstab(G^\star)$ does not have the persistency property.
  \end{minipage}}
  \tag{$\diamondsuit$}
  \label{eq_only_if}
\end{equation*}
Given $ G_1 $ and $ G_2 $, we will provide an explicit construction of $ G^\star $ and show that $ \Rstab(G^\star) $ does not have the persistency property.
To see the latter, we will give an objective vector $c^\star \in \R^{V(G^\star)}$ such that every $c^\star$-maximal solution over $\Rstab(G^\star)$ has a certain coordinate equal to zero while every $c^\star$-maximal stable set in $G^\star$ contains the corresponding node.

\DeclareDocumentCommand\Gouter{}{G^{\text{out}}}
\DeclareDocumentCommand\vouter{}{v^{\text{out}}}
\DeclareDocumentCommand\couter{}{c^{\text{out}}}
\DeclareDocumentCommand\Ginner{}{G^{\text{in}}}
\DeclareDocumentCommand\GinnerEdge{}{G^{\text{in}}_{\text{edge}}}

The graph $G^\star$ will consist of an ``inner'' graph $\Ginner$ with $\Rstab(\Ginner) \neq \Redgestab(\Ginner)$ and $|V(\Ginner)| - 1$ copies of an ``outer'' graph $\Gouter$ with $\Rstab(\Gouter) \neq \Pstab(\Gouter)$.
Each copy of $\Gouter$ is attached to a node of $\Ginner$ via the 1-sum operation.
The only node of $\Ginner$ that does \emph{not} have a copy of $\Gouter$ attached corresponds precisely to the coordinate showing that $ \Rstab(G^\star) $ does not have the persistency property.
Note that such graphs $\Ginner, \Gouter$ exist due to the hypothesis of~\eqref{eq_only_if}. Among all such graphs, we will make particular choices satisfying some additional properties that we specify in the next sections.

We will illustrate our definitions and the steps of the proof by providing two running examples.

\begin{example}[label=ex_C5_K3]
  Consider the formulation $\Roddcyclestabfive$ defined via
  \begin{multline*}
    \Roddcyclestabfive(G) = \big\{ x \in \Redgestab(G) \mid x(V(C)) \leq \frac{ |V(C)|-1 }{2} \\ \text{ for each chordless odd cycle $C$ of $G$ with at least $5$ nodes } \big\}.
  \end{multline*}
  The hypothesis of~\eqref{eq_only_if} is satisfied for $\Roddcyclestabfive$ because the odd cycle $C_5$ is such that $\Roddcyclestab(C_5) \neq \Redgestab(C_5)$ and the complete graph $K_3$ is such that $\Roddcyclestabfive(K_3) \neq \Pstab(K_3)$.
\end{example}

\begin{example}[label=ex_C3_K4]
  Consider the odd-cycle formulation $\Roddcyclestab$.
  The hypothesis of~\eqref{eq_only_if} is satisfied for $\Roddcyclestab$ because the odd cycle $C_3$ is such that $\Roddcyclestab(C_3) \neq \Redgestab(C_3)$ and the complete graph $K_4$ is such that $\Roddcyclestab(K_4) \neq \Pstab(K_4)$.
\end{example}

\subsection{The graph \texorpdfstring{$\Gouter$}{G-out}}

In the definition of the auxiliary graph $\Gouter$ we will make use of the following lemma. In what follows, for a polytope $P \subseteq \R^n$ and a vector $c \in \R^n$, let us denote the optimal face of $P$ induced by $c$ by $\opt{P}{c} \coloneqq \arg\max\setdef{ \transpose{c}x }[ x \in P]$.

\begin{restatable}{lemma}{rsfacevertex}
  \label{thm_face_vertex}
  Let $P,Q \subseteq \R^n$ be polytopes.
  If there exists a vector $c \in \R^n$ such that $\dim(\opt{Q}{c}) < \dim(\opt{P}{c})$, then there exists a vector $c' \in \R^n$ such that $\opt{Q}{c'}$ is a vertex of $Q$, while $\opt{P}{c'}$ is not a vertex of $P$.
\end{restatable}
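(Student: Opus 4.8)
The statement is a purely geometric fact about two polytopes and their optimal faces, and I would prove it by a perturbation argument: starting from the given $c$ with $\dim(\opt{Q}{c}) < \dim(\opt{P}{c})$, I will tilt $c$ slightly so that the optimal face of $Q$ shrinks all the way down to a vertex while the optimal face of $P$ stays positive-dimensional.

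\textbf{Step 1: Reduce to faces of $P$ and $Q$.} Fix $c$ as in the hypothesis and set $F := \opt{P}{c}$ and $F' := \opt{Q}{c}$, so $\dim(F') < \dim(F)$. Any $c' = c + \varepsilon d$ with $\varepsilon > 0$ small enough satisfies $\opt{P}{c'} \subseteq F$ and $\opt{Q}{c'} \subseteq F'$ (this is the standard fact that a small perturbation of the objective only refines the optimal face; concretely, for $\varepsilon$ below the reciprocal of the largest ``gap'' in $c$-value between $F$ and the next-best vertex, divided by $\|d\|$ times the diameter, the optimum is still attained inside the old optimal face). Moreover $\opt{P}{c'} = \opt{F}{d}$ and $\opt{Q}{c'} = \opt{F'}{d}$. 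So it suffices to find a single direction $d$ such that $\opt{F'}{d}$ is a vertex of $F'$ (equivalently of $Q$) while $\opt{F}{d}$ is not a vertex of $F$ (equivalently of $P$).

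\textbf{Step 2: Choose the direction $d$.} Here I use $\dim F' < \dim F$. Pick a vertex $v$ of $F'$ and a direction $d_0 \in \R^n$ whose restriction to the affine hull of $F'$ is a linear functional uniquely maximized on $F'$ at $v$ — possible since $v$ is a vertex of the polytope $F'$. The set of such ``good'' directions for $F'$ is open (it is the normal cone of $F'$ at $v$ relative to $\aff[F']$, pulled back, which is full-dimensional modulo the lineality $\aff[F']^\perp$). The key point: because $\aff[F]$ properly contains a translate of $\aff[F']$ — more precisely $\dim \aff[F] > \dim \aff[F']$ — the linear span directions orthogonal to $\aff[F']$ but not to $\aff[F]$ form a nonempty set, and adding a large enough multiple of such an orthogonal-to-$F'$ direction to $d_0$ does not change $\opt{F'}{\cdot}$ (it is constant on $\aff[F']$) but it does ensure $d$ is non-constant on $\aff[F]$, so $\opt{F}{d}$ is a proper face of $F$ — however that is not yet enough, since a proper face of $F$ could still be a vertex. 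To force $\opt{F}{d}$ to be \emph{non}-trivial-dimensional, I instead argue as follows: among all directions $d$ that are optimal-at-$v$ for $F'$, choose one that \emph{minimizes} $\dim \opt{F}{d}$. I claim this minimum is at least $1$. Suppose for contradiction it were $0$, i.e. $\opt{F}{d} = \{u\}$ for some vertex $u$ of $F$. Since $\dim F \ge \dim F' + 1$, perturb $d$ within the (relatively open) normal cone of $F'$ at $v$ — this cone has dimension $n - \dim\aff[F']$, which exceeds $n - \dim\aff[F]$, the codimension needed for $u$ to stay the unique optimum of $F$; hence we can move $d$ off the normal cone of $F$ at $u$ while staying in the normal cone of $F'$ at $v$, producing a direction still pointing at $v$ on $F'$ but with a different optimal vertex on $F$. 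A convex-combination / connectedness argument on this cone then yields an intermediate direction whose optimal face on $F$ contains two distinct vertices, contradicting minimality at value $0$. So $\dim\opt{F}{d} \ge 1$, as desired.

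\textbf{Step 3: Assemble.} Take $d$ from Step 2 and $c' = c + \varepsilon d$ for $\varepsilon$ small as in Step 1. Then $\opt{Q}{c'} = \opt{F'}{d} = \{v\}$ is a vertex of $Q$, while $\opt{P}{c'} = \opt{F}{d}$ has dimension $\ge 1$ and hence is not a vertex of $P$. This is exactly the conclusion.

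\textbf{Expected main obstacle.} The routine parts are the $\varepsilon$-bookkeeping in Step 1 and the normal-cone language. The real content is Step 2: showing we can keep $v$ as the unique $Q$-optimum while \emph{preventing} the $P$-optimum from collapsing to a vertex. A cleaner way to package it may be to work directly with the recession structure: set $L := \aff[F']$ (viewed as a linear subspace after translation) and note $F \not\subseteq$ any translate of $L$; pick $d_0$ uniquely maximized on $F'$ at $v$, then among $d \in d_0 + L^\perp$ (all of which still single out $v$ on $F'$) observe that $d \mapsto d|_{\aff F}$ ranges over an affine subspace of functionals on $\aff F$ of dimension $\dim\aff F - \dim(L \cap \aff F_0) \ge 1$ where $F_0$ is a translate of $F$ through the origin; a generic such $d$ restricted to $F$ is non-constant yet — and this is the subtle check — one must still rule out that the generic optimal face on $F$ is a vertex, which is automatic precisely when that affine family of functionals has dimension $\ge 1$ because then no single vertex of $F$ can be the optimum for an open subset of the family. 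Making this last genericity step airtight is where I expect to spend the most care.
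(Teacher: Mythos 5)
Your overall strategy --- perturb once, staying inside the open normal cone of a vertex $v$ of $F' = \opt{Q}{c}$ while forcing the optimal face of $F = \opt{P}{c}$ to remain positive-dimensional --- is genuinely different from the paper's proof (which iterates: among all $c'$ with $\dim\opt{Q}{c'} < \dim\opt{P}{c'}$ it takes one minimizing $\dim\opt{Q}{c'}$, and if this were positive it perturbs along a facet normal of $\opt{P}{c'}$ lying outside $\aff[\opt{Q}{c'}]^{\perp}$ to decrease it further). Your route can be made to work, but Step~2 as written is broken. The central claim --- that the minimum of $\dim\opt{F}{d}$ over all $d$ with $\opt{F'}{d}=\{v\}$ is at least $1$ --- is false: that set of directions is a nonempty \emph{open} (hence full-dimensional) cone, so a generic $d$ in it makes $\opt{F}{d}$ a single vertex of $F$, and the minimum is $0$ (e.g.\ $F=[0,1]^2$, $F'$ its bottom edge, $v=(1,0)$, $d=(1,1)$). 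Consequently your ``contradiction'' does not close: exhibiting an intermediate direction whose optimal face on $F$ contains two vertices does not contradict the minimum being $0$. What your intermediate-value argument actually produces is exactly the direction you need, so you should assert \emph{existence} of such a $d$ directly rather than arguing via minimization.

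The supporting dimension counts also need repair. The normal cone of $F'$ at a vertex $v$ has dimension $n$, not $n-\dim\aff[F']$, and the set of $d$ for which a fixed vertex $u$ of $F$ is the unique maximizer is likewise open and $n$-dimensional; comparing two full-dimensional cones says nothing about containment. The quantities $n-\dim\aff[F']$ and $n-\dim\aff[F]$ are the dimensions of the \emph{lineality spaces} of these normal cones, and that is where the correct argument lives: the closed normal cone of $F'$ at $v$ contains the subspace $\aff[F'-F']^{\perp}$; a subspace contained in a convex cone must lie in that cone's lineality space; and the lineality space of the normal cone of $F$ at $u$ is $\aff[F-F]^{\perp}$, which is too small since $\dim F > \dim F'$. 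Hence no single vertex normal cone of $F$ can contain the open cone of directions selecting $v$; since that open cone is connected and the open vertex normal cones of $F$ are pairwise disjoint, some $d$ in it lies in none of them, i.e.\ $\opt{F}{d}$ is positive-dimensional while $\opt{F'}{d}=\{v\}$. With this substitution (and dropping the closing ``no single vertex of $F$ can be the optimum for an open subset of the family,'' which is also false as stated), your one-shot argument goes through and is arguably more direct than the paper's extremal/iterative proof.
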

\begin{proof}
  See \cref{appendix_proofs}.
\end{proof}

The graph $ \Gouter $ is now defined through the following statement.

\begin{claim}
  \label{thm_bad_graph}
  Assuming the hypothesis of~\eqref{eq_only_if}, there exists a graph $\Gouter \in \graphs$, a vector $\couter \in \R^{V(\Gouter)}$ and a node $\vouter \in V(\Gouter)$ such that $\opt{\Rstab(\Gouter)}{\couter} = \setdef{ \hat{x} }$ holds with $\hat{x}_{\vouter} \geq \tfrac{1}{2}$ and such that $\opt{\Pstab(G)}{\couter}$ contains a vertex $\bar{x} \in \setdef{0,1}^{V(\Gouter)}$ with $\bar{x}_{\vouter} = 0$.
\end{claim}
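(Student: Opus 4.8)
The plan is to let $\Gouter$ be a mild modification of a graph $H$ with $\Rstab(H)\ne\Pstab(H)$, whose existence is guaranteed by the hypothesis of~\eqref{eq_only_if}. I would start with two structural facts. Since $H\onesum{v}{w}K_1$ is just $H$, and by~\eqref{eq_property_facet} one has $\Rstab(K_1)=\Pstab(K_1)=[0,1]$, applying~\eqref{eq_property_onesum} and \cref{thm_summary_conditions}~\ref{thm_summary_conditions_reverse} gives $\Rstab(H)=\Rstab(H)\cap\{x:0\le x_v\le1\}$ for every node $v$, so $\Pstab(H)\subsetneq\Rstab(H)\subseteq[0,1]^{V(H)}$. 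In particular $\Rstab(H)$ has a fractional vertex, and any inequality facet-defining for $\Pstab(H)$ that is violated by a point of $\Rstab(H)$ is not a bound constraint; being a facet of a stable set polytope it may be written $\transpose{a}x\le\beta$ with $a\ge0$, support $U$ with $|U|\ge2$, and $\beta\ge1$. Put $\beta':=\max\{\transpose{a}x:x\in\Rstab(H)\}>\beta$, $F:=\opt{\Rstab(H)}{a}$, and $F_a:=\opt{\Pstab(H)}{a}$; the latter is a facet of $\Pstab(H)$, and since $\transpose{a}x=\beta'>\beta\ge\transpose{a}y$ for $y\in\Pstab(H)$, every point of $F$ is fractional.

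Next I would pick $\vouter$ and build $\couter$ by perturbing $a$. Fix a tight stable set $\bar x$ (so $a(\bar x)=\beta\ge1$) and a node $v^\star\in\bar x$ with $a_{v^\star}>0$; since $|U|\ge2$, $\transpose{a}x\le\beta$ is not a multiple of $x_{v^\star}\le1$, so $F_a$ also contains a tight stable set avoiding $v^\star$. Let $[m,M_\ast]$ be the range of $x_{v^\star}$ over $F$. If $m\ge\tfrac12$, I would take $\Gouter:=H$, $\vouter:=v^\star$, and $\couter:=a-\varepsilon e_{v^\star}+\varepsilon^2 w$ with $\varepsilon>0$ small and $w$ breaking ties, so that $\opt{\Rstab(H)}{\couter}$ is a single vertex of $\arg\min\{x_{v^\star}:x\in F\}$, which has $v^\star$-coordinate $m\ge\tfrac12$, while the $-\varepsilon e_{v^\star}$ term drags the $\Pstab$-optimal face into $\{x_{v^\star}=0\}$, leaving a stable set there with $\vouter$-coordinate $0$. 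If $M_\ast\le\tfrac12$, I would instead attach a new node $\vouter$ adjacent only to $v^\star$ (that is, form $H\onesum{v^\star}{\cdot}K_2$), so that by~\eqref{eq_property_onesum} and \cref{thm_summary_conditions}~\ref{thm_summary_conditions_reverse} one has $\Rstab(\Gouter)=\{(x,t):x\in\Rstab(H),\ t\ge0,\ x_{v^\star}+t\le1\}$ and the vertices of $\Pstab(\Gouter)$ are the stable sets of $H$ together with $S\cup\{\vouter\}$ for $v^\star\notin S$; then with $\couter:=(a+\varepsilon e_{v^\star}+\varepsilon^2 w,\,M)$ for a tiny $M>0$, the $\Rstab$-optimum is the lift $(p,1-p_{v^\star})$ of the maximal-$x_{v^\star}$ vertex $p$ of $F$, so $\hat x_{\vouter}=1-M_\ast\ge\tfrac12$, while the reward $\varepsilon e_{v^\star}$ makes every $\couter$-optimal stable set of $\Gouter$ contain $v^\star$ and hence not $\vouter$. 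The remaining case $m<\tfrac12<M_\ast$ is the delicate one; one disposes of it by a similar perturbation, exploiting that $F$ and $F_a$ lie on the parallel but distinct hyperplanes $\transpose{a}x=\beta'$ and $\transpose{a}x=\beta$, so that one direction can favor a high-$x_{v^\star}$ vertex of $F$ and at the same time a $v^\star$-avoiding stable set on $F_a$, or one excludes it outright by choosing $H$ and $\transpose{a}x\le\beta$ minimally. In each construction, \cref{thm_face_vertex} can be used in place of the explicit tie-breaking term $\varepsilon^2 w$ to collapse the $\Rstab$-optimal face to a single vertex.

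It then remains to record that $\hat x$, lying in (the lift of) $F$, satisfies $\transpose{a}\hat x=\beta'>\beta\ge\transpose{a}\bar x$, which together with $\Pstab(\Gouter)\subseteq\Rstab(\Gouter)$ and the smallness of $M$ yields $\max_{\Rstab(\Gouter)}\transpose{\couter}x>\max_{\Pstab(\Gouter)}\transpose{\couter}x$; hence $\hat x\notin\Pstab(\Gouter)$, consistently with $\hat x_{\vouter}\ge\tfrac12>0=\bar x_{\vouter}$.

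I expect the main obstacle to be precisely the compatibility of the two requirements: raising $\hat x_{\vouter}$ to $\ge\tfrac12$ forces $\couter$ to reward $\vouter$ (or, through the pendant edge, to reward $v^\star$ so that the new node is ``wasted'' integrally), whereas keeping a $\couter$-optimal stable set with $\vouter$-coordinate $0$ forces $\couter$ not to. The pendant-edge construction resolves this by letting a fresh node carry the large half of the fractional optimum, and the borderline case, where $x_{v^\star}$ straddles $\tfrac12$ on $F$, is the technical heart — handled by exploiting that $\transpose{a}x$ takes different values on the relevant optimal faces of $\Rstab(H)$ and of $\Pstab(H)$, or by a minimality assumption on $H$.
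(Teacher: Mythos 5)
Your construction is sound in its two extreme cases, and your pendant-edge trick for the case $M_\ast\le\tfrac12$ is exactly the device the paper uses. But the middle case $m<\tfrac12<M_\ast$ is a genuine gap, and neither of your proposed escapes closes it. The ``parallel hyperplanes'' idea does not work: for a perturbed objective $a+\varepsilon d$, what matters is the action of $d$ on the \emph{direction spaces} of $\aff(F)$ and $\aff(F_a)$, and adding multiples of $a$ to $d$ is constant on each face, so the fact that $F$ and $F_a$ lie on distinct parallel hyperplanes gives you no freedom to favor a high-$x_{v^\star}$ vertex of $F$ while simultaneously keeping a $v^\star$-avoiding stable set optimal on $F_a$; if the two direction spaces coincide, no such decoupling exists in general. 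The appeal to ``choosing $H$ and the facet minimally'' is not developed and it is unclear what it would buy. Since $F=\opt{\Rstab(H)}{a}$ can certainly be higher-dimensional (your own examples where it is a single point are special), the straddling case cannot be dismissed.

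The missing ingredient is precisely the paper's first step, which your proposal skips: using Condition~\eqref{eq_property_facet} one shows that $\opt{\Rstab(G)}{a}$ is \emph{not} a facet of $\Rstab(G)$ (otherwise $\transpose{a}x\le\delta'$ with $\delta'>\delta$ would be facet-defining for $\Pstab(G[\supp(a)])$ and hence tight at a stable set, contradicting validity of $\transpose{a}x\le\delta$ for $\Pstab(G)$). This yields $\dim(\opt{\Rstab(G)}{a})<\dim(\opt{\Pstab(G)}{a})$, so \cref{thm_face_vertex} produces an objective $c$ for which $\opt{\Rstab(G)}{c}$ is a \emph{single} vertex $\hat x$ while $\opt{\Pstab(G)}{c}$ still has two integral vertices differing in some coordinate $u$. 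With a unique LP optimum there is no range $[m,M_\ast]$ at all, only the single value $\hat x_u$, and the dichotomy $\hat x_u\ge\tfrac12$ versus $\hat x_u<\tfrac12$ (pendant edge with weights $2\varepsilon$ on $u$ and $\varepsilon$ on $u'$) is exhaustive. Note also that you cannot simply invoke \cref{thm_face_vertex} as a ``tie-breaker'' at the end of your argument, as you suggest: its hypothesis is exactly the dimension inequality you never establish, and its conclusion gives no control over which faces of $F$ and $F_a$ survive the perturbation. To repair your write-up, import the non-facet argument and apply the lemma \emph{before} choosing the distinguished node.
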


\begin{proof}
  Let $G \in \graphs$ be such that $\Rstab(G) \neq \Pstab(G)$.
  Such a graph exists by hypothesis of \eqref{eq_only_if}.
  Since $\Rstab(G)$ is a stable set formulation, there exists an inequality $\transpose{a}x \leq \delta$ that is facet-defining for $\Pstab(G)$, but not valid for $\Rstab(G)$.

  We claim that the face $\opt{\Rstab(G)}{a}$ is not a facet of $\Rstab(G)$.
  Assume for a contradiction that $\opt{\Rstab(G)}{a}$ is a facet of $\Rstab(G)$ and define $\delta' \coloneqq \max \setdef{ \transpose{a}x }[ x \in \Rstab(G) ]$.
  Since $\transpose{a}x \leq \delta$ is not valid for $\Rstab(G)$, we have $\delta' > \delta$.
  Property~\eqref{eq_property_facet} implies that $\transpose{a}x \leq \delta'$ is facet-defining for $\Pstab(G[\supp(a)])$, and in particular, equality holds for the characteristic vector of some stable set $S \subseteq V(G[\supp(a)])$.
  Since $S$ is also a stable set in $G$, this contradicts the assumption that $\transpose{a}x \leq \delta$ is valid for $\Pstab(G)$.

  By \cref{thm_face_vertex}, there exists a vector $c \in \R^n$ such that $\opt{\Rstab(G)}{c} = \setdef{\hat{x}}$ and $\opt{\Pstab(G)}{c}$ has (at least) two vertices $\bar{x}^1, \bar{x}^2 \in \setdef{0,1}^{V(G)}$.
  Since $\bar{x}^1 \neq \bar{x}^2$, there exists a coordinate $u \in V(G)$ at which they differ and we can assume $\bar{x}^1_u = 0$ and $\bar{x}^2_u = 1$ without loss of generality.
  If $\hat{x}_u \geq \tfrac{1}{2}$, we can choose $\Gouter \coloneqq G$, $\couter \coloneqq c$ and $\vouter \coloneqq u$.
  Together with $\hat{x}$ and $\bar{x}^1$, they satisfy the requirements of the lemma.

  Otherwise, let $G'$ be the graph $G$ with an additional edge $\setdef{u,u'}$ attached at $u$.
  Formally, let $G''$ be the graph consisting of a single edge $\setdef{u,u'}$ and let $G' \coloneqq G \onesum{u}{u} G''$.
  By Property~\eqref{eq_property_onesum} and \cref{thm_summary_conditions}~\ref{thm_summary_conditions_reverse}, $\Rstab(G') = \Rstab(G) \onesum{u}{u} \Rstab(G'')$ holds.
  Since $G''$ is a single edge, Property~\eqref{eq_property_facet} together with the fact that $\Rstab(G'') \cap \Z^{E(G'')}$ contains only the stable set vectors implies $\Rstab(G'') = \Pstab(G'')$.
  Thus, $\Rstab(G')$ is described by all inequalities that are valid for $\Rstab(G)$ together with $x_{u'} \geq 0$ and $x_u + x_{u'} \leq 1$.
  Hence, for a sufficiently small $\varepsilon > 0$ and the objective vector $c' \in \R^{V(G')}$ with $c'_{u'} = \varepsilon$, $c'_u = c_u + 2\varepsilon$ and $c'_v = c_v$ for all $v \in V(G) \setminus \setdef{u}$, the maximization of $c'$ over $\Rstab(G')$ yields a unique optimum $\hat{x}' \in \R^{V(G')}$ with $\hat{x}'_v = \hat{x}_v$ for all $v \in V(G)$ and $\hat{x}'_{u'} = 1-\hat{x}'_u > \tfrac{1}{2}$, while the maximization of $c'$ over $\Pstab(G')$ admits an optimum $\bar{x}' \in \R^{V(G')}$ with $\bar{x}'_u = 1$ and $\bar{x}'_{u'} = 0$.
  Now, $\Gouter \coloneqq G'$, $\couter \coloneqq c'$ and $\vouter \coloneqq u'$ together with $\hat{x}'$ and $\bar{x}'$ satisfy the requirements of the lemma.
\end{proof}

\begin{example}[continues=ex_C5_K3]
  For $\Roddcyclestabfive$, choose $G$ in the proof of \cref{thm_bad_graph} to be $K_3$, the complete graph on nodes $\{A, B, C\}$.
  We assume that vectors in $\R^{V(G)}$ are indexed in the order $A$, $B$, $C$.
  The clique inequality $x_A + x_B + x_C \leq 1$ is facet-defining for $\Pstab(G)$ but not valid for $\Roddcyclestabfive(G)$.
  Moreover, for $c = \transpose{(1,1,1)}$, $\opt{\Roddcyclestab(G)}{c} = \{\hat{x}\}$ with $ \hat{x} = \transpose{(\frac{1}{2}, \frac{1}{2}, \frac{1}{2})}$ while $\opt{\Pstab(G)}{c}$ contains the three stable sets defined by selecting a single node in $G$, and hence has dimension $2$.
  Consider $\bar{x}_1 = \transpose{(0,1,0)}$ and $\bar{x}_2 = \transpose{(1,0,0)}$ and choose $u = A$.
  Following the proof of \cref{thm_bad_graph}, we obtain $\Gouter = G$, $\vouter = A$ and $\couter = c$ as depicted in \cref{fig_C5_K3_Gout}.
\end{example}

\begin{example}[continues=ex_C3_K4]
  For $\Roddcyclestab$, choose $G$ in the proof of \cref{thm_bad_graph} to be $K_4$, the complete graph on nodes $\{A, B, C, D\}$.
  We assume that vectors in $\R^{V(G)}$ are indexed in the order $A$, $B$, $C$, $D$.
  The clique inequality $x_A + x_B + x_C + x_D \leq 1$ is facet-defining for $\Pstab(G)$ but not valid for $\Roddcyclestab(G)$.
  Moreover, for $c = \transpose{(1,1,1,1)}$, $\opt{\Roddcyclestab(G)}{c} = \{\hat{x}\}$ with $ \hat{x} = \transpose{(\frac{1}{3}, \frac{1}{3}, \frac{1}{3}, \frac{1}{3})}$ while $\opt{\Pstab(G)}{c}$ contains the four stable sets defined by selecting a single node in $G$, and hence has dimension $3$.
  Consider $\bar{x}_1 = \transpose{(0,1,0,0)}$ and $\bar{x}_2 = \transpose{(1,0,0,0)}$ and choose $u = A$.
  Since $\hat{x}_u < \frac{1}{2}$, we introduce an additional edge $\{u, u'\} = \{A, A'\}$ and the graph $G''$ consisting of these two nodes and the single edge.
  Following the proof of \cref{thm_bad_graph}, $\Gouter = G \onesum{A}{A} G''$ and $\vouter = A'$.
  Finally, $\couter$ can be defined by setting $\varepsilon = \frac{1}{3}$.
  \cref{fig_C3_K4_Gout} illustrates $\Gouter$ and $\couter$.
  The unique optimum of maximizing $\couter$ over $\Roddcyclestab(\Gouter)$ is $\hat{x}_v = \frac{1}{3}$ for $v \in \{A,B,C,D\}$ and $\hat{x}_{A'} = \frac{2}{3}$, while selecting only node $A$ is an optimal stable set for $\couter$.
\end{example}

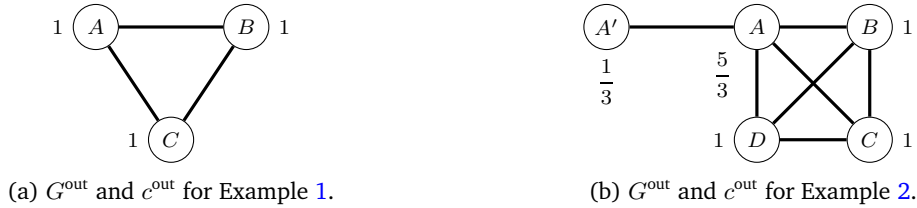
\begin{figure}[htb]
  \begin{subfigure}[b]{0.49\textwidth}
    \centering
    \begin{tikzpicture}
      \node[node,label=left:\footnotesize{$1$}] (A) at (0,0) {$A$};
      \node[node,label=right:\footnotesize{$1$}] (B) at (2,0) {$B$};
      \node[node,label=left:\footnotesize{$1$}] (C) at (1,-1.5) {$C$};

      \draw[edge] (A) -- (B) -- (C) -- (A);
    \end{tikzpicture}
    \caption{$\Gouter$ and $\couter$ for \cref{ex_C5_K3}.}
    \label{fig_C5_K3_Gout}
  \end{subfigure}
  \begin{subfigure}[b]{0.49\textwidth}
    \centering
    \begin{tikzpicture}
      \node[node,label=below:\footnotesize{$\dfrac{1}{3}$}] (Ap) at (-2,0) {$A'$};
      \node[node,label=below left:\footnotesize{$\dfrac{5}{3}$}] (A) at (0,0) {$A$};
      \node[node,label=right:\footnotesize{$1$}] (B) at (1.5,0) {$B$};
      \node[node,label=right:\footnotesize{$1$}] (C) at (1.5,-1.5) {$C$};
      \node[node,label=left:\footnotesize{$1$}] (D) at (0,-1.5) {$D$};

      \draw[edge] (A) -- (B) -- (C) -- (D) -- (A) -- (Ap);
      \draw[edge] (A) -- (C);
      \draw[edge] (B) -- (D);
    \end{tikzpicture}
    \caption{$\Gouter$ and $\couter$ for \cref{ex_C3_K4}.}
    \label{fig_C3_K4_Gout}
  \end{subfigure}
  \caption{Illustrations of graph $\Gouter$ and objective vector $\couter$ for the two running examples.}
\end{figure}

\subsection{The graph \texorpdfstring{$\Ginner$}{G-in}}

Among all graphs $ G \in \graphs$ with $\Rstab(G) \neq \Redgestab(G)$ we choose $\Ginner$ to have a minimum number of nodes.
Note that $\Ginner$ exists by hypothesis of \eqref{eq_only_if}.
We assume $V(\Ginner) = \setdef{1,2,\dotsc,n}$.
In general $\Rstab(\Ginner)$ might not necessarily satisfy all edge inequalities (e.g., for the example at the end of \cref{sec_families}).
To construct our counterexample we consider $\Ginner$, and we define as $\GinnerEdge$ the subgraph of $\Ginner$ having only those edges for which the corresponding edge inequalities are valid for $\Rstab(\Ginner)$.
Formally, let $V(\GinnerEdge) \coloneqq V(\Ginner)$ and $E(\GinnerEdge) \coloneqq \{ \{u,v\} \in E(\Ginner) \mid x_u + x_v \leq 1 \text{ is valid for $\Rstab(\Ginner)$} \}$.
Let $Ax \leq b$ (with $A \in \Z^{m \times n}$ and $b \in \Z^m$) be the system containing inequalities for all facets of $\Rstab(\Ginner)$ that are not valid for $\Redgestab(\Ginner)$.
This implies $\Rstab(\Ginner) = \{ x \in \Redgestab(\GinnerEdge) \mid Ax \leq b \}$.

Note that $n \geq 3$ holds by assumption on $\Ginner$.
Assume, for the sake of contradiction, that $m = 0 $ holds.
Then $\Rstab(\Ginner)$ would only consist of nonnegativity and edge constraints.
This in turn would mean $\Redgestab(\GinnerEdge) = \Rstab(\Ginner) \neq \Redgestab(\Ginner)$ by choice of $\Ginner$, and hence $\GinnerEdge$ would be a proper subgraph of $\Ginner$.
However, due to the absence of additional inequalities, this would imply $\Redgestab(\GinnerEdge) \cap \Z^{V(\Ginner)} = \Rstab(\Ginner) \cap \Z^{V(\Ginner)} \neq \Pstab(\Ginner) \cap \Z^{V(\Ginner)}$, a contradiction.
We conclude that also $m \geq 1$ holds.

\begin{claim}
  \label{thm_full_support}
  $A_{i,j} \geq 1$ holds for every $i \in \{1,2,\dotsc,m\}$ and every $j \in \{1,2,\dotsc,n\}$.
\end{claim}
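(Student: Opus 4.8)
The plan is to fix an arbitrary row $\transpose{a}x \le b$ of the system $Ax \le b$, say with support $U := \supp(a) \subseteq V(\Ginner)$, and to establish the two statements $a \ge 0$ and $U = V(\Ginner)$. Since $A \in \Z^{m\times n}$, these together immediately give $A_{i,j} \ge 1$ for every $j$. The recurring device will be that, by the very definition of the system $Ax \le b$, the inequality $\transpose{a}x \le b$ is \emph{not} valid for $\Redgestab(\Ginner)$; hence in each case it suffices to reach a contradiction by showing that $\transpose{a}x \le b$ actually \emph{is} valid for $\Redgestab(\Ginner)$. I will use throughout that, by Property~\eqref{eq_property_facet}, $\transpose{a}x \le b$ is facet-defining for $\Pstab(\Ginner[U])$, that $\Pstab(H)$ and $\Redgestab(H)$ are full-dimensional for every graph $H$, and that the vertices of $\Pstab(H)$ are exactly the characteristic vectors of stable sets of $H$.

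First I would show $a \ge 0$. Suppose $a_j < 0$ for some $j$ (so $j \in U$). If a stable set $S$ of $\Ginner[U]$ satisfies $\transpose{a}\chi^S = b$, then $S \setminus \{j\}$ is again stable and $\transpose{a}\chi^{S\setminus\{j\}} = b - a_j > b$, contradicting validity of $\transpose{a}x \le b$ for $\Pstab(\Ginner[U])$; hence every tight stable set avoids $j$. Therefore the facet $F := \setdef{x \in \Pstab(\Ginner[U])}[\transpose{a}x = b]$ is contained in $\setdef{x}[x_j = 0]$, and since $\Pstab(\Ginner[U]) \cap \setdef{x}[x_j = 0]$ is itself a facet of $\Pstab(\Ginner[U])$ of the same dimension $|U|-1$, the two faces coincide. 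As $\Pstab(\Ginner[U])$ is full-dimensional, this forces $\transpose{a}x \le b$ to be a positive multiple of $x_j \ge 0$ — which is valid for $\Redgestab(\Ginner)$, a contradiction. This step also yields $b \ge 1$: indeed $b \ge 0$ because $\zerovec \in \Pstab(\Ginner[U])$, and $b = 0$ is impossible once $a \ge 0$, since then the characteristic vector of any singleton in the (nonempty) set $U$ would violate $\transpose{a}x \le 0$.

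Next I would show $U = V(\Ginner)$. Assume $U \subsetneq V(\Ginner)$, so $|U| < n = |V(\Ginner)|$. By the minimality of $\Ginner$ we get $\Rstab(\Ginner[U]) = \Redgestab(\Ginner[U])$, and by Property~\eqref{eq_property_valid} the inequality $\transpose{a}x \le b$ is valid for $\Rstab(\Ginner[U]) = \Redgestab(\Ginner[U])$. It is tight on the $(|U|-1)$-dimensional facet $F$ of $\Pstab(\Ginner[U]) \subseteq \Redgestab(\Ginner[U])$ but, because $b \ge 1$, it is not satisfied with equality on all of $\Redgestab(\Ginner[U])$; hence it is facet-defining for $\Redgestab(\Ginner[U])$. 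Every facet of an edge relaxation is a positive multiple of one of its defining inequalities, i.e.\ of $x_v \ge 0$, $x_v \le 1$, or $x_u + x_v \le 1$ for an edge $\{u,v\}$ of $\Ginner[U]$ — and each of these is valid for $\Redgestab(\Ginner)$ (using $\Redgestab(\Ginner) \subseteq [0,1]^{V(\Ginner)}$ and $E(\Ginner[U]) \subseteq E(\Ginner)$), a contradiction. Thus $U = V(\Ginner)$, and combining this with $a \ge 0$ and $a \in \Z^n$ proves the claim.

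The step I expect to be the most delicate is obtaining $a \ge 0$ without appealing to external structure theory of the stable set polytope; the ``delete $j$ from a tight stable set'' trick handles it, but one must be careful to pin down the facet $\setdef{x \in \Pstab(\Ginner[U])}[\transpose{a}x = b]$ as being \emph{equal} to the nonnegativity facet (not merely contained in $\setdef{x}[x_j=0]$), which is why the equal-dimension comparison is needed. A secondary point requiring care is the bookkeeping that gives $b \ge 1$, since this is what keeps the faces of $\Redgestab(\Ginner[U])$ in the second part proper and makes the ``facet of an edge relaxation'' characterization applicable.
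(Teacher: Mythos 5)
Your proof is correct, and it reaches the claim by a somewhat different route than the paper. The paper's argument is much shorter: it disposes of nonnegativity by citing the standard fact (Schrijver, Section~9.3) that every facet-defining inequality of a stable set polytope other than a nonnegativity constraint has nonnegative coefficients, and for full support it observes that if $A_{i,j}=0$ then $A_{i,\star}x\le b_i$ is valid for $\Rstab(\Ginner[\supp(A_{i,\star})])$ by \eqref{eq_property_valid} yet still \emph{invalid} for $\Redgestab(\Ginner[\supp(A_{i,\star})])$ (restrict a violating point to the support), so the proper induced subgraph already witnesses $\Rstab\neq\Redgestab$, contradicting the node-minimality of $\Ginner$. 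You instead (i) reprove the nonnegativity fact from scratch via the ``delete a negative-coefficient node from a tight stable set'' trick plus a facet-dimension comparison --- a correct, self-contained replacement for the citation --- and (ii) run the minimality contradiction in the opposite direction: minimality forces $\Rstab(\Ginner[U])=\Redgestab(\Ginner[U])$, and then validity together with the $(|U|-1)$-dimensional tight face forces the inequality to be facet-defining for $\Redgestab(\Ginner[U])$, hence a positive multiple of a bound or edge inequality, which is valid for $\Redgestab(\Ginner)$ --- a contradiction. Both versions rest on the same two pillars (Property~\eqref{eq_property_valid} and the minimality of $\Ginner$); yours buys self-containedness at the cost of extra polyhedral bookkeeping (full-dimensionality, uniqueness of facet-defining inequalities up to positive scaling, and the bound $b\ge 1$), all of which you handle correctly.
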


\begin{proof}
  It is a basic fact that every facet-defining inequality of a stable set polytope that is not a nonnegativity constraint is of the form $\transpose{a}x \leq \beta$ for some nonnegative vector $a \in \R^n$ (see Section~9.3 in~\cite{Schrijver86}).
  Assume, $A_{i,j} = 0$ holds for some indices $i,j$.
  By Property~\eqref{eq_property_valid}, $A_{i,\star}x \leq b_i$ is valid for $\Rstab(\Ginner[\supp(A_{i,\star})])$, while it is not valid for $\Redgestab(\Ginner[\supp(A_{i,\star})])$.
  This contradicts the minimality assumption for $\Ginner$.
\end{proof}

For both our examples, \cref{thm_full_support} is easy to verify.

\begin{example}[continues=ex_C5_K3]
  For $\Roddcyclestabfive$, choose $\Ginner = C_5$ to be a cycle on nodes $\{1,2,3,4,5\}$.
  Then $\GinnerEdge = \Ginner$ and the system $Ax \leq b$ consists of just the odd-cycle inequality $x_1 + x_2 + x_3 + x_4 + x_5 \leq 2$.
\end{example}

\begin{example}[continues=ex_C3_K4]
  For $\Roddcyclestab$, choose $\Ginner = C_3$ to be a triangle on nodes $\{1,2,3\}$.
  Then $\GinnerEdge = \Ginner$ and the system $Ax \leq b$ consists of just the triangle inequality $x_1 + x_2 + x_3 \leq 1$.
\end{example}

\subsection{The graph \texorpdfstring{$G^\star$}{G*}}

For each $j \in \{2,3,\dots,n\}$ let $G^j$ be an isomorphic copy of $\Gouter$ such that $V(G^j) \cap V(G^k) = \varnothing$ whenever $j \neq k$.
Let $c^j \in \R^{V(G^j)}$ and $v^j \in V(G^j)$ be the vector and node corresponding to $\couter$ and $\vouter$ in \cref{thm_bad_graph}, respectively.
Now $ G^\star$ is defined as the $1$-sum of $\Ginner$ with all $G^j$ at the respective nodes $j \in V(\Ginner)$ and $v^j \in V(G^j)$, i.e., $G^\star \coloneqq \Ginner \onesum{2}{v^2} G^2 \onesum{3}{v^3} \dotsb \onesum{n}{v^n} G^n$, where the $\oplus$-operator has to be applied from left to right.
Note that we have
\begin{equation*}
  \Rstab(G^\star) = \Rstab(\Ginner) \onesum{2}{v^2} \Rstab(G^2) \onesum{3}{v^3} \dotsb \onesum{n}{v^n} \Rstab(G^n)
\end{equation*}
by Property~\eqref{eq_property_onesum} and~\cref{thm_summary_conditions}~\ref{thm_summary_conditions_reverse}.

\begin{example}[continues=ex_C5_K3]
  For $\Roddcyclestabfive$, $G^\star$ consists of $\Ginner = C_5$ and $G^j = K_3^j$ for $j \in \{2,3,4,5\}$ as depicted in \cref{fig_C5_K3_Gstar}.
\end{example}

\begin{example}[continues=ex_C3_K4]
  For $\Roddcyclestab$, $G^\star$ consists of $\Ginner = C_3$ and $G^j = K_4^j \onesum{A^j}{A^j} K^j_2$ for $j \in \{2,3\}$ as depicted in \cref{fig_C3_K4_Gstar}.
\end{example}

\begin{figure}[htb]
  \begin{subfigure}[b]{0.49\textwidth}
    \centering
    \begin{tikzpicture}
      \node[node] (1) at (0,-0.2) {$1$};
      \node[node] (2) at (1,-1) {$2$};
      \node[node] (3) at (0.75,-2) {$3$};
      \node[node] (4) at (-0.75,-2) {$4$};
      \node[node] (5) at (-1,-1) {$5$};
      \draw[edge] (1) -- (2) -- (3) -- (4) -- (5) -- (1);

      \node[node] (A2) at (2,0) {$A^2$};
      \node[node] (B2) at (3,0.3) {$B^2$};
      \node[node] (C2) at (2.8,-0.7) {$C^2$};
      \draw[edge] (A2) -- (B2) -- (C2) -- (A2);
      \draw[densely dotted] (2) to node[anchor=south east] {$\oplus$} (A2);

      \node[node] (A3) at (2,-2.5) {$A^3$};
      \node[node] (B3) at (2.6,-1.6) {$B^3$};
      \node[node] (C3) at (3,-2.5) {$C^3$};
      \draw[edge] (A3) -- (B3) -- (C3) -- (A3);
      \draw[densely dotted] (3) to node[anchor=south] {$\oplus$} (A3);

      \node[node] (A4) at (-2,-2.5) {$A^4$};
      \node[node] (B4) at (-3,-2.5) {$B^4$};
      \node[node] (C4) at (-2.6,-1.6) {$C^4$};
      \draw[edge] (A4) -- (B4) -- (C4) -- (A4);
      \draw[densely dotted] (4) to node[anchor=south] {$\oplus$} (A4);

      \node[node] (A5) at (-2,0) {$A^5$};
      \node[node] (B5) at (-2.8,-0.7) {$B^5$};
      \node[node] (C5) at (-3,0.3) {$C^5$};
      \draw[edge] (A5) -- (B5) -- (C5) -- (A5);
      \draw[densely dotted] (5) to node[anchor=south west] {$\oplus$} (A5);
    \end{tikzpicture}
    \caption{Construction of $G^\star$ for \cref{ex_C5_K3}.}
    \label{fig_C5_K3_Gstar}
  \end{subfigure}
  \begin{subfigure}[b]{0.49\textwidth}
    \centering
    \begin{tikzpicture}[scale=0.95]
    \foreach \pos/\name in {{(3.5,2.4)/1}, {(4.25,1.25)/2}, {(2.75,1.25)/3}}
    {
      \node[vertex] (\name) at \pos {$\name$};
    }

    \node[vertex] (v2) at (5.25,1.25) {\scriptsize $A'^2$};

    \draw[densely dotted] (2) to node[anchor=south] {$\oplus$} (v2);


    \foreach \pos/\name in {{(5.25,0.25)/12}, {(6.5,0.25)/22}, {(5.25,-1)/42}, {(6.5,-1)/32}}
    {
      \node[vertex] (\name) at \pos {};
    }

    \node at (5.25,0.25) {\scriptsize $A^2$};
    \node at (6.5,0.25) {\scriptsize $B^2$};
    \node at (6.5,-1) {\scriptsize $C^2$};
    \node at (5.25,-1) {\scriptsize $D^2$};


    \node[vertex] (v3) at (1.75,1.25) {\scriptsize $A'^3$};

    \draw[densely dotted] (3) to node[anchor=south] {$\oplus$} (v3);

    \foreach \pos/\name in {{(1.75,0.25)/13},{(0.5,0.25)/43}, {(1.75,-1)/23},{(0.5,-1)/33}}
    {
      \node[vertex] (\name) at \pos {};
    }

    \node at (1.75,0.25) {\scriptsize $A^3$};
    \node at (1.75,-1) {\scriptsize $B^3$};
    \node at (0.5,-1) {\scriptsize $C^3$};
    \node at (0.5,0.25) {\scriptsize $D^3$};


    \foreach \source/ \dest  in {12/v2,13/v3}
      \path[edge] (\source) -- (\dest);

    \foreach \source/ \dest  in {1/3, 3/2, 2/1}
      \path[edge] (\source) -- (\dest);

    \foreach \source/ \dest  in {12/22, 12/32, 12/42, 22/32, 22/42, 32/42}
      \path[edge] (\source) -- (\dest);

    \foreach \source/ \dest  in {13/43, 13/23, 13/33, 43/23, 43/33, 23/33}
      \path[edge] (\source) -- (\dest);
  \end{tikzpicture}
  \caption{Construction of $G^\star$ for \cref{ex_C3_K4}.}
    \label{fig_C3_K4_Gstar}
  \end{subfigure}
  \caption{%
    Construction of graph $G^\star$ for the two running examples as a $1$-sum of $\Ginner$ and copies of $\Gouter$.
    Pairs of nodes that are identified in the $1$-sums are connected by dotted lines.
  }
\end{figure}
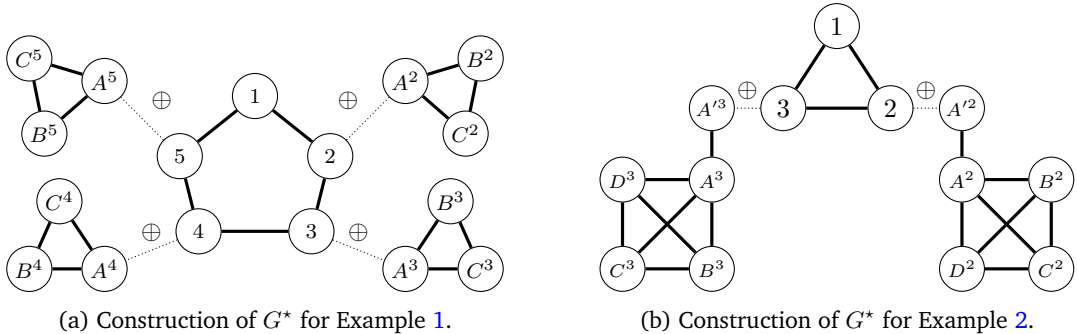

\subsection{The objective vector}

It remains to construct an objective vector $ c^\star \in \R^{V(G^\star)} $ that shows that $ \Rstab(G^\star) $ does not have the persistency property. 
As in the previous section, $c^j \in \R^{V(G^j)}$ and $v^j \in V(G^j)$ denote the vector and node corresponding to $\couter$ and $\vouter$ in \cref{thm_bad_graph}, for each $j \in \{2,3,\dots,n\}$.
Let $ A, b $ be as in the previous section, and denote by $a \coloneqq A_{1,\star}$ the first row of $A$. 
We will define $ c^\star $ via
\[
  c^\star_1 \coloneqq \varepsilon \quad \text{ and } \quad c^\star_v \coloneqq a_j \cdot c^j_v \text{ for all } v \in V(G^j), \, j \in \setdef{2,3,\dotsc,n},
\]
where $ \varepsilon > 0 $ is a positive constant that we will define later. Our first claim is independent of the specific choice of $ \varepsilon $.

\begin{claim}
  \label{thm_integer_optima}
  Every $c^\star$-maximal stable set in $ G^\star $ contains node $1 \in V(\Ginner)$.
\end{claim}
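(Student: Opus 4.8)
The plan is to prove the claim by contradiction, and in a way that does not depend on the value eventually chosen for $\varepsilon>0$: assuming $S$ is a $c^\star$-maximal stable set of $G^\star$ with $1\notin S$, I will exhibit a stable set $S'$ of $G^\star$ with $1\in S'$ and $c^\star(S')>c^\star(S)$, contradicting the maximality of $S$.

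The first step restates \cref{thm_bad_graph} at the level of stable sets. For each $j\in\{2,\dots,n\}$ let $\alpha_j^1$, respectively $\alpha_j^0$, be the maximum of $c^j(T)$ over stable sets $T$ of $G^j$ with, respectively without, the node $v^j$ (both maxima exist, witnessed e.g.\ by $\{v^j\}$ and $\varnothing$). Since $(G^j,c^j,v^j)$ is an isomorphic copy of $(\Gouter,\couter,\vouter)$ and $\Pstab$ is integral, the vertex $\bar{x}\in\opt{\Pstab(\Gouter)}{\couter}$ with $\bar{x}_{\vouter}=0$ furnished by \cref{thm_bad_graph} transports to the characteristic vector of a stable set of $G^j$ that avoids $v^j$ and is $c^j$-maximal among \emph{all} stable sets of $G^j$; hence $\alpha_j^0\ge\alpha_j^1$. (Only this consequence of \cref{thm_bad_graph} is used here; the bound $\hat{x}_{\vouter}\ge\tfrac12$ will matter only in the subsequent claim concerning optima of $\Rstab$.)

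The second step exploits the $1$-sum structure of $G^\star$: forming it by gluing $\Ginner$ to the copies $G^2,\dots,G^n$ at the pairs $(j,v^j)$ adds no edges and identifies only $j$ with $v^j$, so $\{1\},V(G^2),\dots,V(G^n)$ partition $V(G^\star)$, distinct copies $G^j$ are vertex-disjoint, and the only node of $\Ginner$ lying in $V(G^j)$ is the identified node $j=v^j$. Consequently a set $T\subseteq V(G^\star)$ is stable if and only if $T\cap V(\Ginner)$ is stable in $\Ginner$ and each $T\cap V(G^j)$ is stable in $G^j$; and, using $c^\star_1=\varepsilon$, $c^\star_v=a_j c^j_v$ on $V(G^j)$, and $a_j\ge1$ from \cref{thm_full_support}, the objective splits as $c^\star(T)=\sum_{j=2}^n a_j\,c^j(T\cap V(G^j))$ whenever $1\notin T$, with $c^j(T\cap V(G^j))$ bounded above by $\alpha_j^1$ if $v^j\in T$ and by $\alpha_j^0$ otherwise.

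Finally, letting $N$ be the neighbourhood of $1$ in $\Ginner$, I would set the inner part $I':=\bigl((S\cap V(\Ginner))\setminus N\bigr)\cup\{1\}$ --- a stable set of $\Ginner$ containing $1$ --- and build $S'$ by taking $S'\cap V(\Ginner):=I'$ and, for each $j\ge2$, choosing $S'\cap V(G^j)$ to be a stable set of $G^j$ of weight $\alpha_j^1$ containing $v^j$ if $j\in I'$, and of weight $\alpha_j^0$ avoiding $v^j$ if $j\notin I'$; by the characterization of the second step $S'$ is a stable set of $G^\star$, and $1\in S'$. A vertex $j\ge2$ lies in $I'$ only if it already lay in $S$, so for every $j$ the status of $v^j$ is either unchanged from $S$ to $S'$ or flips from ``present'' to ``absent'', and in both cases $c^j(S'\cap V(G^j))\ge c^j(S\cap V(G^j))$ by the bound of the second step together with $\alpha_j^0\ge\alpha_j^1$. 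Summing with the positive weights $a_j$ and adding the weight $\varepsilon$ contributed by node $1\in S'$ gives $c^\star(S')\ge\varepsilon+c^\star(S)>c^\star(S)$, the desired contradiction. I expect the two points needing care to be the derivation of $\alpha_j^0\ge\alpha_j^1$ from \cref{thm_bad_graph} (which rests precisely on integrality of $\Pstab$ and on the presence of a coordinate-$0$ vertex on the $\couter$-optimal face) and the realization that in the last step one must re-optimize each outer block $G^j$ rather than merely delete $v^j$ from $S\cap V(G^j)$: a bare deletion would cost $c^j_{v^j}$, whose sign is not controlled, and would ruin the clean ``$+\varepsilon$'' gain.
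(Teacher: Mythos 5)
Your proof is correct and follows essentially the same route as the paper: both arguments hinge on the fact from Claim~\ref{thm_bad_graph} that each $G^j$ admits a $c^j$-maximal stable set avoiding $v^j$, together with $a_j>0$ and $\varepsilon>0$, so that a stable set containing node $1$ beats any stable set avoiding it by at least $\varepsilon$. The paper phrases this as a direct comparison of optimal values over the blocks rather than your explicit exchange starting from a hypothetical optimal $S$, but the substance is identical.
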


\begin{proof}
  By \cref{thm_bad_graph} there exists, for each $j \in \{2,3,\dotsc,n\}$, a $c^j$-maximal stable set $S^j \subseteq V(G^j)$ that does not use $v^j$.
  Thus, the maximum objective value obtained on $V(G^\star \setminus \{1\})$ is $\sum_{j=2}^n a_j c^j(S^j)$, which is equal to the maximum objective value for all stable sets that do not contain node $1$.
  Since $v^j \notin S^j$ for each $j$, the set $S^\star \coloneqq \bigcup_{j=2}^n S^j \cup \{1\}$ is a stable set in $ G^\star $ with objective value $\varepsilon + \sum_{j=2}^n a_j c^j(S^j) > \sum_{j=2}^n a_j c^j(S^j)$, which proves the claim.
\end{proof}

Again, we verify \cref{thm_integer_optima} for our two running examples.

\begin{example}[continues=ex_C5_K3]
  For $\Roddcyclestabfive$, $Ax \leq b$ consists only of the odd-cycle inequality $x_1 + x_2 + x_3 + x_4 + x_5 \leq 2$, we have $a_j = 1$ for $j \in \{2,3,4,5\}$.
  Hence, the objective vector is defined via $c^\star_1 = \varepsilon$, $c^\star_v = 1$ for all other nodes $v \in V(G^\star) \setminus \{1\}$, where a specific value of $\varepsilon$ still has to be defined.
  Each $c^\star$-maximal stable set in $G^\star$ must contain node $1$ since otherwise it would contain nodes $A^2$ or $A^3$, which we could replace by $B^2$ or $B^3$ without a decrease of the objective value.
  This in turn allows to include node $1$ as well.
\end{example}

\begin{example}[continues=ex_C3_K4]
  For $\Roddcyclestab$, $Ax \leq b$ consists only of the triangle inequality $x_1 + x_2 + x_3 \leq 1$, we have $a_j = 1$ for $j \in \{2,3\}$.
  Hence, the objective vector is defined via $c^\star_1 = \varepsilon$, $c^\star_j = \frac{1}{3}$, $c^\star_{A^j} = 1+\frac{2}{3}$ and $c^\star_{B^j} = c^\star_{C^j} = c^\star_{D^j} = 1$ for $j \in \{2,3\}$, where a specific value of $\varepsilon$ still has to be defined.
  It is easy to see that $\{1, A^2, A^3\}$ is the unique $c^\star$-maximal stable set in $G^\star$.
\end{example}

To see that $ \Rstab(G^\star) $ does not have the persistency property, it suffices to establish the following claim, which then yields  \cref{thm_main}.

\begin{claim}
  \label{thm_optimal_lp_point}
  For $ \varepsilon > 0 $ small enough, every $ c^\star $-optimal point $ x^\star \in \Rstab(G^\star) $ satisfies $ x^\star_1 = 0 $.
\end{claim}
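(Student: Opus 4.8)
The plan is to analyze the $c^\star$-optimal face of $\Rstab(G^\star)$ directly, using the $1$-sum decomposition $\Rstab(G^\star) = \Rstab(\Ginner) \oplus \cdots \oplus \Rstab(G^n)$. A point $x^\star$ in this polytope decomposes into a point $\tilde{x} \in \Rstab(\Ginner)$ and points $x^j \in \Rstab(G^j)$ agreeing with $\tilde{x}$ on the glued coordinates, i.e.\ $x^j_{v^j} = \tilde{x}_j$ for $j = 2,\dots,n$. The objective value is then $\varepsilon \tilde{x}_1 + \sum_{j=2}^n a_j \transpose{(c^j)} x^j$. The key observation is that for the part on each copy $G^j$, once $\tilde{x}_j$ is fixed, the best we can do is maximize $\transpose{(c^j)}x^j$ subject to $x^j_{v^j} = \tilde{x}_j$; since $\opt{\Rstab(\Gouter)}{\couter} = \{\hat{x}\}$ is a \emph{vertex} with $\hat{x}_{\vouter} \geq \tfrac12$, perturbing $\tilde{x}_j$ away from $\hat{x}_{\vouter}$ strictly decreases the attainable maximum on that copy — and by linear programming sensitivity, the rate of decrease is bounded below by a positive constant depending only on $\Gouter$ and $\couter$ (not on $\varepsilon$).

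The main steps I would carry out are: (1) argue that, in the absence of the $c^\star_1 = \varepsilon$ term, every $c^\star$-optimal point has $\tilde{x}_j = \hat{x}_{v^j} = \hat{x}_{\vouter}$ for all $j \geq 2$, and on each copy restricts to $\hat{x}$ itself; (2) quantify the penalty: there is a constant $\mu > 0$ (uniform over all copies, since they are isomorphic) such that increasing any $\tilde{x}_j$ to $\hat{x}_{\vouter} + t$ forces the copy-$j$ contribution to drop by at least $\mu t$ — this is where \cref{thm_face_vertex}-type reasoning or just the strict concavity of the value function of an LP at a vertex-inducing objective enters; (3) use \cref{thm_full_support} to note $a_j \geq 1$, so the total penalty from raising $\tilde{x}_1$ is at least $\mu \cdot (\text{number of neighbors of } 1 \text{ in } \GinnerEdge \text{ forced to change})$ — more precisely, since $a = A_{1,\star}$ has all entries $\geq 1$ and $\transpose{a}x \le b_1$ is valid but not an edge/trivial inequality, any $\tilde{x}$ with $\tilde{x}_1 > 0$ must have $\transpose{a}\tilde{x} \leq b_1$, forcing $\sum_{j\geq 2} a_j \tilde{x}_j \leq b_1 - \varepsilon \cdot \tilde{x}_1 / \varepsilon$... actually the cleaner route: any feasible $\tilde{x}$ with $\tilde{x}_1 = s > 0$ satisfies $\sum_{j=2}^n a_j \tilde{x}_j \leq b_1 - a_1 s$, whereas the $\tilde{x}_1 = 0$ comparison point can afford $\sum_{j=2}^n a_j \tilde{x}_j = b_1$ (realized by $\tilde{x}_j = \hat{x}_{\vouter}$, which is consistent because the all-$\hat{x}_{\vouter}$ vector on $V(\Ginner)$ lies in $\Rstab(\Ginner)$ — here one must check $\hat x_{\vouter}\ge\tfrac12$ makes $\sum a_j \hat x_{\vouter}$ meet $b_1$, or argue via the facet that attains equality). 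The gain from the $\varepsilon\tilde x_1$ term is exactly $\varepsilon s$, and the loss is at least $\mu (a_1 s) \cdot (\text{something}) \geq \mu a_1 s$; choosing $\varepsilon < \mu a_1$ makes every $\tilde{x}_1 > 0$ strictly suboptimal, giving $x^\star_1 = 0$.

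The main obstacle I anticipate is step (2): making the penalty constant $\mu$ genuinely uniform and $\varepsilon$-independent, and correctly bounding how far the $\tilde{x}_j$ are \emph{forced} to move when $\tilde{x}_1 = s$. It is tempting but wrong to think only one $\tilde{x}_j$ need change; in fact the facet inequality $\transpose{a}x \le b_1$ with $a$ all-positive ties the $\tilde x_j$'s together, and one must show that decreasing $\sum a_j \tilde x_j$ below its maximum value $b_1$ forces a proportional aggregate decrease in $\sum_j a_j \bigl(\text{copy-}j\text{ objective}\bigr)$, then convert that into a lower bound of order $s$ on the total objective loss. Concretely I would: let $f_j(t) := \max\{\transpose{(c^j)}x^j : x^j \in \Rstab(G^j),\, x^j_{v^j} = t\}$, a concave piecewise-linear function maximized uniquely at $t = \hat x_{\vouter}$; the copy-$j$ value at glued coordinate $t_j$ is $a_j f_j(t_j)$; and the total objective is $\varepsilon s + \sum_{j} a_j f_j(t_j)$ over all $(\tilde x, (t_j))$ feasible for $\Rstab(\Ginner)$. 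Since the point $s = 0$, $t_j = \hat x_{\vouter}$ is feasible and attains $\sum_j a_j f_j(\hat x_{\vouter}) =: M$, while any competitor with $s > 0$ has $\sum_j a_j t_j \le b_1 - a_1 s < \sum_j a_j \hat x_{\vouter}$ (the latter equality is the facet being tight), so by concavity and the fact that each $f_j$ has a bounded superdifferential away from its max, $\sum_j a_j f_j(t_j) \le M - \mu' a_1 s$ for a uniform $\mu' > 0$; picking $\varepsilon < \mu' a_1$ finishes the proof. The remaining care is ensuring the facet $\transpose{a}x \le b_1$ is actually tight at the all-$\hat x_{\vouter}$ vector, or else replacing that comparison point by one on the facet — this is a finite check using that $a$ has full positive support and $\hat x_{\vouter} \ge \tfrac12$.
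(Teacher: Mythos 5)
Your overall architecture matches the paper's: decompose $\Rstab(G^\star)$ via the $1$-sum, introduce the per-copy value function $f$, and weigh the gain $\varepsilon s$ from $x_1 = s$ against the loss caused by the facet inequality $\transpose{a}x \le b_1$ tightening the budget $\sum_{j\ge 2} a_j x_j$ to $b_1 - a_1 s$. However, there are two genuine gaps. First, your comparison point ($\tilde x_1 = 0$, $t_j = \hat x_{\vouter}$ for all $j$) is in general \emph{infeasible}: it can violate both the inequality $\transpose{a}x \le b_1$ and the edge inequalities of $\GinnerEdge$ when $\hat x_{\vouter} > \tfrac12$. The paper's own running \cref{ex_C3_K4} exhibits exactly this ($\hat x_{\vouter} = \tfrac23$, so $\sum_{j\ge 2} a_j t_j = \tfrac43 > 1 = b_1$). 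You flag this and propose ``a finite check,'' but no finite check rescues that particular point; what is actually needed is that the value function $g(z) = \max\{\sum_{j\ge 2} a_j f(x_j) : \transpose{a}x \le z,\ x \in \Redgestab(\GinnerEdge)\}$ is still \emph{strictly increasing} at $z = b_1$, i.e.\ that the unconstrained maximizer of $\sum_{j\ge2} a_j f(x_j)$ over $\Redgestab(\GinnerEdge)$ already satisfies $\transpose{a}x \ge b_1$. This is the crux of the paper's argument and it is not free: it uses Sewell's defect bound $a_1 \le \sum_j a_j - 2b_1$ (\cref{thm_defect}) together with a symmetrization argument pairing coordinates with values $\alpha$ and $1-\alpha$ to conclude $\transpose{a}\hat x \ge \tfrac12\sum_{j\ge2} a_j \ge b_1$. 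Without this, your penalty constant $\mu'$ could be zero and the argument collapses: if $g$ were already flat at $b_1$, raising $x_1$ would cost nothing.

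Second, your improved competitor must lie in $\Rstab(\Ginner)$, i.e.\ satisfy \emph{all} rows of $Ax \le b$ and all edge inequalities of $\GinnerEdge$, not just the first facet; the maximizer of $g(b_1)$ need not do so. The paper resolves this by taking a convex combination $(1-\lambda)\hat x^0 + \lambda \hat x^1$ of the feasible truncation of $x^\star$ with the (possibly infeasible) maximizer of $g(b_1)$, using \cref{thm_full_support} ($A_{i,1}\ge 1$ for every row $i$) and the slack $\gamma$ created by setting $x_1$ to $0$ to choose $\lambda$ so that every row stays satisfied. Your sketch never addresses the rows $i \ge 2$. Both gaps are fixable, but they constitute the technical heart of the paper's proof rather than routine bookkeeping.
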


Let $ x^\star $ be any $ c^\star $-optimal point in $ \Rstab(G^\star) $.
Observe that by construction, for each $j \in \{2,\dotsc,n\} $, $v^j$ is the only node connecting $G^j$ to the rest of nodes $v \in V(G^\star) \setminus V(G^j)$. 
Hence, for every fixed value of $ x^\star_{v^j} $, the maximal value of $\sum_{v \in V(G^j)} c^\star_v x_v$ 
can be computed independently of the values of the variables corresponding to nodes $v \in V(G^\star) \backslash V(G^j)$.
In order to understand the contributions of the variables corresponding to nodes $v \in V(G^j)$ to the total optimal value in terms of $ x^\star_{v^j} $, let us introduce the function $f : [0,1] \to \R$ defined via
\[
  f(y) \coloneqq \max \bigsetdef{ \transpose{{c^j}}x }[ x \in \Rstab(G^j) \text{ and } x_{v^j} = y ].
\]

Note that the definition is independent of $j$ since all $(G^j,c^j,v^j)$ are identical up to indexing. We observe that the restriction of $ x^\star $ onto the coordinates corresponding to $ V(\Ginner) $ is an optimal solution for
\begin{equation}
  \max \bigsetdef{ c'(x) }[ x \in \Rstab(\Ginner) ]
  = \max \bigsetdef{ c'(x) }[ x \in \Redgestab(\GinnerEdge), \, Ax \leq b ], \label{eq_alternative_opt}
\end{equation}
where $c'(x) \coloneqq \varepsilon x_1 + \sum_{j=2}^n a_j f(x_j)$.
Thus, we see that \cref{thm_optimal_lp_point} immediately follows from the following result.

\begin{claim}
\label{thm_optimal_lp_point_ginner}
  For $ \varepsilon > 0 $ small enough, every $ c' $-optimal point $ x \in \Rstab(\Ginner) $ satisfies $ x_1 = 0 $.
\end{claim}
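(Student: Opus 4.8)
The plan is to collapse the statement to a one–dimensional parametric inequality and then to a single combinatorial fact about the face of $\Rstab(\Ginner)$ inside $\{x_1=0\}$ on which $\tilde c(x)\coloneqq\sum_{j=2}^n a_j f(x_j)$ is maximal.

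First I would record the properties of $f$ that are needed. As the optimal value of a linear program whose right–hand side depends affinely on the parameter, $f$ is concave and piecewise linear on its domain $[0,s]$, where $s\coloneqq\max\{x_{v^j}\mid x\in\Rstab(G^j)\}\le1$. By \cref{thm_bad_graph}, maximizing $c^j$ over $\Rstab(G^j)$ has a \emph{unique} optimum $\hat x$ with $\hat x_{v^j}\ge\tfrac12$, so $f$ has a unique maximizer $y^\star\coloneqq\hat x_{v^j}\ge\tfrac12$; since a concave piecewise linear function with a unique maximizer is strictly increasing to its left and strictly decreasing to its right, $f$ is strictly increasing on $[0,y^\star]$ and strictly decreasing on $[y^\star,s]$. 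Now set $\phi(t)\coloneqq\max\{\tilde c(x)\mid x\in\Rstab(\Ginner),\ x_1=t\}$ on $[0,t_{\max}]$ with $t_{\max}\coloneqq\max\{x_1\mid x\in\Rstab(\Ginner)\}$ (if $t_{\max}=0$ there is nothing to prove). Then $\phi$ is concave and piecewise linear, and it is non–increasing: for fixed $x_1=t$ the feasible set of the remaining coordinates is cut out by the edge inequalities avoiding node $1$ together with $x_u\le1-t$ (for $\{1,u\}\in E(\GinnerEdge)$) and $\sum_{j=2}^n A_{ij}x_j\le b_i-A_{i1}t$ (for each row $i$), and since all $A_{i1}\ge1$ (\cref{thm_full_support}) while the $x_1$–coefficients of the edge inequalities are nonnegative, these sets decrease monotonically in $t$. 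Writing $\psi(t)\coloneqq\varepsilon t+\phi(t)=\max\{c'(x)\mid x\in\Rstab(\Ginner),\ x_1=t\}$, it then suffices to show $\phi(t)<\phi(0)$ for all $t\in(0,t_{\max}]$: the leftmost linear piece of $\phi$ has some slope $\alpha_0<0$, so for $0<\varepsilon<-\alpha_0$ that piece of the concave function $\psi$ has negative slope, hence $\psi$ is strictly decreasing, hence $\max_{\Rstab(\Ginner)}c'=\psi(0)$ and this value is attained only at points with $x_1=0$ (for a $c'$–maximal $x$ one gets $\psi(0)=c'(x)\le\psi(x_1)\le\psi(0)$, so $\psi(x_1)=\psi(0)$ and $x_1=0$).

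Second, $\phi(t)<\phi(0)$ for $t>0$ reduces to the claim that \emph{every $\tilde c$–maximizer over $P_0\coloneqq\Rstab(\Ginner)\cap\{x_1=0\}$ makes some inequality of $Ax\le b$ tight}: if $x$ attains $\phi(t)$ for some $t>0$ and $\phi(t)=\phi(0)$, then $x$ with its first coordinate reset to $0$ lies in $P_0$ and is a $\tilde c$–maximizer there, hence activates some row $i$, and evaluating row $i$ at $x$ itself gives $A_{i1}t+b_i>b_i$, contradicting $x\in\Rstab(\Ginner)$. Establishing this combinatorial claim is the heart of the matter and the main obstacle. Assume for contradiction that some $\tilde c$–maximizer $x^\circ$ over $P_0$ leaves every row of $A$ strict. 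At such a point $x^\circ_j\le y^\star$ for all $j\ge2$ (lowering $x^\circ_j$ to $y^\star$ keeps feasibility, since every inequality has a nonnegative $x_j$–coefficient, and strictly increases $\tilde c$). Since all rows of $A$ are strict at $x^\circ$, a neighbourhood of $x^\circ$ inside $\Redgestab(\GinnerEdge)\cap\{x_1=0\}$ lies in $P_0$, so by concavity $x^\circ$ maximizes $\tilde c$ over all of $\Redgestab(\GinnerEdge)\cap\{x_1=0\}$, which is isomorphic to $\Redgestab(H)$ where $H$ is obtained from $\GinnerEdge$ by deleting node $1$. One then derives a contradiction from $y^\star\ge\tfrac12$, the full support of the rows of $A$, and the minimality of $\Ginner$: concretely, that $\tfrac12\mathbf 1\notin\Rstab(\Ginner)$, which forces some row $i$ to satisfy $\sum_{j\ge2}A_{ij}x^\circ_j\ge b_i$ once $x^\circ$ is compared with $\tfrac12\mathbf 1$ — with a direct argument when $y^\star=\tfrac12$, a variant when $y^\star>\tfrac12$, and a separate check for the degenerate value $y^\star=1$ (where a coordinate of $x^\circ$ may equal $1$). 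This case analysis, tying together the edge structure of $H$, the value of $y^\star$, and the minimality of $\Ginner$, is the genuinely hard step; everything else is routine bookkeeping with concave piecewise linear functions and, combined with \cref{thm_integer_optima}, yields \cref{thm_optimal_lp_point} and hence \cref{thm_main}.
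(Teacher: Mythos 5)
Your outer framework is sound and in fact a clean variant of the paper's: instead of the paper's function $g(z)$ parametrized by the right-hand side of $\transpose{a}x\le z$ (which then requires a $\lambda$-convex-combination step to restore feasibility with respect to the remaining rows of $A$ and the edge inequalities), you parametrize directly by $x_1$ via $\phi(t)$, and your reduction of the claim to ``every $\tilde c$-maximizer over $P_0=\Rstab(\Ginner)\cap\{x_1=0\}$ makes some row of $Ax\le b$ tight'' is correct, as is the choice $0<\varepsilon<-\alpha_0$.

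However, that tightness statement is precisely the heart of the proof, and you do not prove it --- you sketch a case analysis on $y^\star$ and defer to a comparison of $x^\circ$ with $\tfrac12\mathbf 1$, which does not go through as stated. Two concrete problems. First, the fact that $\tfrac12\mathbf 1\notin\Rstab(\Ginner)$ is not free: it follows from the defect bound of Sewell (\cref{thm_defect}), which for a full-support facet $A_{i,\star}x\le b_i$ that is neither a bound nor an edge inequality gives $\tfrac12\sum_j A_{ij}\ge b_i+\tfrac12 A_{i1}>b_i$; without invoking this (or an equivalent fact) you have no reason to exclude $\tfrac12\mathbf 1$ from $\Rstab(\Ginner)$. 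Second, and more seriously, knowing $\tfrac12\mathbf 1\notin\Rstab(\Ginner)$ does not ``force'' $\sum_{j\ge2}A_{ij}x^\circ_j\ge b_i$, because $x^\circ$ need not dominate $\tfrac12\mathbf 1$ coordinatewise: the maximizer of $\tilde c$ over $\Redgestab(H)$ only satisfies $1-y^\star\le x^\circ_j\le y^\star$, so coordinates can lie strictly below $\tfrac12$. The paper closes exactly this hole with an exchange argument: grouping coordinates by value into $J(\alpha)$ and $J(1-\alpha)$ and perturbing by $\pm\varepsilon'$ (using concavity and strict monotonicity of $f$ on $[0,y^\star]$ and that edge inequalities of $\GinnerEdge$ stay valid), one shows $a(J(\alpha))\ge a(J(1-\alpha))$ for all $\alpha>\tfrac12$, whence $\transpose{a}\hat x\ge\tfrac12\sum_{j\ge2}a_j\ge b_1$ by \cref{thm_defect}. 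Your proposal contains neither the defect bound nor the pairing argument, and you yourself flag this step as unproven; as written the proof is therefore incomplete at its essential point.
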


\begin{example}[continues=ex_C5_K3]
  For $\Roddcyclestabfive$, we have
  \begin{equation*}
    f(x_j) = \max \bigsetdef{ x_{A^j} + x_{B^j} + x_{C^j}}[ x \in \Roddcyclestabfive(K_3^j) \text{ and } x_{A^j} = x_j ]
  \end{equation*}
  for $j=2,3,4,5$, and due to $\Roddcyclestabfive(K_3^j) = \Redgestab(K_3^j)$ it follows that $f$ attains its unique maximum at $x_j = \frac{1}{2}$.
  Consequently, any vector $x^{\text{LP}} \in \Rstab(G^\star)$ with $x^{\text{LP}}_v = \frac{1}{2}$ for all $v \in V(G^\star) \setminus \{1\}$ is $c^\star$-maximal if we ignore the objective contribution of $\varepsilon \cdot x_1$ (see \cref{fig_C5_K3_data} for an illustration).
  Now, setting $x_2 = x_3 = x_4 = x_5 = \frac{1}{2}$ leaves no slack in the odd-cycle inequality $x_1 + x_2 + x_3 + x_4 + x_5 \leq 2$.
  Hence, a positive $x_1$-variable would require a reduction of $x_j$ for some $j \in \{2,3,4,5\}$, which in turn reduced $f(x_j)$.
  Hence, for sufficiently small $\varepsilon > 0$, such a reduction is not profitable, which proves \cref{thm_optimal_lp_point_ginner} for this example.
  \cref{fig_C5_K3_data} depicts $G^\star$, $c^\star$, a $c^\star$-optimal point $x^\star \in \Rstab(G^\star)$ and a $c^\star$-maximal stable set for $\varepsilon = \frac{1}{20}$.
\end{example}

Actually, the fact that there is no slack in the odd cycle inequality to set $x_1 > 0$ in \cref{ex_C5_K3} is not a coincidence, it follows from the following result by Sewell on the defect of facets of the stable set polytope.

\begin{proposition}[Corollary~3.4.3 in~\cite{Sewell90}]
  \label{thm_defect}
  Let $\sum_{j=1}^n a_j x_j \leq b_1$ be a facet-defining inequality for the stable set polytope of a graph on $n$ nodes that is neither a bound nor an edge inequality.
  Then we have
  \[
    a_1 \leq \sum_{j=1}^n a_j - 2b_1.
  \]
\end{proposition}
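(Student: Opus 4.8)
This statement is from Sewell's thesis~\cite{Sewell90}, so I only sketch an approach. Write $a = (a_1,\dots,a_n)$, $b = b_1$, and let $G$ be the support graph of the inequality $\transpose{a}x \le b$. Since node $1$ plays no special role, the claim is equivalent to $\sum_{j \ne v} a_j \ge 2b$ for every node $v$, and this is what I would aim for. I would begin with routine reductions based on standard facts about facet-defining inequalities of $\Pstab(G)$ that are not bounds: one has $a \ge \zerovec$ and $b = \max\setdef{a(S)}[S \text{ stable}] > 0$, and the face $F \coloneqq \setdef{x \in \Pstab(G)}[\transpose{a}x = b]$ has dimension $n-1$. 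Deleting coordinates outside $\supp(a)$ only strengthens the claim, so one may assume $a > \zerovec$; then $G$ has no isolated node, and, since $\dim F = n-1$, the face $F$ is contained in neither $\setdef{x_v = 0}$ nor $\setdef{x_v = 1}$, so for every node $v$ there are tight stable sets $S^v_0 \not\ni v$ and $S^v_1 \ni v$. Finally, as the inequality is neither a bound nor an edge inequality, it is not valid for $\Redgestab(G)$: otherwise it would be a nonnegative combination of bound and edge inequalities, forcing the facet-defining inequality to be a positive multiple of one of them.

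The engine is a packing observation. Fix $v$ and suppose there are multipliers $\lambda_S \ge 0$, indexed by the tight stable sets $S$ with $v \notin S$, such that $\sum_{S \ni j}\lambda_S \le 1$ for every node $j$ and $\sum_S \lambda_S \ge 2$. Then the point $x^\bullet \coloneqq \sum_S \lambda_S \chi^S$ satisfies $x^\bullet_v = 0$, $x^\bullet \le \mathbf{1}$ and $\transpose{a}x^\bullet = b\sum_S\lambda_S$, so
\[
  2b \;\le\; b\sum_S\lambda_S \;=\; \transpose{a}x^\bullet \;=\; \sum_{j\ne v}a_j x^\bullet_j \;\le\; \sum_{j\ne v}a_j ,
\]
using $a \ge \zerovec$. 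By LP duality, the existence of such multipliers is equivalent to saying that every fractional transversal of the family $\mathcal{T}_v$ of tight stable sets avoiding $v$ has weight at least $2$. Thus the whole proposition reduces to establishing this transversal bound for every node $v$, and this genuinely uses the facet structure: if $\mathcal{T}_v$ were, say, a single set or a family with large pairwise overlaps, the bound would fail, so the tight stable sets avoiding $v$ must be suitably spread out.

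For a rank facet $\sum_j x_j \le \alpha(G)$, where $G$ is connected and $\alpha$-critical with $n \ge 3$, the required bound $2\alpha(G) \le n-1$ is precisely the classical fact (going back to Gallai) that a connected $\alpha$-critical graph other than $K_2$ has $n \ge 2\alpha+1$, i.e.\ positive defect $n-2\alpha$; the transversal bound can also be read off directly from the criticality of every edge. The real difficulty is the general weighted case, which is the content of Sewell's argument. Here one would exploit the non-validity for $\Redgestab(G)$: by half-integrality of the edge relaxation there is an $a$-maximal half-integral point $\chi^I + \tfrac12\chi^H \in \Redgestab(G)$ with $I$ a stable set, no edge between $I$ and $H$, and $a(I) + \tfrac12 a(H) > b$; one then has to turn this excess, together with the position of the tight stable sets of $F$ inside $\Pstab(G)$ (ultimately the $\alpha$-critical structure of the inequality's support), into a fractional packing of tight stable sets avoiding $v$ of total weight at least $2$. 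I expect this last step, passing from the rank case to facets with arbitrary coefficients, to be the main obstacle, and I would follow~\cite{Sewell90} for the details.
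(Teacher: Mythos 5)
The paper does not actually prove this proposition: it is imported verbatim as Corollary~3.4.3 of Sewell's thesis~\cite{Sewell90}, so there is no internal argument to compare yours against. That said, your preliminary reductions are all sound: nonnegativity of $a$ and positivity of $b$ for a non-bound facet, restriction to the support graph, the observation that a facet which is neither a bound nor an edge inequality cannot be valid for $\Redgestab(G)$ (by the extreme-ray characterization of facets of the full-dimensional polytope $\Pstab(G)$), and in particular the packing lemma --- a fractional packing of tight stable sets avoiding $v$ with total weight at least $2$ and congestion at most $1$ at every node immediately yields $\sum_{j \ne v} a_j \ge 2b$ --- together with its LP-dual reformulation as a transversal bound. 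The rank case via the positive defect of connected $\alpha$-critical graphs on at least three nodes is also correct.

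As a proof, however, the proposal has a genuine gap, and you name it yourself: the existence of such a weight-$2$ fractional packing of tight stable sets avoiding an arbitrary prescribed node, for an arbitrary weighted facet, is precisely the content of Sewell's corollary. The half-integral $a$-maximal point of $\Redgestab(G)$ certifies an excess over $b$, but nothing in your sketch explains how that excess, or the critical structure of the facet, produces the required packing; this is where essentially all of the work lies, and you defer exactly this step back to~\cite{Sewell90}. So what you have is a correct and useful reduction of the proposition to its hard core, not a proof. Since the paper itself treats the statement as an external black box, citing~\cite{Sewell90} is defensible here, but the write-up should present the argument as a reduction plus a citation rather than as a self-contained proof.
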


\begin{example}[continues=ex_C3_K4]
  For $\Roddcyclestab$, we have
  \[
    f(x_j) = \max \bigsetdef{ \frac13 x_{A'^j} + \frac{5}{3} x_{A^j} + x_{B^j} + x_{C^j} + x_{D^j}}[ x \in \Roddcyclestab(G^j) \text{ and } x_{A'^j} = x_j ],
  \]
  for $j = 2,3$, where $G^j = K_4^j \onesum{A^j}{A^j} K_2^j$, and for each $j$, $f$ attains its unique maximum at $\hat{x}_v = \frac{1}{3}$ for $v \in \{A^j,B^j,C^j,D^j\}$ and $\hat{x}_{A'^j} = \frac{2}{3}$.
  However, in this case, setting $x_j = \frac23$ for $j=2,3$ results in an infeasible solution of optimization problem~(\ref{eq_alternative_opt}), hence \cref{thm_optimal_lp_point_ginner} does not follow as easily as for \cref{ex_C5_K3}.
\end{example}

As illustrated by \cref{ex_C3_K4}, the general proof of \cref{thm_optimal_lp_point_ginner} is a bit more technical than for \cref{ex_C5_K3} since we have to ensure that all inequalities $Ax \leq b$ and all edge inequalities of $\GinnerEdge$ are satisfied, which is not always the case for the optimal solutions obtained when considering $f(x_j)$ separately for each $j$.
To overcome this difficulty for the first inequality $\transpose{a}x \leq b_1$ of the system $Ax \leq b$, it will be convenient to consider the function $g : [0,\infty] \to \R$ defined via
\[
  g(z) \coloneqq \max \bigsetdef{ \sum_{j=2}^n a_j f(x_j) }[ \transpose{a} x \leq z,~ x \in \Redgestab(\GinnerEdge) ].
\]

The intuition behind the proof of \cref{thm_optimal_lp_point_ginner} is the following: First, note that $c'(x)$ is the sum of $\varepsilon x_1$ and the objective function defining $g$.
Function $g(z)$ represents the contribution to the objective value of $G^j$ for $j=2,\dots, n$ as a function of the right-hand side of the inequality $\transpose{a} x \leq z$.
Using \cref{thm_defect} we will soon prove that $g(z)$ is strictly increasing on the interval $z \in [0,b_1]$.
Since $a_1 > 0$ and $x_1$ does not contribute to the maximum in the definition of $g$, the latter is attained only by solutions $x$ with $x_1 = 0$.
If we ignore, for a moment, the inequalities $Ax \leq b$, this shows that for sufficiently small $\varepsilon$, also every $c'$-maximal solution $x$ satisfies $x_1 = 0$.
The formal steps are as follows.

\begin{claim}
  \label{thm_f_g_concave}
  The functions $ f $ and $ g $ are concave. Moreover, $ g $ is strictly monotonically increasing on $ [0, b_1] $.
\end{claim}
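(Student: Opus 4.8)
The plan is to establish concavity of $f$ and $g$ by standard parametric-LP arguments, and then to derive strict monotonicity of $g$ on $[0,b_1]$ from Sewell's defect inequality (\cref{thm_defect}). First I would handle $f$: it is the optimal value of a linear program whose feasible region $\{x \in \Rstab(G^j) : x_{v^j} = y\}$ depends on the parameter $y$ only through the right-hand side of one equality constraint, so $f$ is concave and piecewise-linear on its domain $[0,1]$ by the usual theory of the value function of a parametric linear program (a convex combination of feasible points for $y_1$ and $y_2$ is feasible for the corresponding convex combination of the parameters). The same reasoning applies to $g$: its feasible region $\{x \in \Redgestab(\GinnerEdge) : \transpose{a}x \le z\}$ is again a polytope whose description depends on $z$ only through one right-hand side, and its objective $\sum_{j=2}^n a_j f(x_j)$ is concave (a nonnegative combination — recall $a_j \ge 1$ by \cref{thm_full_support} — of concave functions composed with linear maps $x \mapsto x_j$). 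Hence $g$ is concave as the supremum over a convex feasible set, parametrized affinely, of a concave objective.

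Next I would prove strict monotonicity of $g$ on $[0,b_1]$. Since $g$ is concave, it suffices to show $g$ is nondecreasing on $[0,b_1]$ and that it is not eventually constant before $b_1$, i.e. that $g(b_1) > g(0)$; concavity then forces strict increase on the whole interval. Monotonicity (nondecreasing) is immediate because enlarging $z$ only enlarges the feasible region. For the strict inequality I would exhibit, for $z = b_1$, a feasible point achieving strictly more than $g(0)$. The point $x = \hat{x}$, where $\hat{x}_j$ is the maximizer of $f$ (so each $f(x_j)$ is at its global maximum), is the natural candidate; the key computation is that $\hat{x}$ satisfies $\transpose{a}\hat{x} \le b_1$ and all edge inequalities of $\GinnerEdge$. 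Here is where \cref{thm_defect} enters: we need $\hat{x}$, which puts value roughly $\hat{x}_{\vouter} \ge \tfrac12$ on each copy-node $j$, to still satisfy $\sum_j a_j \hat{x}_j \le b_1$. By \cref{thm_bad_graph} we have control $\hat{x}_{\vouter} \ge \tfrac12$ on the relevant coordinate, and since all $a_j \ge 1$ the worst case is $\hat x_j$ large; the defect bound $a_1 \le \sum_{j} a_j - 2b_1$, applied to the facet $\transpose{a}x \le b_1$ (which is neither a bound nor an edge inequality by construction of $A$), is exactly the slack estimate that makes the candidate point feasible while leaving room to show strict improvement over $z=0$.

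The main obstacle I anticipate is this last feasibility/strictness step: one must carefully reconcile the value that the $f$-maximizer $\hat{x}$ wants to place on each coordinate $j \in V(\Ginner)$ with the combined constraint $\transpose{a}x \le b_1$ and the edge constraints of $\GinnerEdge$, and then extract a genuine gain $g(b_1) > g(0)$ rather than merely $g(b_1) \ge g(0)$. I expect the argument to go by choosing a feasible $x$ for the $z=b_1$ problem that coincides with $\hat{x}$ on as many coordinates as the defect inequality permits and is only slightly reduced elsewhere, using concavity and piecewise-linearity of $f$ to lower-bound the resulting objective. The concavity parts and the monotone (nonstrict) part are routine; all the real content is in the Sewell-defect bookkeeping, which the surrounding text has already flagged as the reason \cref{ex_C3_K4} is harder than \cref{ex_C5_K3}.
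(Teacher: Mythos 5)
Your concavity arguments for $f$ and $g$ are fine and essentially match the paper's, which packages them into \cref{thm_polytope_value_function}. The gap is in the strict-monotonicity part, and it is twofold. First, the reduction you propose is logically insufficient: a concave nondecreasing function on $[0,b_1]$ is strictly increasing up to some threshold $\beta^\star$ and constant afterwards, and $g(b_1)>g(0)$ only tells you $\beta^\star>0$, not $\beta^\star\ge b_1$. (Take $g(z)=\min(z,b_1/2)$: concave, nondecreasing, $g(b_1)>g(0)$, yet constant on $[b_1/2,b_1]$.) Your ``i.e.''\ equating ``not eventually constant before $b_1$'' with ``$g(b_1)>g(0)$'' is exactly where the argument breaks; what must actually be shown is $\beta^\star\ge b_1$, i.e.\ that the constraint $\transpose{a}x\le z$ remains binding at every $z<b_1$.

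Second, your plan for the ``real content'' points in the wrong direction. You propose to show that the point $\hat x$ at which every $f(\hat x_j)$ attains its global maximum satisfies $\transpose{a}\hat x\le b_1$, using Sewell's defect bound as a slack estimate. This is false in general --- in the paper's running \cref{ex_C3_K4} the $f$-maximizer has $\hat x_2=\hat x_3=\tfrac23$, so $\transpose{a}\hat x=\tfrac43>1=b_1$, which the surrounding text explicitly flags as infeasible --- and if it were true it would hurt you, since it would mean $g$ reaches its global maximum at some $z\le b_1$ and is therefore constant on part of $[0,b_1]$. The paper proves the \emph{opposite} inequality: for a maximizer $\hat x$ of $g(\infty)$ (normalized to $\hat x_1=0$) one shows $\transpose{a}\hat x\ge\tfrac12\sum_{j\ge2}a_j\ge b_1$, where only the last step is the defect bound $2b_1\le\sum_{j\ge 2}a_j$ from \cref{thm_defect}. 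Establishing the first inequality is the actual content of the claim and is missing from your proposal: one needs $1-y^\star\le\hat x_j\le y^\star$ for all $j$ (using that $f$ peaks at some $y^\star\ge\tfrac12$, plus an improvement argument when no edge inequality at $j$ is tight), followed by an exchange argument --- shifting $\varepsilon'$ of mass from coordinates at value $\alpha>\tfrac12$ to coordinates at value $1-\alpha$ and invoking concavity of $f$ --- to conclude that the total $a$-weight of coordinates at value $\alpha$ is at least that of coordinates at value $1-\alpha$, which forces the $a$-weighted average of the $\hat x_j$ to be at least $\tfrac12$.
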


\begin{example}[continues=ex_C3_K4]
  For $\Roddcyclestab$, we have
  \[
    g(z) \coloneqq \max \bigsetdef{ \sum_{j \in \{2,3\}} f(x_j) }[ x_1 + x_2 + x_3 \leq z,~ x \in \Redgestab(C_3) ].
  \]
  Function $g$ is illustrated in Figure~\ref{fig_g_ex_C3_K4}. It is clearly concave, and linear and strictly monotonically increasing on $[0,b_1] = [0,1]$, hence \cref{thm_f_g_concave} is satisfied.
\end{example}
%
%
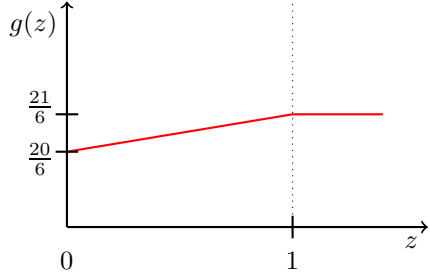
\begin{figure}[htb]
  \centering
  \begin{tikzpicture}[scale=3]
    \draw[thick,->] (0,3) -- (1.6,3) node[anchor=north east] {$z$};
    \draw[thick,->] (0,3) -- (0,4) node[anchor=north east] {$g(z)$};

    \node at (0,2.85) {$0$};

    \draw[thick] (1,2.95) -- (1,3.05);
    \node at (1,2.85) {$1$};

    \draw[red, thick] (0,3.334) -- (0.2,3.3672) -- (0.4,3.4004) -- (0.6,3.4336) -- (0.8,3.4668) -- (1,3.5) -- (1.2,3.5) -- (1.4,3.5);

    \draw[dotted] (1,3) -- (1,4);

    \draw[thick] (-0.05,3.5) -- (0.05,3.5);
    \node at (-0.12,3.533) {$\frac{21}{6}$};

    \draw[thick] (-0.05,3.334) -- (0.05,3.334);
    \node at (-0.12,3.3) {$\frac{20}{6}$};
  \end{tikzpicture}
  \caption{Illustration of function $g(z)$ for Example~\ref{ex_C3_K4}.}
  \label{fig_g_ex_C3_K4}
\end{figure}

\begin{proof}[Proof of \cref{thm_optimal_lp_point_ginner}]
Letting
\begin{align*}
  \gamma &\coloneqq \min \bigsetdef{ x_1 }[ x \text{ vertex of } \Rstab(G^\star) \text{ with } x_1 > 0 ] \in (0,1], \text{ and} \\
  \lambda &\coloneqq \min \bigsetdef{ \gamma / (A_{i,1} + \dots + A_{i,n}) }[ i \in \{1,2,\dotsc,m\} ] \in (0,1),
\end{align*}
we claim that every choice of $ \varepsilon $ with
\[
  0 < \varepsilon < \lambda (g(b_1) - g(b_1 - a_1 \gamma))
\]
satisfies the assertion.
First, we need to verify that the right-hand side of the inequality above is positive.
To this end, note that $ a_1 \le b_1 $ and hence $ 0 \leq b_1 - a_1 \gamma < b_1 $.
So, by \cref{thm_f_g_concave} we have
\begin{equation}
  \label{ineq_g}
  g(b_1 - a_1 \gamma) < g(b_1),
\end{equation}
which yields positivity of the right-hand side.

Next, let $ \varepsilon $ be as above.
For the sake of contradiction, assume that there exists a $ c' $-optimal solution $ x^\star \in \Rstab(\Ginner) $ with $ x^\star_1 > 0 $.
Note that $ x^\star $ can be extended to a $ c^\star $-optimal solution over $ \Rstab(G^\star) $, which we may assume to be a vertex of $ \Rstab(G^\star) $, and hence $ x^\star_1 \ge \gamma $.
Let $\hat{x}^0 \in \Rstab(\Ginner) $ be equal to $x^\star$, except for $\hat{x}^0_1 \coloneqq 0$.
Moreover, let $\hat{x}^1 \in \R^{V(\Ginner)}$ be a maximizer of~$ g(b_1) $, which may not be contained in $ \Rstab(\Ginner) $.
Now consider the vector $\hat{x}^\lambda \coloneqq (1-\lambda) \hat{x}^0 + \lambda \hat{x}^1$.
To obtain the desired contradiction, we will show that $ \hat{x}^\lambda $ is contained in $ \Rstab(\Ginner) $ and that $ c'(\hat{x}^\lambda) > c'(x^\star) $.

Since $\hat{x}^0$ and $\hat{x}^1$ both lie in $\Redgestab(\GinnerEdge)$, also $x^\lambda$ lies in $\Redgestab(\GinnerEdge)$.
Let $i \in \setdef{1,2,\dotsc,m}$.
By \cref{thm_full_support}, $A_{i,1} \geq 1$ holds, which implies $A_{i,\star} \hat{x}^0 \leq A_{i,\star} x^\star - \gamma \leq b_i - \gamma$.
We obtain
\[
  A_{i,\star} \hat{x}^\lambda
  = A_{i,\star} \hat{x}^0 + \lambda A_{i,\star} (\hat{x}^1 - \hat{x}^0)
  \leq b_i - \gamma + \lambda (A_{i,1} + \dots + A_{i,n})
  \leq b_i,
\]
where the second inequality follows from the fact that each coordinate of $ \hat{x}^1 - \hat{x}^0 $ is bounded by $ 1 $, and the last inequality holds by the definition of $ \lambda $.
This shows that $\hat{x}^\lambda$ is contained in $\Rstab(\Ginner)$.

For the objective value of $\hat{x}^1$ we clearly have $c'(\hat{x}^1) \ge g(b_1)$.
Moreover, since $ \hat{x}^0_1 = 0 $ we have
\[
  c'(\hat{x}^0) \le g(\transpose{a} \hat{x}^0) \le g(b_1 - a_1 \gamma) < g(b_1),
\]
where the latter two inequalities again follow from \cref{thm_f_g_concave} and~\eqref{ineq_g}.
Observe that concavity of $f$ and nonnegativity of $a$ imply concavity of $c'(x)$, which yields
$c'(\hat{x}^\lambda) \geq (1-\lambda) c'(\hat{x}^0) + \lambda c'(\hat{x}^1)$.
We obtain
\begin{align*}
  c'(x^\star) - c'(\hat{x}^\lambda)
  &\leq \big( \varepsilon + c'(\hat{x}^0) \big) - \big( c'(\hat{x}^0) - \lambda (c'(\hat{x}^0) - c'(\hat{x}^1)) \big)
  = \varepsilon + \lambda (c'(\hat{x}^0) - c'(\hat{x}^1)) \\
  &\leq \varepsilon + \lambda ( g(b_1 - a_1 \gamma) - g(b_1) )
  < 0,
\end{align*}
where the last inequality holds by definition of $\varepsilon$ and due to~\eqref{ineq_g}.
\end{proof}

\begin{example}[continues=ex_C3_K4]
  For $\Roddcyclestab$, one can check (for example with a computer program) that $\gamma = \frac{1}{3}$ and $\lambda = \frac{1}{9}$.
  For every $0 < \varepsilon < \frac{1}{9} \left( g(1) - g(1 - \frac{1}{3}) \right) = \frac{1}{9} \left( \frac{7}{2} - \frac{31}{9} \right) = \frac{1}{162}$, \cref{thm_optimal_lp_point_ginner} is satisfied.
  \cref{fig_C3_K4_data} depicts $G^\star$, $c^\star$, a $c^\star$-optimal point $x^\star \in \Rstab(G^\star)$ and a $c^\star$-maximal stable set for $\varepsilon = \frac{1}{300}$.
\end{example}

\begin{figure}[htb]
  \begin{subfigure}[b]{0.53\textwidth}
    \centering
    \begin{tikzpicture}
      \node[node,very thick,fill=black!10!white] (1) at (0,0.2) {$1$};
      \node[node] (2) at (1.5,-1) {$2$};
      \node[node] (3) at (0.75,-2.5) {$3$};
      \node[node] (4) at (-0.75,-2.5) {$4$};
      \node[node] (5) at (-1.5,-1) {$5$};
      \draw[edge] (1) -- (2) -- (3) -- (4) -- (5) -- (1);

      \node[above=0.05cm of 1] {\small $\frac{1}{20}$;$0$;$1$};
      \node[below=0.15cm of 2, anchor=west] {\small $1$;$\frac12$;$0$};
      \node[below=0.05cm of 3] {\small $1$;$\frac12$;$0$};
      \node[below=0.05cm of 4] {\small $1$;$\frac12$;$0$};
      \node[below=0.15cm of 5, anchor=east] {\small $1$;$\frac12$;$0$};

      \node[node,very thick,fill=black!10!white] (B2) at (2.125,0.1) {$B^2$};
      \node[node] (C2) at (2.75,-1.0) {$C^2$};
      \draw[edge] (2) -- (B2) -- (C2) -- (2);
      \node[right=0.05cm of B2] {\small $1$;$\frac12$;$1$};
      \node[right=0.05cm of C2] {\small $1$;$\frac12$;$0$};

      \node[node,very thick,fill=black!10!white] (B3) at (2.125,-2.5) {$B^3$};
      \node[node] (C3) at (1.5,-3.6) {$C^3$};
      \draw[edge] (3) -- (B3) -- (C3) -- (3);
      \node[right=0.05cm of B3] {\small $1$;$\frac12$;$1$};
      \node[right=0.05cm of C3] {\small $1$;$\frac12$;$0$};

      \node[node,very thick,fill=black!10!white] (B4) at (-2.125,-2.5) {$B^4$};
      \node[node] (C4) at (-1.5,-3.6) {$C^4$};
      \draw[edge] (4) -- (B4) -- (C4) -- (4);
      \node[left=0.05cm of B4] {\small $1$;$\frac12$;$1$};
      \node[left=0.05cm of C4] {\small $1$;$\frac12$;$0$};

      \node[node,very thick,fill=black!10!white] (B5) at (-2.125,0.1) {$B^5$};
      \node[node] (C5) at (-2.75,-1.0) {$C^5$};
      \draw[edge] (5) -- (B5) -- (C5) -- (5);
      \node[left=0.05cm of B5] {\small $1$;$\frac12$;$1$};
      \node[left=0.05cm of C5] {\small $1$;$\frac12$;$0$};
    \end{tikzpicture}
    \caption{$G^\star$, $c^\star$ and LP/IP maxima for \cref{ex_C5_K3}.}
    \label{fig_C5_K3_data}
  \end{subfigure}
  \begin{subfigure}[b]{0.43\textwidth}
  \centering
  \begin{tikzpicture}[auto,swap]
    \foreach \pos/\name in {{(3.5,2.65)/1}, {(4.25,1.5)/2}, {(2.75,1.5)/3}}
    {
      \node[vertex] (\name) at \pos {$\name$};
    }

    \node[node,very thick,fill=black!10!white] at (3.5,2.65) {$1$};

    \node[above=0.05cm of 1] {$\frac{1}{300}$;$0$;$1$};
    \node[right=0.05cm of 2] {$\frac{1}{3}$;$\frac{2}{3}$;$0$};
    \node[left=0.05cm of 3] {$\frac{1}{3}$;$\frac{1}{3}$;$0$};

    \foreach \pos/\name in {{(4.25,0.25)/12}, {(5.5,0.25)/22},
                            {(4.25,-1)/42}, {(5.5,-1)/32}}
    {
      \node[vertex] (\name) at \pos {};
    }

    \node[node,very thick,fill=black!10!white] at (4.25,0.25) {\scriptsize $A^2$};
    \node at (5.5,0.25) {\scriptsize $B^2$};
    \node at (5.5,-1) {\scriptsize $C^2$};
    \node at (4.25,-1) {\scriptsize $D^2$};

    \node[above right=0.005cm of 12] {\small $\frac{5}{3}$;$\frac13$;$1$};
    \node[right=0.05cm of 22] {\small $1$;$\frac13$;$0$};
    \node[below=0.05cm of 32] {\small $1$;$\frac13$;$0$};
    \node[below=0.05cm of 42] {\small $1$;$\frac13$;$0$};

    \foreach \pos/\name in {{(2.75,0.25)/13},{(1.5,0.25)/43},
                            {(2.75,-1)/23},{(1.5,-1)/33}}
    {
      \node[vertex] (\name) at \pos {};
    }

    \node[node,very thick,fill=black!10!white] at (2.75,0.25) {\scriptsize $A^3$};
    \node at (2.75,-1) {\scriptsize $B^3$};
    \node at (1.5,-1) {\scriptsize $C^3$};
    \node at (1.5,0.25) {\scriptsize $D^3$};

    \node[above left=0.005cm of 13] {\small $\frac{5}{3}$;$\frac23$;$1$};
    \node[below=0.05cm of 23] {\small $1$;$\frac16$;$0$};
    \node[below=0.05cm of 33] {\small $1$;$\frac16$;$0$};
    \node[left=0.05cm of 43] {\small $1$;$\frac16$;$0$};

    \foreach \source/ \dest  in {12/2,13/3}
      \path[edge] (\source) -- (\dest);

    \foreach \source/ \dest  in {1/3, 3/2, 2/1}
      \path[edge] (\source) -- (\dest);

    \foreach \source/ \dest  in {12/22, 12/32, 12/42, 22/32, 22/42, 32/42}
      \path[edge] (\source) -- (\dest);

    \foreach \source/ \dest  in {13/43, 13/23, 13/33, 43/23, 43/33, 23/33}
      \path[edge] (\source) -- (\dest);
  \end{tikzpicture}
  \caption{$G^\star$, $c^\star$ and LP/IP maxima for \cref{ex_C3_K4}.}
  \label{fig_C3_K4_data}
  \end{subfigure}
  \caption{%
    Illustration of $G^\star$ for our running examples.
    For each node $v \in V(G^\star)$, the triple $c_v^\star;x_v^{\text{LP}};x_v^{\text{IP}}$ denotes the corresponding coefficient in $c^\star$, the value of component $v$ in a $c^\star$-maximal solution over $\Rstab(G^\star)$ and the value in the unique $c^\star$-maximal stable set, respectively.
  }
  \label{fig_data}
\end{figure}
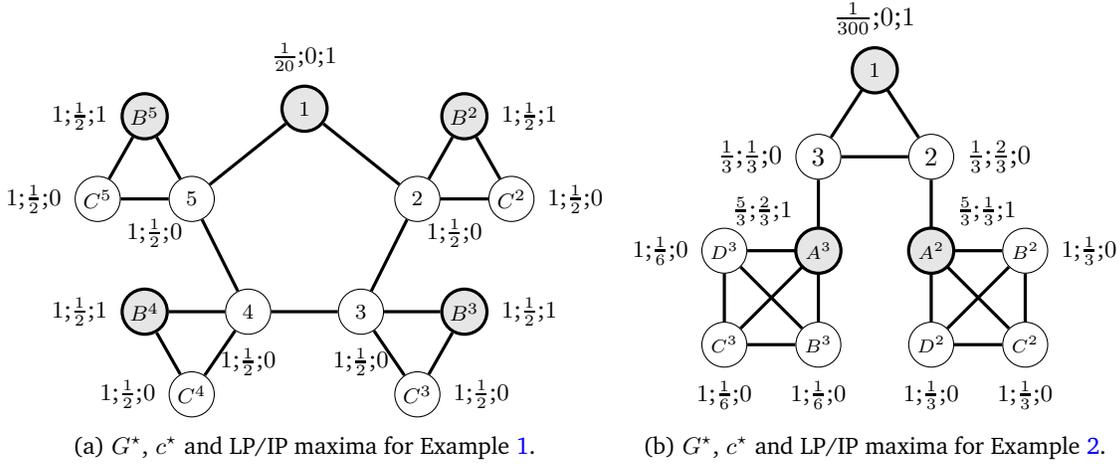

To conclude the proof of \cref{thm_main}, it remains to prove \cref{thm_f_g_concave}.
The fact that $ f $ and $ g $ are concave is a simple consequence of the next basic lemma.
\begin{restatable}{lemma}{rspolytopevaluefunction}
  \label{thm_polytope_value_function}
  Let $P \subseteq \R^n$ be a non-empty polytope, let $c,a \in \R^n$ and let $\ell \coloneqq \min \setdef{ \transpose{a} x }[ x \in P ]$.
  The functions $h^=, h^{\leq} : [\ell,\infty) \to \R$ defined via $h^=(\beta) = \max \setdef{ \transpose{c}x }[ x \in P, \transpose{a}x = \beta ]$ and $h^{\leq}(\beta) = \max \setdef{ \transpose{c}x }[ x \in P, \transpose{a}x \le \beta ]$ are concave.
  Moreover, there exists a number $\beta^\star \in [\ell,\infty)$ such that $h^=$ and $h^\le$ are identical and strictly monotonically increasing on the interval $[\ell, \beta^\star]$, and $ h^\le $ is constant on the interval $[\beta^\star, \infty)$.
\end{restatable}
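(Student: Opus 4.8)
The plan is to obtain concavity of both functions from a one-line convex-combination argument, and then to pin down the monotonicity structure by locating the optimal face of $P$ for the objective $\transpose{c}$.

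For concavity, fix $\beta_1,\beta_2$ in the relevant domain and $\lambda\in[0,1]$, and pick $x_1,x_2\in P$ attaining $h^{\le}(\beta_1)$ and $h^{\le}(\beta_2)$. Then $x\coloneqq(1-\lambda)x_1+\lambda x_2\in P$ satisfies $\transpose{a}x\le(1-\lambda)\beta_1+\lambda\beta_2$, so it is feasible for $h^{\le}$ at the averaged right-hand side and shows $h^{\le}((1-\lambda)\beta_1+\lambda\beta_2)\ge(1-\lambda)h^{\le}(\beta_1)+\lambda h^{\le}(\beta_2)$. The same computation with ``$=$'' in place of ``$\le$'' gives concavity of $h^{=}$; its natural domain is $[\ell,u]$ with $u\coloneqq\max\{\transpose{a}x:x\in P\}$, which is exactly where the defining maximum runs over a non-empty set. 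I also record the trivial facts that $h^{\le}$ is non-decreasing and that $h^{=}(\beta)\le h^{\le}(\beta)$ on $[\ell,u]$.

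Next I would set $M\coloneqq\max\{\transpose{c}x:x\in P\}$, the face $F\coloneqq\{x\in P:\transpose{c}x=M\}$, and $\beta^\star\coloneqq\min\{\transpose{a}x:x\in F\}$, a minimum over a compact set. Since $M$ is the global maximum of $\transpose{c}$ over $P$, one has $h^{=}(s)=M$ precisely when $s\in\transpose{a}(F)$, so $h^{=}(\beta^\star)=M$ while $h^{=}(s)<M$ for all $s\in[\ell,\beta^\star)$; and $h^{\le}(\beta)=\max\{h^{=}(s):\ell\le s\le\min(\beta,u)\}$ because $\{x\in P:\transpose{a}x\le\beta\}$ is the union of the fibers $\{x\in P:\transpose{a}x=s\}$ over $s\in[\ell,\min(\beta,u)]$. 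Concavity of $h^{=}$ together with $h^{=}(s)<M=h^{=}(\beta^\star)$ for $s<\beta^\star$ forces $h^{=}$ to be strictly increasing on $[\ell,\beta^\star]$: for $\ell\le\beta_1<\beta_2\le\beta^\star$, writing $\beta_2=(1-\mu)\beta_1+\mu\beta^\star$ with $\mu\in(0,1]$ gives $h^{=}(\beta_2)\ge(1-\mu)h^{=}(\beta_1)+\mu M>h^{=}(\beta_1)$. Hence $h^{\le}(\beta)=\max\{h^{=}(s):\ell\le s\le\beta\}=h^{=}(\beta)$ for $\beta\in[\ell,\beta^\star]$ (using $\min(\beta,u)=\beta$ there), so $h^{\le}=h^{=}$ on $[\ell,\beta^\star]$ and both are strictly increasing; and for $\beta\ge\beta^\star$ the interval $[\ell,\min(\beta,u)]$ contains $\beta^\star$, so $h^{\le}(\beta)=M$, i.e.\ $h^{\le}$ is constant on $[\beta^\star,\infty)$.

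I do not expect a real obstacle here. The one point requiring a deliberate choice rather than an off-the-shelf fact is to take $\beta^\star$ to be the \emph{smallest} $\transpose{a}$-value on the optimal face $F$: any other maximizer of $h^{=}$ would yield only weak monotonicity on $[\ell,\beta^\star]$. The remaining care is purely bookkeeping of domains: $h^{=}$ is finite only on $[\ell,u]$ and every statement about it lives there, while $h^{\le}$ is finite and non-decreasing on all of $[\ell,\infty)$ and is simply the constant $M$ once $\beta\ge u\ge\beta^\star$.
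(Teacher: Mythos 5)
Your proof is correct. The paper dispatches this lemma almost without argument: it projects $P$ to the planar polytope $Q = \setdef{(\transpose{a}x, \transpose{c}x)}[x \in P] \subseteq \R^2$, observes $h^{\le}(\beta) = \max\setdef{y_2}[y \in Q,\ y_1 \le \beta]$, and declares the claimed shape of $h^=$ and $h^{\le}$ obvious from the geometry of a two-dimensional polytope. You instead argue directly on the functions: concavity by averaging optimal solutions, and the monotonicity structure by identifying $\beta^\star$ as the minimum of $\transpose{a}x$ over the $c$-optimal face $F$ and then leveraging concavity of $h^=$ together with $h^=(s) < M = h^=(\beta^\star)$ for $s < \beta^\star$ to get \emph{strict} increase. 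This is essentially a rigorous unpacking of what the paper treats as self-evident, and it is arguably the more satisfying write-up: your explicit choice of $\beta^\star$ as the \emph{smallest} $\transpose{a}$-value on $F$ (rather than an arbitrary maximizer of $h^=$) is exactly the point where a careless reading of the picture could yield only weak monotonicity, and your bookkeeping of the domain $[\ell, u]$ of $h^=$ versus $[\ell,\infty)$ for $h^{\le}$ addresses a detail the paper's statement glosses over. The paper's projection buys brevity; your argument buys a proof that survives without appealing to a figure.
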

\begin{proof}
  See \cref{appendix_proofs}.
\end{proof}

\begin{proof}[Proof of \cref{thm_f_g_concave}]
From \cref{thm_polytope_value_function} it is clear that $ f $ is concave.
By rewriting
\[
  g(z) = \max \big\{ \sum_{j=2}^n a_j \cdot \hspace{-0.5em} \sum_{v \in V(G^j)} \hspace{-0.5em} c^j_v x_v \mid \sum_{j=1}^n a_j x_j \leq z,~ x \in \Redgestab(\GinnerEdge) \onesum{2}{v^2} \Rstab(G^2) \onesum{3}{v^3} \dotsb \onesum{n}{v^n} \Rstab(G^n) \big\},
\]
we also see that $ g $ is concave.
Moreover, again by \cref{thm_polytope_value_function}, there exists some $ \beta^\star \ge 0 $ such that $ g $ is strictly monotonically increasing on the interval $ [0,\beta^\star] $, and constant on $ [\beta^\star, \infty) $.
It suffices to show that $ \beta^\star \ge b_1 $.
To this end, let us get back to our initial definition of $ g $, and let $ \hat{x} \in \Redgestab(\GinnerEdge) $ be a maximizer for $ g(\infty) $.
Note that $ \beta^\star \geq \transpose{a} \hat{x} $ by definition of $\beta^\star$, and hence we have to show that $\hat{x}$ satisfies $\transpose{a} \hat{x} \geq b_1$.

Since the objective value of $\hat{x}$ does not depend on $\hat{x}_1$, we may assume that $\hat{x}_1 = 0$.
By the construction of $ G^j $ and $ c^j $, we know that $ f $ attains its unique maximum at $y^\star \geq \frac{1}{2}$.
This implies $ 0 \leq \hat{x}_j \leq y^\star$ for $j=2,3,\dotsc,n$.
Moreover, we claim that also $\hat{x}_j \geq 1 - y^\star$ holds.
Suppose not, then none of the edge inequalities involving $x_j$ is tight.
Then $\hat{x}_j < 1-y^\star \leq y^\star$ shows that increasing $\hat{x}_j$ would improve the objective value, which in turn contradicts optimality of $\hat{x}$.
Consequently, even $1-y^\star \leq \hat{x}_j \leq y^\star$ holds for $j=2,3,\dotsc,n$.

Let $J(\alpha) \coloneqq \setdef{ 2 \leq j \leq n }[ \hat{x}_j = \alpha ]$ for $\alpha \in [1-y^\star,y^\star]$.
We will show that $a(J(\alpha)) \geq a(J(1-\alpha))$ holds for all $\alpha \in (1/2, y^\star]$, where $a(J(\alpha))$ shall denote $\sum_{j \in J(\alpha)} a_j$.
Note that this implies the claim since for each $ \alpha \in (1/2, y^\star] $ we then have

\begin{align*}
  \sum_{j \in J(\alpha)} a_j \hat{x}_j + \sum_{j \in J(1-\alpha)} a_j \hat{x}_j
  & = \sum_{j \in J(\alpha)} a_j \alpha + \sum_{j \in J(1-\alpha)} a_j (1 - \alpha) \\
  & = \alpha \cdot [ \underbrace{a(J(\alpha)) - a(J(1-\alpha))}_{\ge 0}] + a(J(1-\alpha)) \\
  & \ge \tfrac{1}{2} \cdot [ a(J(\alpha)) - a(J(1-\alpha))] + a(J(1-\alpha)) \\
  &= \sum_{j \in J(\alpha)} a_j \tfrac{1}{2} + \sum_{j \in J(1-\alpha)} a_j \tfrac{1}{2}
\end{align*}
and hence
\begin{align*}
  \transpose{a} \hat{x}
  =\sum_{j=2}^n a_j \hat{x}_j
  & = \sum_{j \in J(1/2)} a_j \hat{x}_j + \sum_{\alpha \in (1/2,y^\star]} \big( \sum_{j \in J(\alpha)} a_j \hat{x}_j + \sum_{j \in J(1-\alpha)} a_j \hat{x}_j \big) \\
  &\ge \sum_{j \in J(1/2)} a_j \tfrac{1}{2} + \sum_{\alpha \in (1/2,y^\star]} \big( \sum_{j \in J(\alpha)} a_j \tfrac{1}{2} + \sum_{j \in J(1-\alpha)} a_j \tfrac{1}{2} \big)
  = \tfrac{1}{2} \sum_{j=2}^n a_j
  \geq b_1,
\end{align*}
where the last inequality follows from \cref{thm_defect}.

For the sake of contradiction, assume that $a(J(\alpha)) < a(J(1-\alpha))$ holds for some $\alpha \in (1/2,y^\star]$.
For a sufficiently small $\varepsilon' > 0$, the solution $\hat{x}' \in \R^{V(\Ginner)}$ defined via
\[
  \hat{x}'_j \coloneqq \begin{cases}
    \hat{x}_j + \varepsilon' &\text{ if } j \in J(1-\alpha) \\
    \hat{x}_j - \varepsilon' &\text{ if } j \in J(\alpha) \\
    \hat{x}_j \text{ otherwise}
  \end{cases} \text{ for } j=1,2,\dotsc,n
\]
is still contained in $ \Redgestab(\GinnerEdge) $.
To see this, observe that $\hat{x}'_j \geq 0$ holds for all $j \in V(\Ginner)$ since we only decrease entries that are at least $1/2$.
Moreover, edge inequalities for $\GinnerEdge$ that are tight for $\hat{x}$ remain tight for $\hat{x}'$, since either none or both of its two node values are modified, where in the latter case, the value is increased by $\varepsilon'$ for one node and decreased by $\varepsilon'$ for the other.
Finally, edge inequalities that are not tight for $\hat{x}$ will not be violated if we choose $\varepsilon'$ sufficiently small.
For the objective values we obtain
\begin{align*}
  \sum_{j=2}^n a_j (f(\hat{x}'_j) - f(\hat{x}_j))
  &= \sum_{j \in J(1-\alpha)} a_j (f(\hat{x}'_j) - f(\hat{x}_j)) + \sum_{j \in J(\alpha)} a_j (f(\hat{x}'_j) - f(\hat{x}_j)) \\
  &= a(J(1-\alpha)) \cdot \big( f(1-\alpha + \varepsilon') - f(1-\alpha) \big) + a(J(\alpha)) \cdot \big( f(\alpha - \varepsilon') - f(\alpha) \big).
\end{align*}
We also assume that $\varepsilon'$ is small enough to guarantee $1-\alpha+\varepsilon' < \alpha - \varepsilon'$.
Since $f$ is concave and monotonically increasing in $[0,y^\star]$, we obtain $f(1-\alpha+\varepsilon') - f(1-\alpha) \geq f(\alpha) - f(\alpha - \varepsilon')$.
Together with the assumption $a(J(1-\alpha)) > a(J(\alpha))$, this shows that the objective value of $\hat{x}'$ is strictly larger than that of $\hat{x}$, a contradiction to the optimality of $\hat{x}$ (see \cref{fig_concave_charging} for an illustration).
\end{proof}

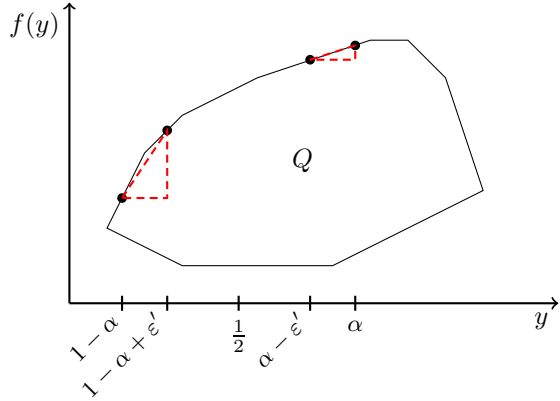
\begin{figure}[htb]
  \centering
  \begin{tikzpicture}
    \draw[thick,->] (-1.5,-0.5) -- (5,-0.5) node[anchor=north east] {$y$};
    \draw[thick,->] (-1.5,-0.5) -- (-1.5,3.5) node[anchor=north east] {$f(y)$};

    \node at (1.6,1.4) {$Q$};

    \draw[thick] (-0.8,-0.4) -- +(0.0,-0.2) node[rotate=45,anchor=east] {\small{$1-\alpha$}};
    \draw[thick] (-0.2,-0.4) -- +(0.0,-0.2) node[rotate=45,anchor=east] {\small{$1-\alpha+\varepsilon'$}};
    \draw[thick] (1.7,-0.4) -- +(0.0,-0.2) node[rotate=45,anchor=east] {\small{$\alpha-\varepsilon'$}};
    \draw[thick] (2.3,-0.4) -- +(0.0,-0.2) node[anchor=north] {\small{$\alpha$}};
    \draw[thick] (0.75,-0.4) -- +(0.0,-0.2) node[anchor=north] {$\tfrac{1}{2}$};

    \draw (0,0) -- (-1,0.5) -- (-0.5,1.5) -- (0,2) -- (1,2.5) -- (2.5,3) -- (3,3) -- (3.5,2.5) -- (4,1) -- (2,0) -- cycle;

    \node[draw=black,fill=black,thick,circle,inner sep=0mm,minimum size=1mm] at (-0.8,0.9) {};
    \node[draw=black,fill=black,thick,circle,inner sep=0mm,minimum size=1mm] at (-0.2,1.8) {};
    \node[draw=black,fill=black,thick,circle,inner sep=0mm,minimum size=1mm] at (1.7,2.74) {};
    \node[draw=black,fill=black,thick,circle,inner sep=0mm,minimum size=1mm] at (2.3,2.93) {};

    \draw[red,thick,densely dashed] (-0.8,0.9) -- (-0.2,1.8) -- (-0.2,0.9) -- cycle;
    \draw[red,thick,densely dashed] (1.7,2.74) -- (2.3,2.93) -- (2.3,2.74) -- cycle;
  \end{tikzpicture}
  \caption{Illustration of modifications in the proof of \cref{thm_f_g_concave}.}
  \label{fig_concave_charging}
\end{figure}

\section{Necessity of Conditions}
\label{sec_necessity_conditions}

In this section we present examples of stable set formulations that have the persistency property and satisfy two of the three conditions~\eqref{eq_property_facet}--\eqref{eq_property_onesum}.
This shows that our result is strongest possible in the sense that the conditions are actually necessary.

\begin{proposition}
  The relaxation defined via 
  \begin{equation*}
    R^{\text{(A)}}(G) \coloneqq \bigsetdef{ x \in \Redgestab(G) }[ x(C) \leq 1.4 \text{ for each $3$-clique $C$ of $G$}  ]
  \end{equation*}
  satisfies Conditions~\eqref{eq_property_valid} and~\eqref{eq_property_onesum}, violates Condition~\eqref{eq_property_facet}, and has the persistency property.
\end{proposition}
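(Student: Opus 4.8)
The plan is to verify the four assertions separately, the first three quickly and the persistency property with more care. Write $R:=R^{\text{(A)}}$. Since $R(G)$ contains $\varepsilon\mathbbm{1}$ for small $\varepsilon>0$ it is full-dimensional, so its facet-defining inequalities are exactly its non-redundant defining inequalities: some of the bounds $0\le x_v\le1$, the edge inequalities, and the inequalities $x(C)\le\tfrac75$ over $3$-cliques $C$. For the failure of~\eqref{eq_property_facet}, I would take $G=K_3$: the inequality $x(V(K_3))\le\tfrac75$ is not redundant (for instance $\transpose{(\tfrac12,\tfrac12,\tfrac12)}\in\Redgestab(K_3)\setminus R(K_3)$), hence facet-defining for $R(K_3)$, yet it is not facet-defining for $\Pstab(K_3)$, whose only non-bound facet is $x(V(K_3))\le1$. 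For~\eqref{eq_property_valid}, each of the defining inequalities above, viewed on its support graph $G[U]$, is again a defining inequality of $R(G[U])$ and hence valid there. For~\eqref{eq_property_onesum}, the identified node of $G_1\onesum{v_1}{v_2}G_2$ is a cut node, so every edge and every triangle lies entirely in $G_1$ or entirely in $G_2$; thus for $x\in R(G_1\onesum{v_1}{v_2}G_2)$ the restrictions $x|_{V(G_1)}\in R(G_1)$ and $x|_{V(G_2)}\in R(G_2)$ agree on the shared coordinate, which exhibits $x$ as an element of $R(G_1)\onesum{v_1}{v_2}R(G_2)$.

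For persistency, fix $c\in\R^{V(G)}$ and a $c$-maximal point $x^\star\in R(G)$, and set $V_1:=\setdef{v}[x^\star_v=1]$, $V_0:=\setdef{v}[x^\star_v=0]$, $V_f:=V(G)\setminus(V_0\cup V_1)$. Since $R$ is a formulation, a maximum-weight stable set of $G$ is precisely a $c$-maximal integer point of $R(G)$, so it suffices to produce such a stable set $S^\star$ with $V_1\subseteq S^\star\subseteq V(G)\setminus V_0$. Two observations set up the argument: (i) from the edge inequalities, $V_1$ is stable, $N(V_1)\subseteq V_0$, and $V_1$ has no neighbour in $V_f$; (ii) if a clique inequality $x(C)\le\tfrac75$ is tight at $x^\star$, then combining it with the edge inequalities on $C$ shows each coordinate of $x^\star$ on $C$ lies in $[\tfrac25,\tfrac35]$, so no tight clique inequality involves a node of $V_0\cup V_1$ --- this is the only place the value $\tfrac75>1$ is used.

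Next I would pick an optimal dual solution $(y,z,w,s)$ of $\max\setdef{\transpose{c}x}[x\in R(G)]$, with $y,z,w,s$ dual to the edge inequalities, the clique inequalities, the upper bounds, and the lower bounds respectively. Complementary slackness, together with~(ii), gives $z_C=0$ whenever $C$ meets $V_0\cup V_1$; also $w_v=0$ for $v\in V_0$, $s_v=0$ for $v\in V_1$, and each edge $e$ with $y_e>0$ is tight at $x^\star$. Substituting this into dual feasibility yields $\sum_{e\ni u}y_e\le c_u$ for $u\in V_1$, $c_v\le\sum_{e\ni v}y_e$ for $v\in V_0$, and (by tightness) that $y_{\{v,u\}}>0$ with $v\in V_0$ forces $u\in V_1$. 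The key estimate is then that \emph{every} stable set $S$ of $G$ satisfies
\[
  c(S)\;\le\;\sum_{u\in V_1}c_u+\mu,\qquad\text{where }\mu:=\max\setdef{c(T)}[T\subseteq V_f\text{ stable}].
\]
To see this, split $c(S)=c(S\cap V_0)+c(S\cap V_1)+c(S\cap V_f)$: the last two summands are at most $\sum_{u\in V_1\cap S}c_u$ and $\mu$; and $c(S\cap V_0)\le\sum_{v\in S\cap V_0}\sum_{e\ni v}y_e=\sum_{u\in V_1}\sum_{v\in S\cap V_0\cap N(u)}y_{\{v,u\}}$, where the inner sum is empty whenever $u\in S$ since $S$ is stable, so that $c(S\cap V_0)\le\sum_{u\in V_1\setminus S}\sum_{e\ni u}y_e\le\sum_{u\in V_1\setminus S}c_u$; adding the three bounds gives the estimate. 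Conversely, if $S_f$ is a maximum-weight stable set of $G[V_f]$, then by~(i) the set $S^\star:=S_f\cup V_1$ is a stable set of $G$ of weight exactly $\sum_{u\in V_1}c_u+\mu$, hence maximum-weight; and since $S^\star\supseteq V_1$ and $S^\star\cap V_0=\varnothing$, the vector $\mathbbm{1}_{S^\star}$ agrees with $x^\star$ on $V_0\cup V_1$, which is exactly the set of integral coordinates of $x^\star$. This establishes persistency.

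The first three assertions become routine once full-dimensionality and the cut-node observation are recorded. The substance is the persistency proof, and the step I expect to require the most care is the estimate $c(S)\le\sum_{V_1}c_u+\mu$: its content is that the clique multipliers $z_C$ are supported only on cliques disjoint from $V_0\cup V_1$, so near the integral coordinates the dual of the $R(G)$-linear program degenerates to an ordinary edge-multiplier certificate --- exactly what is needed to charge a stable set's use of $V_0$-nodes against the $V_1$-nodes it omits.
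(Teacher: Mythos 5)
Your proof is correct. The three condition checks match the paper's (which treats them as routine), and the crucial observation for persistency is the same in both arguments: a triangle inequality $x(C)\le 1.4$ that is tight at $x^\star$ forces every coordinate on $C$ into $[\tfrac25,\tfrac35]$, so tight clique constraints live entirely inside the fractional part $V_f$ and the integral coordinates see only edge constraints. Where you diverge is in how you then certify optimality of $S^\star=V_1\cup S_f$. The paper argues primally: it discards the non-tight constraints, observes that the remaining system decouples over $U_0\cup U_1$ and $W$, concludes that $\hat x|_{U_0\cup U_1}$ is LP-optimal on $G[U_0\cup U_1]$, and then performs an exchange $S\mapsto (S\setminus U_0)\cup U_1$ on an arbitrary maximum stable set. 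You instead extract an explicit optimal dual $(y,z,w,s)$, use complementary slackness to kill the clique multipliers near $V_0\cup V_1$, and charge the weight a stable set collects on $V_0$ against the $V_1$-nodes it omits, yielding the bound $c(S)\le c(V_1)+\mu$ for \emph{all} stable sets at once. Your route is somewhat longer but more self-contained and quantitative (it produces an exact optimal value $c(V_1)+\mu$ and in effect re-proves the Nemhauser--Trotter mechanism from scratch), whereas the paper's is shorter by leaning on the decomposition of an LP at a non-tight constraint set. Both are valid; nothing is missing from yours.
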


\begin{proof}
  It is easy to see that $R^{\text{(A)}}$ satisfies Conditions~\eqref{eq_property_valid} and~\eqref{eq_property_onesum}.
  It violates Property~\eqref{eq_property_facet} since inequality $x(V(C)) \leq 1.4$ is not facet-defining for $\Pstab(C)$, where $C$ is a $3$-clique.
  It remains to establish the persistency property for $R^{\text{(A)}}$ for which we adapt a persistency proof for $\Redgestab$.

  Let $G \in \graphs$, let $c \in \R^{V(G)}$ be an objective vector and let $\hat{x} \in R^{\text{(A)}}$ be a $c$-maximal point in the relaxation.
  Let $U_0 \coloneqq \{ v \in V(G) \mid \hat{x}_v = 0 \}$, $U_1 \coloneqq \{ v \in V(G) \mid \hat{x}_v = 1 \}$ and $W \coloneqq V(G) \setminus (U_0 \cup U_1)$.
  We have to show that there exists a $c$-maximal stable set in $G$ containing the nodes $U_1$ and not containing nodes $U_0$.

  First observe that each node $u \in U_1$ only has neighbors in $U_0$ since otherwise an edge inequality would be violated.
  This implies that each clique constraint with support in $U_0 \cup U_1$ and in $W$ has no support in $U_1$ and we claim that no such constraint is tight.
  For a $2$-clique $\{u,w\} \in E(G)$ with $u \in U_0$ and $w \in W$, we have $\hat{x}_u = 0$ and $0 < \hat{x}_w < 1$.
  Hence, the edge inequality $x_u + x_w \leq 1$ is strictly satisfied.
  For a $3$-clique $C$ with nodes in $U_0$ and in $W$ we have $\hat{x}(V(C)) < 1.4$ since at least one node belongs to $U_0$ and the sum of values of the nodes in $W$ can be at most $1$.
  For any LP, the removal of constraints that are not tight does not change optimality.
  We conclude that $\hat{x}|_{U_0 \cup U_1}$ is an optimum LP solution on the subgraph $G[U_0 \cup U_1]$.

  Let $S$ be a $c$-maximal stable set of $G$ and consider the set $S' \coloneqq (S \setminus U_0) \cup U_1$.
  Since the neighborhood of $U_1$ is contained in $U_0$, $S'$ is a stable set.
  Moreover, by the local optimality of $\hat{x}$, the $c$-value of $S'$ on $U_0 \cup U_1$ is not smaller than that of $S$.
  This shows that also $S'$ is $c$-maximal.
  Hence, $R^{\text{(A)}}$ has the persistency property.
\end{proof}

\begin{proposition}
  The relaxation defined via 
  \begin{equation*}
    R^{\text{(B)}}(G) \coloneqq \begin{cases}
                            \Redgestab(G) & \text{if } G = K_3, \\
                            \Pstab(G) & \text{otherwise}
                         \end{cases} \qquad \text{ for each } G \in \graphs
  \end{equation*}
  satisfies Conditions~\eqref{eq_property_facet} and~\eqref{eq_property_onesum}, violates Condition~\eqref{eq_property_valid}, and has the persistency property.
\end{proposition}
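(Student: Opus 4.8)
The plan is to check the four assertions separately, using throughout that $R^{\text{(B)}}$ coincides with $\Pstab$ on every graph other than the single graph $K_3$. Condition~\eqref{eq_property_facet} and the persistency property come for free. For~\eqref{eq_property_facet}: a facet-defining inequality of $R^{\text{(B)}}(G)$ with support $U$ is, when $G = K_3$, a nonnegativity or an edge inequality (and an edge inequality has support graph $K_2$ and is facet-defining for $\Pstab(K_2)$), and when $G \ne K_3$ it is a facet of $R^{\text{(B)}}(G) = \Pstab(G)$, hence facet-defining for $\Pstab(G[U])$ since $\Pstab$ satisfies~\eqref{eq_property_facet} by \cref{thm_summary_conditions}~\ref{thm_summary_conditions_edgestab_stab}. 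For persistency: for every $G$ the polytope $R^{\text{(B)}}(G)$ is either $\Redgestab(K_3)$, which has the persistency property by \cref{thm_persistency}, or the integral polytope $\Pstab(G)$, which therefore has it (the integral-polytope argument used for the sufficiency direction of \cref{thm_main}).

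For Condition~\eqref{eq_property_onesum}, fix $G_1, G_2$, nodes $v_1, v_2$, and set $G \coloneqq G_1 \onesum{v_1}{v_2} G_2$. I would first observe that $G$ can equal $K_3$ only degenerately: $G$ has $|V(G_1)| + |V(G_2)| - 1$ nodes and $|E(G_1)| + |E(G_2)|$ edges, and since $K_3$ has three nodes and three edges whereas any graph on at most two nodes has at most one edge, one of $G_1, G_2$ must be a single node and the other a copy of $K_3$; in that case both sides of~\eqref{eq_property_onesum} equal $\Redgestab(K_3)$. Otherwise $G \ne K_3$, so $R^{\text{(B)}}(G) = \Pstab(G)$; since $\Pstab$ satisfies~\eqref{eq_property_onesum} and, by \cref{thm_summary_conditions}~\ref{thm_summary_conditions_reverse}, also the reverse inclusion, we have $\Pstab(G) = \Pstab(G_1) \onesum{v_1}{v_2} \Pstab(G_2)$. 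As $R^{\text{(B)}}(G_i) \supseteq \Pstab(G_i)$ for $i = 1, 2$ and the $1$-sum of polytopes is monotone under inclusion, I conclude $R^{\text{(B)}}(G) = \Pstab(G_1) \onesum{v_1}{v_2} \Pstab(G_2) \subseteq R^{\text{(B)}}(G_1) \onesum{v_1}{v_2} R^{\text{(B)}}(G_2)$, which is~\eqref{eq_property_onesum}.

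It remains to give a witness that Condition~\eqref{eq_property_valid} fails. I would take $G$ to be the disjoint union of $K_3$ with an isolated node, so that $G \ne K_3$ and $R^{\text{(B)}}(G) = \Pstab(G)$. The triangle $C$ of $G$ is a maximal clique, so the clique inequality $x(V(C)) \le 1$ is facet-defining for $\Pstab(G)$ (see~\cite{Padberg73}); its support is $U = V(C)$ and $G[U] = K_3$, hence $R^{\text{(B)}}(G[U]) = \Redgestab(K_3)$, for which $x(V(C)) \le 1$ is violated at $\transpose{(\tfrac12, \tfrac12, \tfrac12)}$, so~\eqref{eq_property_valid} indeed fails. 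The only non-routine point is the $1$-sum step of the second paragraph: one has to rule out a non-trivial $1$-sum decomposition of $K_3$ and then combine inclusion-monotonicity of the $1$-sum with the equality form of~\eqref{eq_property_onesum} for $\Pstab$; the rest is bookkeeping around the two-piece definition of $R^{\text{(B)}}$. It is worth noting in passing that $R^{\text{(B)}} \not\equiv \Redgestab$ (they differ on $K_4$) and $R^{\text{(B)}} \not\equiv \Pstab$ (they differ on $K_3$), so the failure of~\eqref{eq_property_valid} is precisely what prevents a contradiction with \cref{thm_main}.
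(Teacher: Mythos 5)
Your proof is correct and follows essentially the same route as the paper's: a case analysis on the two-piece definition of $R^{\text{(B)}}$, a graph containing $K_3$ as a proper induced subgraph to refute Condition~\eqref{eq_property_valid}, and \cref{thm_persistency} plus integrality of $\Pstab$ for persistency. The only differences are cosmetic --- you take $K_3$ plus an isolated node rather than the paper's $K_2 \onesum{}{} K_3$ as the witness against~\eqref{eq_property_valid}, and you explicitly dispose of the degenerate $1$-sum in which one summand is a single node, a case the paper's restriction to ``proper'' $1$-sums leaves implicit.
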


\begin{proof}
  It is easy to see that $R^{\text{(B)}}$ satisfies Conditions~\eqref{eq_property_facet}.
  It violates Property~\eqref{eq_property_valid} since for the graph $G = K_2 \onesum{}{} K_3$, where the $1$-sum is taken at an arbitrary vertex of $K_2$ and an arbitrary vertex of $K_3$,  the clique inequality for the $3$-clique is facet-defining for $\Pstab(G)$, whereas the same inequality is not valid for $R^{\text{(B)}}(K_3)$.
  Since $K_3$ is no proper $1$-sum (i.e., a $1$-sum of two graphs each having at least 2 nodes), we have that for every $G \in \graphs$ that is the proper $1$-sum of two graphs, $R^{\text{(B)}}(G) = \Pstab(G)$ is satisfied, which establishes Property~\eqref{eq_property_onesum}.
  $R^{\text{(B)}}(K_3)$ has the persistency property by \cref{thm_persistency}, and for all other graphs $G$, $R^{\text{(B)}}(G)$ has the persistency property by definition.
\end{proof}

\begin{proposition}
  The relaxation defined via 
  \begin{equation*}
    R^{\text{(C)}}(G) \coloneqq \begin{cases}
                            \Pstab(G) & \text{if } G = K_3, \\
                            \Redgestab(G) & \text{otherwise}
                         \end{cases} \qquad \text{ for each } G \in \graphs
  \end{equation*}
  satisfies Conditions~\eqref{eq_property_facet} and~\eqref{eq_property_valid}, violates Condition~\eqref{eq_property_onesum}, and has the persistency property.
\end{proposition}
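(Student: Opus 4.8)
The plan is to verify the four assertions about $R^{\text{(C)}}$ in order, reusing the reasoning of the two preceding propositions wherever possible.

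For Conditions~\eqref{eq_property_facet} and~\eqref{eq_property_valid}, the first thing I would record is that every facet-defining inequality of $\Redgestab(G)$ is (a positive multiple of) a nonnegativity inequality $x_v \geq 0$, an upper bound $x_v \leq 1$, or an edge inequality $x_u + x_v \leq 1$, so its support $U$ has at most two vertices and in particular $G[U] \neq K_3$. Hence, for every graph $G \neq K_3$ we have $R^{\text{(C)}}(G) = \Redgestab(G)$ and also $R^{\text{(C)}}(G[U]) = \Redgestab(G[U])$ whenever $U$ is the support of a facet-defining inequality of $R^{\text{(C)}}(G)$; both conditions then follow from the fact that $\Redgestab$ satisfies~\eqref{eq_property_facet} and~\eqref{eq_property_valid} (\cref{thm_summary_conditions}~\ref{thm_summary_conditions_edgestab_stab}). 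For $G = K_3$ I would just enumerate the facets of $\Pstab(K_3) = R^{\text{(C)}}(K_3)$: the three nonnegativity inequalities, each with a single-vertex support $U$ so that $G[U] = K_1$, where nonnegativity is facet-defining for $\Pstab(K_1)$ and valid for $R^{\text{(C)}}(K_1) = \Redgestab(K_1)$; and the triangle inequality $x_1 + x_2 + x_3 \leq 1$, with support $U = V(K_3)$ so that $G[U] = K_3$, which is facet-defining for $\Pstab(K_3)$ and valid for $R^{\text{(C)}}(K_3) = \Pstab(K_3)$. This establishes~\eqref{eq_property_facet} and~\eqref{eq_property_valid}.

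To show that Condition~\eqref{eq_property_onesum} fails, I would exhibit one explicit counterexample in the spirit of the previous proof. Let $H = K_3 \onesum{}{} K_2$, where the $1$-sum identifies a vertex of $K_3$ with a vertex of $K_2$; write $\{1,2,3\}$ for the triangle, $3$ for the identified vertex and $4$ for the remaining vertex, so that $H$ is a triangle with a pendant edge and $|V(H)| = 4$. Then $R^{\text{(C)}}(H) = \Redgestab(H)$, whereas $R^{\text{(C)}}(K_3) \onesum{}{} R^{\text{(C)}}(K_2) = \Pstab(K_3) \onesum{}{} \Redgestab(K_2)$, the $1$-sum of polytopes being taken at the coordinates of the identified vertices. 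Since the $1$-sum of two polytopes merely glues them along the shared coordinate, every inequality valid for $\Pstab(K_3)$ remains valid for this $1$-sum; in particular $x_1 + x_2 + x_3 \leq 1$ is. But the point $(\tfrac12, \tfrac12, \tfrac12, 0)$ lies in $\Redgestab(H) = R^{\text{(C)}}(H)$ and violates $x_1 + x_2 + x_3 \leq 1$, hence is not contained in $R^{\text{(C)}}(K_3) \onesum{}{} R^{\text{(C)}}(K_2)$. Therefore $R^{\text{(C)}}(H) \not\subseteq R^{\text{(C)}}(K_3) \onesum{}{} R^{\text{(C)}}(K_2)$, so~\eqref{eq_property_onesum} is violated.

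Finally, for the persistency property I would split on $G$ once more. For every $G \neq K_3$, $R^{\text{(C)}}(G) = \Redgestab(G)$ has the persistency property by \cref{thm_persistency}. And $R^{\text{(C)}}(K_3) = \Pstab(K_3)$ has it because it is an integral polytope: given an objective vector $c$ and a $c$-maximal point $x$, the set of points of $\opt{\Pstab(K_3)}{c}$ obtained by additionally fixing every coordinate $i$ with $x_i \in \{0,1\}$ to its value $x_i$ is a nonempty face of $\Pstab(K_3)$, hence an integral polytope, and any of its vertices is a $c$-maximal $0/1$-point that agrees with $x$ on those coordinates. I expect no genuine difficulty in any of this; the one step I would be most careful about is the opening remark that facets of $\Redgestab(G)$ have support of size at most two, since this is exactly what keeps the $K_3$-exception of $R^{\text{(C)}}$ invisible to~\eqref{eq_property_facet} and~\eqref{eq_property_valid} while still letting it break~\eqref{eq_property_onesum}, making $R^{\text{(C)}}$ a valid witness for the necessity of the $1$-sum condition.
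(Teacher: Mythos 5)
Your proposal is correct and follows the same route as the paper: Conditions (A) and (B) hold because facets of the edge relaxation have support on at most two vertices (so the $K_3$ exception is never triggered), Condition (C) fails on $K_2 \oplus K_3$ via the triangle inequality, and persistency follows from Nemhauser--Trotter for $G \neq K_3$ and from integrality of $\Pstab(K_3)$ otherwise. The paper states most of this as ``easy to see''; your version simply supplies the details, including an explicit violating point $(\tfrac12,\tfrac12,\tfrac12,0)$ for the failure of Condition (C).
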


\begin{proof}
  It is easy to see that $R^{\text{(C)}}$ satisfies Conditions~\eqref{eq_property_facet} and~\eqref{eq_property_valid}.
  It violates Property~\eqref{eq_property_onesum} since for the graph $G = K_2 \onesum{}{} K_3$, the clique inequality for the $3$-clique is part of $R^{\text{(C)}}(K_3)$ but not valid for $R^{\text{(C)}}(G)$ since the formulation for $G$ only consists of edge- and nonnegativity constraints.
  By Proposition~\ref{thm_persistency}, $R^{\text{(C)}}$ has the persistency property.
\end{proof}

\section{Concluding remarks}
\label{sec_conclusion}

We have shown that persistency is an exceptional property for linear programming stable set relaxations.
Apart from studying nonlinear relaxations (such as those stemming from semidefinite relaxations), it is natural to ask whether this is also the case for other polytopes for which persistency was established.

The most interesting candidate is certainly the unconstrained quadratic binary programming problem, which is equivalent to the maximum cut problem.
The standard \emph{McCormick relaxation} also has the (weak and strong) persistency property~\cite{HammerHS84} with respect to the original variables (not corresponding to linearized products).
In fact, there is a strong relationship to the stable set problem as both problems can be easily reduced to each other.
Polyhedrally speaking, each polytope (relaxation or integer hull) can be obtained as a face of the polytope of the other problem, potentially after removing constraints that are redundant for a given objective vector~\cite{HammerHS84}.
Although this was used to show that the McCormick relaxation partially has the persistency property, the non-existence of the property for tighter relaxations does not carry over in a straight-forward manner.
Thus, we leave the resolution of this question as an open problem.

\paragraph{Acknowledgements.}
We are grateful to four anonymous reviewers whose comments led to improvements of this manuscript.
We also want to thank Yuri Faenza for asking for necessity of the properties~\eqref{eq_property_facet}--\eqref{eq_property_onesum}.

\bibliographystyle{plain}
\bibliography{persistency}

\appendix

\section{Deferred proofs}
\label{appendix_proofs}

We repeat the statements of \cref{thm_face_vertex} and \cref{thm_polytope_value_function} and provide their proofs.

\rsfacevertex*
\begin{proof}
  Let $c' \in \R^n$ be such that $\dim(\opt{Q}{c'}) < \dim(\opt{P}{c'})$ holds, and among those, such that $\dim(\opt{Q}{c'})$ is minimum.
  Clearly, $c'$ is well-defined since $c' \coloneqq c$ satisfies the conditions.

  Assume, for the sake of contradiction, that $\dim(\opt{Q}{c'}) > 0$.
  Let $F \coloneqq \opt{P}{c'}$ and $G \coloneqq \opt{Q}{c'}$.
  Let $F_1, F_2, \dotsc, F_k$ be the facets of $F$.
  By $n(F,F_i)$ we denote the set of vectors $w \in \R^n$ such that $\opt{F}{w} \supseteq F_i$.
  Since $F$ is a polytope, $\bigcup_{i \in \{1,2,\dotsc,k\}} n(F,F_i)$ contains a basis $U$ of $\R^n$.
  Moreover, not all vectors $u \in U$ can lie in $\aff(G)^{\perp}$, the orthogonal complement of $\aff(G)$, since then $\aff(G)^{\perp} = \R^n$ would hold, contradicting $\dim(G) > 0$.
  Let $u \in U \setminus \aff(G)^{\perp}$.

  Now, for a sufficiently small $\varepsilon > 0$, $\opt{P}{c' + \varepsilon u} \supseteq F_i$ for some $i \in \{1,2,\dotsc,k\}$, and $\opt{Q}{c' + \varepsilon u}$ is a proper face of $G$.
  Thus, $c' + \varepsilon u$ satisfies the requirements at the beginning of the proof.
  However, $\dim(\opt{Q}{c' + \varepsilon u}) < \dim(G)$ contradicts the minimality assumption, which concludes the proof.
\end{proof}

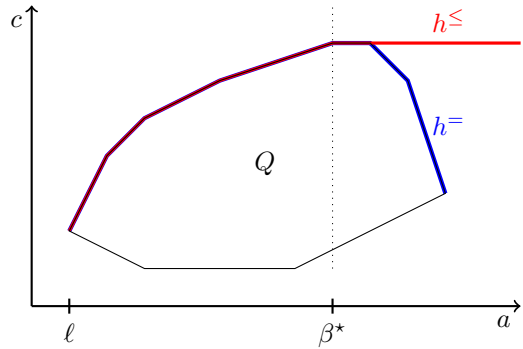
\begin{figure}[htb]
  \centering
  \begin{tikzpicture}
    \draw[thick,->] (-1.5,-0.5) -- (5,-0.5) node[anchor=north east] {$a$};
    \draw[thick,->] (-1.5,-0.5) -- (-1.5,3.5) node[anchor=north east] {$c$};

    \node at (1.6,1.4) {$Q$};

    \draw[thick] (-1,-0.4) -- (-1,-0.6) node[anchor=north] {$\ell$};
    \draw[thick] (2.5,-0.4) -- (2.5,-0.6) node[anchor=north] {$\beta^\star$};

    \draw[blue, ultra thick] (-1,0.5) -- (-0.5,1.5) -- (0,2) -- (1,2.5) -- (2.5,3) -- (3,3) -- (3.5,2.5) -- (4,1);
    \draw[red, very thick] (-1,0.5) -- (-0.5,1.5) -- (0,2) -- (1,2.5) -- (2.5,3) -- (5,3);
    \draw (0,0) -- (-1,0.5) -- (-0.5,1.5) -- (0,2) -- (1,2.5) -- (2.5,3) -- (3,3) -- (3.5,2.5) -- (4,1) -- (2,0) -- cycle;

    \draw[dotted] (2.5,0) -- (2.5,3.5);

    \node[red,anchor=west] at (3.7,3.3) {$h^{\leq}$};
    \node[blue,anchor=west] at (3.7,1.9) {$h^=$};
  \end{tikzpicture}
  \caption{Illustration of \cref{thm_polytope_value_function}. The graph of $h^{\leq}$ is highlighted in red, while that of $h^=$ is highlighted in blue.}
  \label{fig_polytope_value_function}
\end{figure}

\rspolytopevaluefunction*
\begin{proof}
  Let $Q \coloneqq \setdef{ \left(\begin{smallmatrix} y_1 \\ y_2 \end{smallmatrix}\right) }[ \exists x \in P :  \transpose{a}x = y_1,~ \transpose{c}x = y_2 ] \subseteq \R^2$ be the projection of $P$ along $a$ and $c$.
  By construction, $h^{\leq}(\beta) = \max \setdef{ y_2 }[ y \in Q,~ y_1 \leq \beta ]$ holds.
  Considering that $Q$ is a polytope of dimension at most $2$, the claimed properties of $h^{\leq}$ and $h^=$ are obvious (see \cref{fig_polytope_value_function}).
\end{proof}

\end{document}